\newcommand{\longv}[1]{}
\newcommand{\G}{{\mathcal{G}}}
\newcommand{\C}{{\mathcal{C}}}
\newcommand{\R}{{\mathcal{R}}}
\newcommand{\GG}{{G}}
\newcommand{\Pol}{\mbox{\rm Pol}}
\newcommand{\core}[1]{\mathscr{C}(#1)}
\newcommand{\bargset}[1]{\mathscr{B}(#1)}
\newcommand{\kernel}[1]{\mathscr{K}(#1)}
\newcommand{\qtugameKernel}{\mathit{\bs{K}}}
\newcommand{\K}{{_{\bf K}}}
\newcommand{\qtugameBargSet}{\mathit{\bs{BS}}}
\newcommand{\BS}{{_{\bf BS}}}
\newcommand{\bs}[1]{\boldsymbol{#1}}
\newcommand{\tuple}[1]{\langle #1 \rangle}
\newcommand{\tugame}{\tuple{N,\linebreak[0]v}}
\renewcommand{\Re}{\mathbb{R}}
\renewcommand{\emptyset}{\varnothing}
\newcommand{\game}{\mathcal{G}}
\newcommand{\nbdash}{\nobreakdash-\hspace{0pt}}
\newcommand{\PP}{\mbox{\textbf{P}}}
\newcommand{\NP}{\mbox{\textbf{NP}}}
\newcommand{\CONP}{\mbox{\textrm{co\nobreakdash-}\NP}}
\newcommand{\CONPh}{\CONP\nobreakdash-hard}
\newcommand{\CONPc}{\CONP\nobreakdash-com\-plete}
\newcommand{\SigmaP}[1]{\mbox{$\boldsymbol{{\Sigma}_{#1}^{P}}$}}
\newcommand{\PiP}[1]{\mbox{$\boldsymbol{{\Pi}_{#1}^{P}}$}}
\newcommand{\PiPh}[1]{\PiP#1\nobreakdash-hard}
\newcommand{\PiPc}[1]{\PiP#1\nobreakdash-com\-plete}
\newcommand{\DeltaP}[1]{\mbox{$\boldsymbol{{\Delta}_{#1}^{P}}$}}
\newcommand{\DeltaPh}[1]{\DeltaP#1\nobreakdash-hard}
\newcommand{\DeltaPc}[1]{\DeltaP#1\nobreakdash-com\-plete}
\newcommand{\FP}{\mbox{\textbf{FP}}}
\newcommand{\NPSV}{\mbox{\textbf{NPSV}}}
\newcommand{\NPMV}{\mbox{\textbf{NPMV}}}
\newcommand{\CoreMembership}{{\textsc{Core}{-}\textsc{Check}}}
\newcommand{\CoreNonEmptyness}{{\textsc{Core}{-}\textsc{NonEmptiness}}}
\newcommand{\KernelMembership}{{\textsc{Kernel}{-}\textsc{Check}}}
\newcommand{\BargainingSetMembership}{{\textsc{BargainingSet}{-}\textsc{Check}}}
\newcommand{\graph}{\mathcal{GG}}
\newcommand{\marginal}{\mathcal{MCN}}
\theoremstyle{plain}
\newtheorem{theo}{Theorem}[section]
\newtheorem{lemma}[theo]{Lemma}
\newtheorem{prop}[theo]{Proposition}
\newtheorem{corol}[theo]{Corollary}
\newtheorem{fact}[theo]{Fact}
\theoremstyle{definition}
\newtheorem{defin}[theo]{Definition}
\newtheorem{example}[theo]{Example}
\title{On the Complexity of Core, Kernel, and Bargaining Set}
\author{Gianluigi Greco}
\affil{Dipartimento di Matematica,\\
Universit\`a della Calabria, I-87036 Rende(CS), Italy\\
\texttt{ggreco@mat.unical.it}}
\author{Enrico Malizia}
\author{Luigi Palopoli}
\author{Francesco Scarcello}
\affil{DEIS,\\
Universit\`a della Calabria, I-87036 Rende(CS), Italy\\
\texttt{\textbraceleft emalizia,palopoli,scarcello\textbraceright @deis.unical.it}}
\date{}
\begin{document}
\maketitle

\begin{abstract} Coalitional games are mathematical models suited to analyze scenarios where
players can collaborate by forming coalitions in order to obtain higher worths than by acting in isolation. A
fundamental problem for coalitional games is to single out the most desirable outcomes in terms of appropriate
notions of worth distributions, which are usually called {solution concepts}.
Motivated by the fact that decisions taken by realistic players cannot involve unbounded resources,
recent computer science literature reconsidered the definition of such concepts by advocating the relevance of
assessing the amount of resources needed for their computation in terms of their computational complexity.
By following this avenue of research, the paper provides a complete picture of the complexity issues arising with three prominent solution concepts for coalitional games with transferable utility,
namely, the core, the kernel, and the bargaining set,
whenever the game worth-function is represented in some reasonable compact form (otherwise, if the worths of all coalitions are explicitly listed, the input sizes are so large that complexity problems are---artificially---trivial).
The starting investigation point is the setting of graph games, about which various open
questions were stated in the literature. The paper gives an answer to these questions, and in addition provides
new insights on the setting, by characterizing the computational complexity of the three concepts in some
relevant generalizations and specializations.
\end{abstract}

\section{Introduction}\label{sec:intro}

Coalitional games have been introduced by \citet{vonNeumann_Morgenstern:GameTheory} to characterize stable
payoff distributions in scenarios where players can collaborate by forming coalitions in order to obtain higher
worths than by acting in isolation.
A \emph{coalitional game} (with transferable utility) can abstractly be modelled as a pair $\game=\tuple{N,v}$,
where $N$ is a finite set of players, and where $v$ is a function associating with each coalition $S\subseteq N$ a
certain worth $v(S)\in \Re$ that players in $S$ may obtain by collaborating with each other. The outcome of
$\game$ is an \emph{imputation}, i.e., a vector of payoffs ${(x_i)}_{i\in N}$ meant to specify the
distribution of the total worth $v(N)$ granted to each player in $N$. Imputations are required to be
\emph{efficient}, i.e., $\sum_{i\in N}x_i=v(N)$, and \emph{individually rational}, i.e., $x_i\geq v(\{i\})$, for
each $i\in N$. In the following, the set of all imputations of $\game$ is denoted by $X(\game)$.

It is easily seen that, for any given coalitional game $\game$, the set $X(\game)$ might even contain infinitely many
payoff vectors. In this context, a fundamental problem is to single out the most desirable
ones in terms of appropriate notions of worth distributions, which are usually called \emph{solution concepts}.
Traditionally, this question has been studied in economics and game theory in the light of providing arguments
and counterarguments about why such proposals are reasonable mathematical renderings of the intuitive concepts
of fairness and stability. For instance, well-known and widely-accepted solution concepts are the \emph{Shapely
value}, the \emph{core}, the \emph{kernel}, the \emph{bargaining set}, and the \emph{nucleolus} (see, e.g.,
\cite{Osborne_Rubinstein:GameTheory}). Each solution concept defines a set of outcomes that are referred to with
the name of the underlying concept---e.g., the ``core of a game'' is the set of those outcomes satisfying the
conditions associated with the concept of core.

More recently, \citet{Deng_Papadimitriou:ComplexityCooperativeGameSolutionConcepts}
reconsidered the definition of solution concepts for coalitional games from a computer science perspective, by
arguing that decisions taken by realistic agents cannot involve unbounded resources to support reasoning~\cite{Deng_Papadimitriou:ComplexityCooperativeGameSolutionConcepts,Simon:BoundedRationality}, and by suggesting
to formally capture the bounded rationality principle by assessing the amount of resources needed to compute
solution concepts in terms of their computational
complexity~\cite{Deng_Papadimitriou:ComplexityCooperativeGameSolutionConcepts,Kalai_Stanford:FiniteRationality}.
In particular, it has been pointed out that such questions related to the complexity of solution concepts are of particular interest whenever worth
functions are encoded in some succinct way, e.g., they are given in terms of polynomially computable functions~\cite{Deng_Papadimitriou:ComplexityCooperativeGameSolutionConcepts}.
Indeed, under the na\"{i}ve perspective of explicitly
listing all associations of coalitions with their worths, the exponential blow-up of the input representation (w.r.t.\ the number of involved players)
would hide the intrinsic complexity of the solutions concepts.
Coalitional games whose worth functions are encoded without necessarily listing all coalition worths are called \emph{compact games}. In fact, inspired by the approach
of \citet{Deng_Papadimitriou:ComplexityCooperativeGameSolutionConcepts}, the computational complexity of various solution concepts over different classes of compact games has intensively been studied in the last few years (see, e.g.,
\cite{Ieong_Shoham:MarginalContributionNets,Conitzer_Sandholm:CoreComplexity,Elkind_Goldberg_Wooldridge:Complexity_WTG,Prasad:VotingGames,Bilbao:CoopGamesCombStructure}).

In this paper, we continue along the line of research initiated by Deng and Papadimitriou, by analyzing the
computational complexity of three prominent solutions concepts (the core, the kernel, and the bargaining set)
over classes of compact games. Before illustrating our contribution, we next provide details on the classes of
games considered in the paper and overview the relevant related literature.

\subsection{Compactly Specified Games: Graph Games and Marginal Contribution Nets}

\medskip

\noindent \textbf{Graph Games.} The setting of \emph{graph games} is precisely the one analyzed by \citet{Deng_Papadimitriou:ComplexityCooperativeGameSolutionConcepts}. In graph games, worths for coalitions over a set $N$ of players are
defined based on a weighted undirected {graph} $\GG=\tuple{(N,E),w}$, whose nodes in $N$ correspond to the players, and where the list $w$
encodes the edge weighting function, so that $w(e)\in \mathbb{R}$ is the weight associated with the edge $e\in E$. Then, the worth of an
arbitrary coalition $S\subseteq N$ is defined as the sum of the weights associated with the edges contained in $S$, i.e., as the value
$v(S)=\sum_{e\in E \mid e\subseteq S} w(e)$.

\begin{example}\label{ex:graphGame}
Consider the graph game in Figure~\ref{fig:EsempioGraphGame} formed by the players in $\{a,b,c,d\}$.
\begin{figure}[t]
\centering
\includegraphics[width=0.27\textwidth]{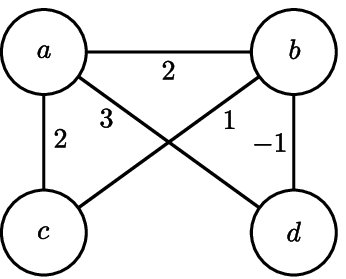}\vspace{-1mm}
\caption{The graph game in Example~\ref{ex:graphGame}.}\label{fig:EsempioGraphGame}\vspace{-3mm}
\end{figure}
There, the coalition $\{a,b\}$ gets a worth $v(\{a,b\})=2$, while the coalition $\{a,b,d\}$ gets a worth
$v(\{a,b,d\})=3+2-1=4$. Note that the graph encodes $2^4$ coalition worths, via 5 weights only. In
such a representation, $O(n^2)$ weights succinctly encode the $2^n$ coalition worths, where $n$ is the number of players.~\hfill~$\lhd$
\end{example}

Within the setting of graph games, \citet{Deng_Papadimitriou:ComplexityCooperativeGameSolutionConcepts} characterized the intrinsic complexity of
various tasks, mainly focusing on problems related to the core. For instance, they showed that checking whether
the core is non-empty
and that checking whether a payoff vector belongs to the core are $\CONP$-complete problems. Moreover, they
provided a polynomial-time computable closed-form characterization for the Shapely value, and showed that this
value coincides with the (pre)nucleolus.
And, finally, they completed the picture of the complexity issues arising with graph games by showing the
$\NP$-hardness of deciding whether a payoff vector belongs to the bargaining set and by \emph{conjecturing} the
following two results on graph games:

\begin{itemize}
  \item[\textbf{(C1)}] {\em Deciding whether a payoff vector belongs to the bargaining set is $\PiP{2}$-complete; and}

  \item[\textbf{(C2)}] {\em Deciding whether a payoff vector belongs to the kernel is $\NP$-hard.}
\end{itemize}

\medskip

\noindent \textbf{Marginal Contribution Nets.} A class of compact games that received considerable attention in
the last few years is that of (games encoded via) \emph{marginal contribution networks}, proposed by \citet{Ieong_Shoham:MarginalContributionNets}.
A marginal contribution network (short: MC\nbdash net) is constituted by a set of rules, each one having the
form: $\{pattern\}\to value$, where a $pattern$ is a conjunction that may include both positive and negative
literals, with each literal denoting a player, and $value$ is the additive contribution associated with this pattern. A rule is said to {apply} to a coalition $S$ if all the players whose literals occur positively in the
pattern belong to $S$, and all the players whose literals occur negatively in the pattern do not belong to $S$. When more than one rule applies to a coalition, the value for that coalition is given by the contribution
of all those rules, i.e., by the sum of their values. If no rule applies, then the value for the coalition is
set to zero, by default.

\begin{example}
Consider the following marginal contribution network inclunding the following three rules $\{a\land b\}\to 5$,
$\{b\}\to 2$, and  $\{a\land\lnot b\}\to3$, over the players in $\{a,b\}$.

The network defines a coalitional game such that:  $v(\{a\})=3$ (the third rule applies), $v(\{b\})=2$ (the
second rule applies), and $v(\{a,b\})=5+2=7$ (both the first and the second rules apply, but not the third one).
~\hfill~$\lhd$
\end{example}

The ``structure'' of player interactions in MC\nbdash nets is represented
 via their associated \emph{agent graphs}~\cite{Ieong_Shoham:MarginalContributionNets}. The agent graph
associated with an MC\nbdash net is the undirected graph whose nodes are the players of the game, and where for
each rule $\{pattern\}\to value$, the subgraph induced over the players occurring together in $pattern$ is a clique, weighted by $value$.

For any graph game $\GG=\tuple{(N,E),w}$, there is an equivalent MC\nbdash net representation having the same
size, and whose ``structure'' is preserved in that its associated agent graph coincides with $\GG$. Indeed,
given any game specified via the graph $\GG=\tuple{(N,E),w}$, we can just create the rule $\{i\wedge j\}\to
w(e)$, for each edge $e=\{i,j\}\in E$ (cf. \cite{Ieong_Shoham:MarginalContributionNets}).
However, the converse is not true, since MC\nbdash nets allow to express any arbitrary coalitional game, while
graph games are not fully expressive. For instance, graph games cannot model a scenario where a group of agents
$S$ has value of 1 if, and only if, $|S|> |N|/2$ (see \cite{Ieong_Shoham:MarginalContributionNets}, for details
on the expressiveness of the frameworks).

In the light of the above observations, hardness results for graph games immediately hold over marginal
contribution networks. For instance, checking whether a payoff vector is in the core and checking whether the
core is non-empty are $\CONP$-hard problems on MC\nbdash nets. Symmetrically, membership results for MC\nbdash
nets also hold for graph games, even when they are established over some ``structurally'' restricted classes of games
(because the structure is preserved). For instance, on marginal contribution networks associated
with \emph{acyclic} agent graphs or, more generally, with agent graphs having \emph{bounded treewidth}~\cite{RS84}, \citet{Ieong_Shoham:MarginalContributionNets} showed that deciding whether a payoff vector is in the core and deciding the non-emptiness of the core
are feasible in polynomial time; therefore, these feasibility results immediately apply to graph games.

Moreover, \citet{Ieong_Shoham:MarginalContributionNets} proved that checking whether a payoff vector is in the core of a game encoded via
marginal contribution networks is in $\CONP$ (as for the class of graph games). However, they left as an
intriguing open problem the following (which, in fact, does not follow from the corresponding result on graph
games):

\begin{itemize}
  \item[\textbf{(O1)}] {\em Is the problem of deciding core non-emptiness over MC\nbdash nets feasible in $\CONP$?}
\end{itemize}

Indeed, they observed that the ``obvious'' certificate of non-emptiness of the core based on the
Bondareva-Shapley theorem is exponential in size. Thus, it cannot help deriving the corresponding membership in
$\CONP$. And, in fact, different technical machineries have to be used for facing the question.

\subsection{Contributions}

In this paper, we analyze the computational complexity of the core, the kernel, and the bargaining set over
graph games and marginal contribution networks. In particular, we show that conjectures \textbf{(C1)} and
\textbf{(C2)} by \citeauthor{Deng_Papadimitriou:ComplexityCooperativeGameSolutionConcepts} are correct, and we provide a positive answer to question
\textbf{(O1)} by \citeauthor{Ieong_Shoham:MarginalContributionNets}. In detail , as our main technical contributions, we show that:

\begin{itemize}
\item[$\blacktriangleright$] On graph games, checking whether a payoff vector is in the kernel is $\NP$-hard,
and actually $\DeltaP{2}$-complete;

\item[$\blacktriangleright$] On graph games, checking whether a payoff vector is in the bargaining set  is
$\PiP{2}$-complete; and,

\item[$\blacktriangleright$] On marginal contribution networks, checking whether the core is non-empty is
feasible in $\CONP$.
\end{itemize}

These main achievements come, however, not alone in the research reported in the paper. Indeed, we go beyond and
study the computational issues arising in relevant \emph{generalizations} and \emph{specializations} of the
setting of graph games.

\begin{description}
\item[Generalizations:] We consider an abstraction of compact coalitional games that is based on assuming that
the worth function is provided as an \emph{oracle} operating over some given structure encoding the game.
In particular, we analyze the cases where each oracle call requires deterministic and non-deterministic
polynomial time (w.r.t.\ the size of the game), respectively; that is, the cases where the oracle encodes a function in $\FP$ or in $\NPSV$, respectively. As an example, graph games and marginal contribution
networks are just two instances of the former setting. Instead, $\NPSV$ oracles are much more powerful, since
they can be used to encode games where coalition worths are the result of complex algorithmic procedures, such
as scheduling, planning or routing activities.

Within this abstract setting, we show that nothing has to be paid for the succinctness of the specifications,
since all the membership results that hold for graph games also hold for any such class as of compact games.
Notably and surprisingly, this is true not only for games whose worth functions can be computed by a deterministic
polynomial-time oracle, but even for the cases where $\NPSV$ oracles are used. Indeed, we show that:

\begin{itemize} \item[$\blacktriangleright$] On games defined via non-deterministic polynomial time
oracles, the problems of checking whether a payoff vector belongs to the core, the kernel, and the bargaining
set are (still) feasible in $\CONP$, $\DeltaP{2}$, $\PiP{2}$, respectively; and

\item[$\blacktriangleright$] On games defined via non-deterministic polynomial time oracles, checking whether
the core is non-empty is (still) feasible in $\CONP$.
\end{itemize}

\item[Specializations:] Finally, by following the perspective adopted by \citet{Ieong_Shoham:MarginalContributionNets},
we analyze the complexity of the kernel in ``structurally restricted'' graph games. In particular, we show that
this concept can be expressed in terms of an optimization problem over \emph{Monadic Second Order Logic (MSO)}
formulae. Based on this encoding and by exploiting Courcelle's Theorem \cite{Courcelle90} and its generalization
to optimization problems due to~\citet*{ALS91}, we show that:

\begin{itemize}
  \item[$\blacktriangleright$] On graph games having bounded treewidth, checking whether a payoff vector is in the kernel is feasible
      in polynomial time
      (w.r.t. the size of the game measured as the number of its nodes, plus the number of its edges, plus all the values of the
      weights associated with it---or, equivalently, plus the number of the bits necessary to store such weights in unary notation).
\end{itemize}
\end{description}

\begin{figure}[t]\centering {\small

 \begin{tabular}{|l||c|c|c|}
   \hline
   % after \\: \hline or \cline{col1-col2} \cline{col3-col4} ...
   {\bf Problem}          & {\bf Graph Games} & {\bf MC\nbdash nets} & {\bf General}\\ \hline\hline

   \CoreMembership          & $\CONP$-complete$^\ddagger$ \cite{Deng_Papadimitriou:ComplexityCooperativeGameSolutionConcepts} & $\CONP$-complete$^\ddagger$ \cite{Ieong_Shoham:MarginalContributionNets} &$\CONP$-complete \\
   \CoreNonEmptyness        & $\CONP$-complete$^\ddagger$ \cite{Deng_Papadimitriou:ComplexityCooperativeGameSolutionConcepts} & $\CONP$-complete$^{\ddagger,\star}$ & $\CONP$-complete\\
   \KernelMembership        & $\DeltaP{2}$-complete$^\ddagger$ & $\DeltaP{2}$-complete & $\DeltaP{2}$-complete\\
   \BargainingSetMembership & $\PiP{2}$-complete & $\PiP{2}$-complete & $\PiP{2}$-complete \\
   \hline
 \end{tabular}\vspace{-1mm}

} \caption{Summary of results. Hardness results on checking problems hold even if the payoff vector is actually
an imputation. $^\ddagger$Feasible in polynomial time on structures having bounded treewidth. $^\star$Hardness
and feasibility in polynomial time on structures having bounded treewidth are shown in
\cite{Ieong_Shoham:MarginalContributionNets}.}\label{fig:Summary}\vspace{-4mm}
\end{figure}

\medskip

A summary of our results is reported in Figure~\ref{fig:Summary}, where \CoreMembership, \KernelMembership, and
\BargainingSetMembership\ denote the problem of checking whether a payoff vector belongs to the core, the
kernel, and the bargaining set, respectively, and where \CoreNonEmptyness\ denotes the problem of checking core
non-emptiness---in fact, recall that kernel and bargaining set are always not empty
 (unless there is no imputation at all)~\cite{{Osborne_Rubinstein:GameTheory}}, and hence the corresponding non-emptiness problem is immaterial for
them.

\subsection{Organization} The rest of the paper is organized as follows. Preliminaries on computational
complexity are reported in Section~\ref{sec:prelim_comp_compl}. An abstract framework for compact games is
discussed in detail in Section~\ref{sec:compact}. The complexity of core, kernel, and bargaining set is analyzed
in Sections~\ref{sec:core}, \ref{sec:kernel}, and \ref{sec:bargainingSet}, respectively. Eventually, a few final
remarks and discussions on some open problems are reported in Section~\ref{sec:conclusion}.

\section{Preliminaries on Computational Complexity}\label{sec:prelim_comp_compl}

In this section we recall some basic definitions about complexity theory, by referring the reader to the work of
\citet{Johnson:CatalogComplexityClasses} for more on this.

\subsection{The Complexity of Decision Problems: {\PP}, {\NP}, and {\CONP}}

\emph{Decision} problems are maps from strings (encoding the input instance over a fixed alphabet, e.g., the
binary alphabet $\{0,1\}$) to the set $\{ ``yes" , ``no" \}$. The class $\PP$ is the set of decision problems
that can be solved by a deterministic Turing machine in polynomial time with respect to the input size, that is,
with respect to the length of the string that encodes the input instance. For a given input $x$, its size is
usually denoted by $||x||$.

Throughout the paper, we shall often refer to computations done by \emph{non-deterministic} Turing machines,
too. Recall that these are Turing machines that, at some points of the computation, may not have one single next
action to perform, but a \emph{choice} between several possible next actions.
A non-deterministic Turing machine answers a decision problem if, on any input $x$, (\emph{i}) there is at least one sequence of choices leading to halt in an accepting state
 if $x$ is a ``yes'' instance (such a sequence is called accepting computation path); and
(\emph{ii}) all possible sequences of choices lead to a rejecting state if $x$ is a ``no'' instance.

The class of decision problems that can be solved by non-deterministic Turing machines in polynomial time is
denoted by $\NP$.
Problems in $\NP$ enjoy a remarkable property: any ``yes'' instance $x$ has a \emph{certificate} of it being a
``yes'' instance, which has polynomial length and which can be checked in polynomial time (in the size $||x||$).
As an example, deciding whether a Boolean formula $\Phi$ over the variables $X_1,\dots,X_n$ is satisfiable,
i.e., deciding whether there exists some truth assignment to the variables making $\Phi$ true, is a well-known
problem in $\NP$; in fact, any satisfying truth assignment for $\Phi$ is obviously a certificate that $\Phi$ is
a ``yes'' instance, i.e., that $\Phi$ is satisfiable.

The class of problems whose complementary problems are in $\NP$ is denoted by $\CONP$. As an example, the
problem of deciding whether a Boolean formula $\Phi$ is \emph{not} satisfiable is in $\CONP$. Of course, the
class $\PP$ is contained in both $\NP$ and $\CONP$.

\subsection{Further Complexity Classes: The Polynomial Hierarchy}

Throughout the paper, we shall also refer to a type of computation called computation with \emph{oracles}.
Intuitively, oracles are subroutines which are supposed to have unary cost.

The classes $\SigmaP{k}$, $\PiP{k}$, and $\DeltaP{k}$, forming the \emph{polynomial hierarchy}, are defined as
follows: $\SigmaP{0} = \PiP{0} = \PP$ and for all $k\ge 1$, $\SigmaP{k}=\NP^{\Sigma^P_{k-1}}$,
$\DeltaP{k}=\PP^{\Sigma^P_{k-1}}$, and $\PiP{k}=\text{co-}\SigmaP{k}$ where $\text{co-}\SigmaP{k}$ denotes the
class of problems whose complementary problem is solvable in $\SigmaP{k}$. Here, $\SigmaP{k}$ (resp.,
$\DeltaP{k}$) models computability by a non-deterministic (resp., deterministic) polynomial-time Turing machine
that may use an oracle in $\SigmaP{k-1}$. Note that $\SigmaP{1}$ coincides with $\NP$, and that $\PiP{1}$
coincides with $\CONP$.

A well-known problem at the $k$-th level of the polynomial hierarchy is deciding the validity of a quantified
Boolean formula with $k$ quantifier alternations. A quantified Boolean formula (short: QBF) with $k$ quantifier
alternations has the form $Q_1\bar X_1Q_2\bar X_2...Q_k\bar X_k \Phi$, where $k\geq 1$, $\bar X_i$ ($1\leq i\leq
k$) is a set of variables, $Q_i\in \{\exists,\forall\}$ ($1\leq i\leq k$), $Q_i\neq Q_{i+1}$ ($1\leq i< k$), and
$\Phi$ is a Boolean formula over the variables in $\bigcup_{i=1}^k \bar X_i$.
The set of all quantified Boolean formulas with $k$ quantifier alternations and $Q_1=\exists$ (resp.,
$Q_1=\forall$) is denoted by QBF$_{k,\exists}$ (resp., QBF$_{k,\forall}$). Deciding the validity of a quantified
Boolean formula in QBF$_{k,\exists}$ (resp., QBF$_{k,\forall}$) is a well-known problem in  $\SigmaP{k}$ (resp.,
$\PiP{k}$). Note that for $k=1$, this problem coincides with the problem of deciding whether the Boolean formula
$\Phi$ is satisfiable (resp., not satisfiable), which is indeed in $\NP$ (resp., $\CONP$).

\subsection{Reductions among Decision Problems}\label{sec:reduction}

A decision problem $A_1$ is \emph{polynomially reducible} to a decision problem $A_2$, denoted by $A_1 \leq_p
A_2$, if there is a polynomial time computable function $h$ (called reduction) such that, for every $x$, $h(x)$ is defined and $x$
is a ``yes'' instance of $A_1$ if, and only if, $h(x)$ is a ``yes'' instance of $A_2$. A decision problem $A$ is
\emph{complete} for a class $\mathcal{C}$ of the polynomial hierarchy (at any level $k\geq 1$, i.e., beyond $\PP$) if $A$ belongs to
$\mathcal{C}$ and every problem in $\mathcal{C}$ is polynomially reducible to $A$. Thus, problems that are
complete for $\mathcal{C}$ are the most difficult problems in $\mathcal{C}$.

It is worthwhile observing that all problems mentioned in this section are known to be complete for the classes
in which their membership has been pointed out. In particular, deciding the validity of a QBF$_{k,\exists}$
(resp., QBF$_{k,\forall}$) formula is the prototypical $\SigmaP{k}$-complete (resp., $\PiP{k}$-complete)
problem.

\subsection{Complexity Classes of Functions}

All the aforementioned problems are decision ones, but very often we are interested in \emph{search} problems where, for any given instance, a (non-Boolean) solution must be computed. The complexity classes of functions allow us to distinguish such problems according to their intrinsic difficulties, which is particularly relevant when their associated decision problems belong to the same complexity class.

Let a finite {\em alphabet} $\Sigma$ with at least two elements be
given. A {\em (partial) multivalued} ({\em MV}) {\em function} $f
:\Sigma^* \mapsto \Sigma^*$ associates  no, one or several
outcomes ({\em results}) with each input string. Let $f(x)$ stand for the set of possible
results of $f$ on an input string $x$; thus, we write $y \in
f(x)$ if $y$ is a value of $f$ on the input string $x$.
Define {\em dom}$(f) = \{ x \mid \exists y (y \in f(x)) \}$ and {\em
graph}$(f) = \{ \langle x,y \rangle \mid x\in dom(f),~y \in f(x)
\}$. If $x \not\in${\em dom}$(f)$, we will say that $f$ is
undefined at $x$. The function $f$ is {\em total} if $dom(f) = \Sigma^*$.

An MV function $f$ is {\em polynomially balanced} if, for each $x$,
the size of each result in $f(x)$ is polynomially bounded in the
size of $x$.

The class $\NPMV$ is defined as the set of all MV functions $f$ such that both
(i) $f$ is polynomially balanced and
(ii) {\em graph}$(f)$ is in \NP.
By analogy, the class $\NPMV_g$ is defined as the class of all polynomially-balanced
multivalued functions $f$ for which {\em graph}$(f)$ is
in $\PP$.
If we deal with {\em (partial) single\nbdash valued functions}, we get the corresponding classes  $\NPSV$ and $\NPSV_g$, respectively~\cite{selm-94}.

A {\em transducer} is a (possibly, non-deterministic) Turing
machine $T$ on the alphabet $\Sigma$ with a read-only input tape, a read-write work tape,
and a write-only output tape.
%3 tapes (input, output and work) and
%There are two types of final states: {\em accepting} and {\em rejecting}.
For any string $x \in \Sigma^*$, we say that $T$ accepts $x$
if $T$ has an accepting computation-path on $x$.
For each $x\in \Sigma^*$ accepted by $T$, we
denote by $T(x)$ the set of all strings that are written by $T$ on the output tape
in its accepting computation-paths on input string $x$.
Thus, every transducer is associated with some MV function $f$
(we say that $T$ {\em computes} $f$) such that, for each $x\in \Sigma^*$,
$f(x)=T(x)$ if $x$ is accepted by $T$; otherwise, $f$ is undefined at $x$ (i.e., $x \not \in dom(f)$).

Note that the class $\FP$ consists of all functions that are
computed by deterministic Turing transducers in polynomial time.

Functions in $\NPMV$ are
characterized in terms of Turing machines as follows.
\begin{fact}
An MV function is in $\NPMV$ if and only if it is computed by a
nondeterministic transducer in polynomial time.
\end{fact}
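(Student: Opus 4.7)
The plan is to establish the two directions by direct simulation, tying the acceptance behavior of a nondeterministic transducer to the graph of the MV function.

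For the easy direction, I would assume $f$ is computed by a nondeterministic transducer $T$ running in time bounded by some polynomial $p$. Polynomial balance is immediate, since every string written on the output tape on any accepting path has length at most $p(\|x\|)$. To establish $\textit{graph}(f)\in\NP$, I would build a nondeterministic polynomial-time verifier that, on input $\langle x,y\rangle$, simulates $T$ on $x$ while additionally checking, symbol by symbol, that the simulated path writes exactly $y$ on the output tape and ends in an accepting state. This verifier accepts iff $y\in f(x)$, witnessing $\textit{graph}(f)\in\NP$, and together with polynomial balance yields $f\in\NPMV$.

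For the converse, I would start from a polynomial bound $p$ on the length of values in $f(x)$ and a nondeterministic polynomial-time machine $M$ deciding $\textit{graph}(f)$, both guaranteed by $f\in\NPMV$. The transducer $T$ to construct would, on input $x$, first nondeterministically guess a string $y\in\Sigma^*$ of length at most $p(\|x\|)$ on the work tape, then simulate $M$ on $\langle x,y\rangle$, and finally---only on those branches where $M$ accepts---copy $y$ onto the write-only output tape and enter an accepting state, rejecting on all other branches. Matching accepting paths of $T$ on $x$ with pairs $\langle x,y\rangle\in\textit{graph}(f)$ would then give $T(x)=f(x)$ whenever $x\in\textit{dom}(f)$, and no accepting path at all for $T$ on $x$ otherwise, so $T$ indeed computes $f$ within a polynomial time bound.

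I do not expect any real obstacle here: polynomial balance is precisely what permits the polynomial-time guessing step in the second direction, while the separation of the write-only output tape from the accepting-state decision is what makes the first direction clean. The only subtlety worth watching is that rejecting paths must contribute nothing to $T(x)$, so outputs should be committed only after $M$ accepts; this discipline is already built into the definition of what a transducer \emph{computes}.
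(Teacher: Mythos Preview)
Your argument is correct and follows the standard textbook proof of this characterization. Note, however, that the paper does not actually supply a proof of this statement: it is recorded as a \emph{Fact} (a known result from the theory of function classes, cf.\ Selman~1994) and left unproved, so there is no paper proof to compare against---your write-up simply fills in the omitted details in the expected way.
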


It is worthwhile noting the difference between the two classes $\NPMV_g$ and $\NPMV$, which contains more complex functions than $\NPMV_g$ (assuming $\PP\neq \NP$). For instance, consider the problem of computing the partial MV function $f_{H}$ that, given a graph $G$, outputs the Hamiltonian cycles of $G$ (if any).
 This function is in $\NPMV_g$ since $graph(f_{H})$ is polynomially balanced and decidable in deterministic polynomial time (for any pair $\tuple{G,C}$, just check whether $C$ is a Hamiltonian cycle of $G$).
Let us consider now the weighted version of this problem, where the input graph $G$ is edge-weighted, and the function values are the weights of Hamiltonian cycles of $G$ (if any).
 Then, this partial MV function, say $f_{W\!H}$, belongs to $\NPMV$ but not to $\NPMV_g$
   (unless $\PP = \NP$).
   Indeed, deciding whether a given pair $\tuple{G,w}$ (graph,weight) belongs to $graph(f_{W\!H})$ is clearly $\NP$-complete (one needs some Hamiltonian cycle having weight $w$ to recognize it as a correct function value).

\section{A Formal Framework for Compact Representations}\label{sec:compact}

Graph games and marginal contribution networks are two prominent examples of coalitional games whose worth functions are defined in terms of some suitable (combinatorial) structure, instead of
 listing the worths of all coalitions.
  In this section, we generalize these two schemes by formalizing the notion of compact representation for coalitional games.

Any {\em compact representation} $\R$ defines suitable encodings for a set of coalitional games, denoted by $\C(\R)$. If all games may be represented by $\R$, we say that this representation is
 {\em complete}.
 For any game $\G\in \C(\R)$, $\R$ defines an encoding $\xi^\R(\G)$ of the game, and
  a worth function $v^\R(\cdot,\cdot)$ that, given $\xi^\R(\G)$ and a set $S$ of players of  $\G$, outputs the worth  $v^\R(\xi^\R(\G),S)$ associated with the coalition $S$ according to $\G$.

For instance, consider the graph-games representation $\graph$: any game $\G\in \C(\graph)$ is encoded as a weighted graph $\xi^\graph(\G)$, and the worth function $v^\graph(\xi^\graph(\G),S)$ is computed for every coalition $S$ by taking the sum of the weights of all edges of $\xi^\graph(\G)$ included in $S$.

 For the case of the marginal-contribution nets compact-representation $\marginal$, any game $\G$ is encoded by a set of rules $\xi^\marginal(\G)$, and the worth-function $v^\marginal(\xi^\marginal(\G),S)$ computes the worth of $S$ as the sum of the values of those rules fired by players in $S$.

\begin{defin}\label{def:compactGeneralForm}
We say that $\R$ is a \emph{polynomial\nbdash time compact representation} (short: \PP{}\nbdash representation) if the
\emph{worth-function} $v^\R(\cdot,\cdot)$ belongs to $\FP$, i.e., it is polynomial\nbdash time computable by a deterministic transducer.

We say that $\R$ is a \emph{non\nbdash deterministic polynomial\nbdash time compact representation} (short:
\NP{}\nbdash  representation) if the \emph{worth-function} $v^\R(\cdot,\cdot)$ belongs to $\NPSV$, i.e., it is polynomial\nbdash time computable by a non\nbdash deterministic transducer.\hfill $\Box$
\end{defin}

Note that both $\graph$ and $\marginal$ are based on worth functions that are efficiently computable, and in fact both of them are polynomial\nbdash time compact representations.

 An interesting feature of a compact representation is its {\em expressive power}.
  Firstly, one may ask whether $\R$ is complete or not, that is, whether it is the case that every coalitional game belongs to $\C(\R)$. For instance $\marginal$ is a complete representation, while $\graph$ is not. However, as a consequence of completeness, for some games $\marginal$ is not able to provide succinct representations. In fact, it is known that there are games whose $\marginal$ encodings have size exponential in the number of players~\cite{Elkind:ExpressiveMCNET}.
  Therefore, for a pair of compact representations $\R_1$ and $\R_2$, one may wonder about the relationship between $\C(\R_1)$ and $\C(\R_2)$, and about the ability to represent games in a more o less succinct way.
Formally, we say that $\R_2$ is at least as expressive (and succinct) as $\R_1$, denoted by $\R_1\precsim_e\R_2$,
  if there exists a function $f$ in $\FP$ that translates a game $\xi^{\R_1}(\G)$ represented in $\R_1$ into an equivalent game $\xi^{\R_2}(\G)$ represented in $\R_2$, that is, into a game with the same worth function as the former one.
  More precisely, we require that, for $\xi^{\R_2}(\G)=f(\xi^{\R_1}(\G))$,  $v^{\R_1}(\xi^{\R_1}(\G),S)=v^{\R_2}(\xi^{\R_2}(\G),S)$, for each coalition of players $S$ in the game $\G$.

For instance, it can easily be shown that $\graph\precsim_e\marginal$. In the rest of the paper, we shall provide complexity results for $\graph$, $\marginal$, and for arbitrary \PP{} and \NP{} compact representations.

For the sake of presentation, whenever a compact representation $\R$ is understood, we just write
 $\G$ instead of $\xi^{\R}(\G)$, and $v(S)$ instead of $v^{\R}(\xi^{\R}(\G),S)$.

\section{The Complexity of the Core}\label{sec:core}

The concept of the \emph{core} goes back to the work of \citet{Edgeworth:Core} and it has been formalized by
\citet{Gillies:Solutions}. To review its definition, we need to state some preliminary concepts and notations,
which will extensively be used throughout the paper.

For any coalition $S\subseteq N$, let $|S|$ denote the cardinality of $S$, and let $\Re^S$ be the $|S|$\nbdash
dimensional real coordinate space, whose coordinates are labeled by the members of $S$; in particular, given a
\emph{payoff vector} $x\in\Re^S$, $x_i$ denotes the component associated with the player $i\in S$.
A vector $x\in \Re^S$ is called an $S$-feasible vector if $\sum_{i\in S} x_i=v(S)$. The value $\sum_{i\in S}
x_i$ will be simply denoted by $x(S)$ in the following.
Let $\game=\tugame$ be a coalitional game, and let $x$ be an imputation taken from the set $X(\game)$ of all
imputations of $\game$---recall from the Introduction that such an $x$ must be {efficient}, i.e., $\sum_{i\in
N}x_i=v(N)$, and {individually rational}, i.e., $x_i\geq v(\{i\})$, for each $i\in N$. The pair $(y,S)$ is an
\emph{objection to $x$} if $y$ is an $S$\nbdash feasible payoff vector such that $y_k> x_k$ for all $k\in S$.

\begin{defin}\label{def:coreTU}
The \emph{core} $\core{\game}$ of a coalitional game $\game=\tuple{N,v}$ is the set of all imputations $x$ to
which there is no objection; that is,
\[
\core{\game}=\left\{x\in X(\game)\left|\nexists S\subseteq N \text{ and } y\in\Re^S  \text{ such that } y(S)=v(S) \text{ and } y_k>x_k,\forall k\in
S\right.\right\}.
\]

\vspace{-4mm}\hfill$\Box$
\end{defin}

Thus, an imputation $x$ in the core is ``stable'' precisely because there is no coalition whose members will
receive a higher payoff than in $x$ by leaving the grand\nbdash coalition.

It is easily seen that Definition~\ref{def:coreTU} can be equivalently restated as the set of all solutions
satisfying the following inequalities \cite[see, e.g.,][]{Osborne_Rubinstein:GameTheory}:
\begin{align}
\sum_{i\in S}x_i&\geq v(S),\quad\forall S\subseteq N \land S\neq\emptyset\label{eq:VincoliCoalizioni}\\
\sum_{i\in N} x_i&\leq v(N)\label{eq:VincoloGrandCoalition}.
\end{align}
\noindent

In particular, the last inequality, combined with its opposite in \eqref{eq:VincoliCoalizioni}, enforces the
efficiency of solutions; moreover, inequalities in \eqref{eq:VincoliCoalizioni} over singleton coalitions
enforce their individual rationality.

\begin{example}\label{ex:CORE_TU}
Let $\game=\tugame$ be a TU game with $N=\{a,b,c\}$, $v(\{a\})=v(\{b\})=v(\{c\})=0$, $v(\{a,b\})=20$,
$v(\{a,c\})=30$, $v(\{b,c\})=40$, and $v(\{a,b,c\})=42$. Consider the imputation $x$ such that: $x_a=4$,
$x_b=14$, and $x_c=24$. Since $v(\{b,c\})=40>38=x(\{b,c\})$ such imputation $x$ is not in $\core{\game}$.

In fact, we can show that $\core{\game}$ is empty. To this end, consider coalitions $S_1=\{a,b\}$, $S_2=\{a,c\}$
and $S_3=\{b,c\}$ and the worths associated with them by the worth function. An imputation $x$ to be in
$\core{\game}$ have to satisfy the following three conditions:
\begin{align*}
x_a+x_b &\geq 20\\
x_a+x_c &\geq 30\\
x_b+x_c &\geq 40.
\end{align*}
Summing up these inequalities we obtain that $2 x_a+2 x_b+2 x_c\geq 90$, implying that $x_a+x_b+x_c\geq 45$. Thus,
the core of $\game$ is empty because the grand\nbdash coalition would need to receive $45$ instead of $v(N)=42$
in order to satisfy the claims of $S_1$, $S_2$ and $S_3$.

Consider, instead, the game $\game'=\tuple{N,v'}$ whose worth function $v'$ is the same as that of $\game$
except for the grand\nbdash coalition for which $v'(N)=45$. Then, it is easily checked that the imputation $x'$
such that $x'_a=5$, $x'_b=15$, and $x'_c=25$ is in $\core{\game'}$.~\hfill~$\lhd$
\end{example}

As it emerged from the above example, the core of a game can be empty. In fact, checking whether this is not the case is a \CONP-hard problem for graph games~\cite{Deng_Papadimitriou:ComplexityCooperativeGameSolutionConcepts}.
It is easy to see that this hardness result can be extended easily to all compact game representations at least as expressive (and succinct) as graph games.

\begin{prop}\label{prop:hardness-core}
Let $\R$ be any compact representation such that $\graph\precsim_e \R$ (e.g., $\R=\marginal$). On the class $\C(\R)$, \CoreNonEmptyness\ is \CONPh.
\end{prop}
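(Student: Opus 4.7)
The plan is a direct reduction from \CoreNonEmptyness\ on graph games, exploiting the expressiveness assumption $\graph\precsim_e\R$. By \citet{Deng_Papadimitriou:ComplexityCooperativeGameSolutionConcepts}, \CoreNonEmptyness\ is already known to be \CONPh\ on $\C(\graph)$, so it suffices to exhibit a polynomial-time many-one reduction from this problem to \CoreNonEmptyness\ on $\C(\R)$.

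To build the reduction, I would invoke the definition of $\precsim_e$: there exists a function $f\in\FP$ such that, for every graph game $\G\in\C(\graph)$, $\xi^\R(\G)=f(\xi^\graph(\G))$ is an encoding in $\R$ of a game with exactly the same worth function as $\G$, i.e., $v^\graph(\xi^\graph(\G),S)=v^\R(\xi^\R(\G),S)$ for every coalition $S\subseteq N$. The crucial observation is that the core (and hence its non-emptiness) is a property defined purely in terms of the worth function $v$ and the player set $N$ (see Definition~\ref{def:coreTU} and the equivalent formulation via \eqref{eq:VincoliCoalizioni}--\eqref{eq:VincoloGrandCoalition}); it does not depend on how $v$ is encoded. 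Therefore, $\core{\G}\neq\emptyset$ if and only if the game encoded by $f(\xi^\graph(\G))$ has a non-empty core.

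Putting these pieces together, the mapping $\xi^\graph(\G)\mapsto f(\xi^\graph(\G))$ is a polynomial-time reduction of \CoreNonEmptyness\ on $\C(\graph)$ to \CoreNonEmptyness\ on $\C(\R)$. Hardness on $\C(\R)$ follows by transitivity of polynomial reductions, and the example $\R=\marginal$ is covered because the translation from graph games into marginal contribution networks recalled in the Introduction (one rule $\{i\wedge j\}\to w(e)$ per edge $e=\{i,j\}$) is itself a witness that $\graph\precsim_e\marginal$.

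There is no real technical obstacle here: the argument is essentially a bookkeeping reduction, and the only subtlety worth stating explicitly is the representation-independence of the core, which is what allows equivalence of worth functions to transfer directly to equivalence of the decision problem's answers. No stronger property of $f$ (such as preserving the player set literally, or preserving structural features like the agent graph) is required.
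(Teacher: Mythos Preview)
Your proposal is correct and follows essentially the same approach as the paper: invoke the \CONP-hardness of \CoreNonEmptyness\ on graph games from \citet{Deng_Papadimitriou:ComplexityCooperativeGameSolutionConcepts}, then compose with the polynomial-time translation $f$ guaranteed by $\graph\precsim_e\R$, observing that equal worth functions yield equal cores. The only cosmetic difference is that the paper phrases the composition starting from an arbitrary \CONP\ problem (reduction $f_1$ to graph games, then $f_2=f$ to $\C(\R)$), whereas you reduce directly from \CoreNonEmptyness\ on $\C(\graph)$ and appeal to transitivity; these are equivalent.
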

\begin{proof}
From the \CONP-hardness for graph games~\cite{Deng_Papadimitriou:ComplexityCooperativeGameSolutionConcepts}, we know that there is a polynomial-time reduction $f_1$ from any \CONP\ problem $\Upsilon$ to the core non-emptiness problem for graph games. Moreover, recall from Section~\ref{sec:compact} that $\graph\precsim_e \R$ means that
there exists a polynomial-time function $f_2$ that translates any graph game $\xi^{\graph}(\G)$ into an equivalent game $f_2(\xi^{\graph}(\G))$ belonging to $\C(\R)$, that is, into a game with the same worth function and thus the same core as the former one. Therefore, the composition of $f_1$ and $f_2$ is a polynomial-time reduction from $\Upsilon$ to the core non-emptiness problem for games in $\C(\R)$.
\end{proof}

In fact, for the case of graph games, the precise complexity of the non-emptiness problem for the core is known, as~\citet{Deng_Papadimitriou:ComplexityCooperativeGameSolutionConcepts} have shown that this problem is \CONPc.
However, it was open whether the membership still holds for
marginal contribution networks~\cite{Ieong_Shoham:MarginalContributionNets} and, possibly, for more general compact representations.
 In this section, we positively answer this question and show that in fact membership in $\CONP$ holds for any class of games $\C(\R)$ associated with an \NP{} representation scheme $\R$.

In order to get this result, we have to identify some succinct certificate
that the core of a game is empty. Indeed,
 it was observed that the ``obvious'' certificate of non-emptiness of the core based on the
Bondareva-Shapley theorem is exponential in size~\cite{Ieong_Shoham:MarginalContributionNets}.
 The technical machinery that we need is originally due to Helly, who proved the
following beautiful result on families of convex sets.

\begin{prop}[Helly's Theorem~\cite{Helly,Rabin:Helly}]\label{theo:hellyTheo}
Let $\mathcal{C}=\{c_1,\dots,c_m\}$ be a finite family of convex sets in $\Re^n$, where $m>n$. If
$\bigcap_{i=1}^m c_i=\emptyset$, there is a set $\mathcal{C'}\subseteq\mathcal{C}$, such that
$|\mathcal{C'}|=n+1$ and $\bigcap_{c_i\in\mathcal{C'}}c_i=\emptyset$.
\end{prop}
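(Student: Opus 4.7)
The plan is to prove the equivalent contrapositive: if every $(n+1)$-subfamily of $\mathcal{C}$ has non-empty intersection, then $\bigcap_{i=1}^m c_i$ itself is non-empty. I would proceed by induction on $m\geq n+1$, with Radon's partition theorem as the key ingredient. The base case $m=n+1$ is immediate from the hypothesis applied to $\mathcal{C}$ itself. For the inductive step with $m\geq n+2$, the induction hypothesis applied to each $(m-1)$-subfamily $\mathcal{C}\setminus\{c_k\}$ (which still has the property that all its $(n+1)$-subfamilies intersect) yields a point $x_k\in\bigcap_{i\neq k}c_i$ for every $k\in\{1,\ldots,m\}$.

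The argument then produces a common point by way of Radon's theorem, which asserts that any set of $n+2$ or more points in $\Re^n$ admits a partition of its index set $\{1,\ldots,m\}=I\sqcup J$ with $\mathrm{conv}\{x_i\mid i\in I\}\cap\mathrm{conv}\{x_j\mid j\in J\}\neq\emptyset$. Pick any $p$ in this intersection. For any target index $k$, $k$ belongs to exactly one side of the partition, say $I$; then every $x_j$ with $j\in J$ lies in $c_k$, since $j\neq k$ and $x_j$ was chosen in $\bigcap_{i\neq j} c_i$. By convexity $\mathrm{conv}\{x_j\mid j\in J\}\subseteq c_k$, so $p\in c_k$. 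The case $k\in J$ is symmetric, giving $p\in\bigcap_{i=1}^m c_i$, and completing the induction. Radon's theorem itself follows from affine dependence: the $m\geq n+2$ points $x_1,\ldots,x_m\in\Re^n$ admit coefficients $\lambda_i$, not all zero, with $\sum_i\lambda_i=0$ and $\sum_i\lambda_i x_i=0$; taking $I=\{i\mid\lambda_i>0\}$ and $J$ its complement and normalizing so that $\sum_{i\in I}\lambda_i=1=-\sum_{j\in J}\lambda_j$, the point $\sum_{i\in I}\lambda_i x_i=-\sum_{j\in J}\lambda_j x_j$ lies in both convex hulls.

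The main obstacle is presentational rather than technical: the pivotal observation is that $x_j\in c_k$ whenever $j\neq k$, so when $k$ and $j$ end up on \emph{opposite} sides of Radon's partition the inclusion $x_j\in c_k$ is automatic, and convexity of $c_k$ absorbs the whole convex hull. Framing the induction to make that point visible is the only care needed; the rest is bookkeeping. A minor subtlety is that the points $x_k$ need not be distinct, but this causes no trouble because Radon's theorem only uses their affine relation, not their distinctness.
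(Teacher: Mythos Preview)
Your argument is the classical Radon-based proof of Helly's theorem and is correct as written. Note, however, that the paper does not actually prove this proposition: it is stated with external citations and used as a black box (in the proof of Theorem~\ref{theo:membershipBargainingSetCheckingTU}). What the paper \emph{does} prove directly is a special-case analogue tailored to the core---Theorem~\ref{theo:PrincipalTheorem}, built on the Roof Lemma and Lemma~\ref{lemma:facce}---where all convex sets are closed halfspaces and one exploits the facial structure of the resulting polyhedron via rotations of supporting hyperplanes. That geometric argument is quite different from yours: it is less general (it does not handle arbitrary convex sets) but is constructive and gives an explicit description of the small witnessing subfamily in terms of facets. Your Radon-based proof, by contrast, covers the full statement of Proposition~\ref{theo:hellyTheo} and is what one would supply if asked to justify the cited result itself.
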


However, we next provide a direct proof of the existence of such succinct certificates of infeasibility.
Indeed, we believe that our proof provides a nice geometrical interpretation of
Helly's Theorem for the convex sets related with the core, and thus it may be of independent interest for the reader.
 The approach is based on the well-known fact that, because of inequalities \eqref{eq:VincoliCoalizioni} and
\eqref{eq:VincoloGrandCoalition}, the core of a coalitional game over $N$ players is a polyhedral set of $\Re^N$.

\subsubsection{Preliminaries on Polyhedral Sets}\label{polyhedra:prelim}

We next give some useful definitions and facts about polyhedral sets. We refer the interested reader to any text
on this subject for further readings (see, e.g., \citep{Grunbaum:ConvexPolytopes,Brondsted:ConvexPolytopes}).

Let $n>0$ be any natural number. A \emph{Polyhedral Set} (or \emph{Polyhedron}) $P$ of $\Re^n$ is the intersection of a finite set $\mathcal{S}$ of closed halfspaces of $\Re^n$. Note that in this paper we always assume, unless otherwise stated, that $n>0$. We denote this polyhedron by $\Pol(\mathcal{S})$. %and we denote $\mathcal{S}$ by $\Half(P)$.

Recall that a \emph{hyperplane} $H$ of $\Re^n$ is a set of points $\{x\in\Re^n|a^Tx=b\}$, where $a\in\Re^n$ and $b\in\Re$. The closed \emph{halfspace} $H^+$ is the set of points $\{x\in\Re^n|a^Tx \geq b\}$. We say that these points \emph{satisfy} $H^+$. We denote the points that do not satisfy this halfspace by $H^-$, i.e., $H^- = \Re^n \setminus H^+ = \{x\in\Re^n|a^Tx < b\}$. Note that $H^-$ is an open halfspace. We say that $H$ \emph{determines} $H^+$ and $H^-$. Define the \emph{opposite} of $H^+$ as the set of points $\bar{H}^+= \{x\in\Re^n|a'^Tx\geq b'\}$, where $a' = -1 \cdot a$ and $b' = -1\cdot b$. Note that $\bar{H}^+ = H^- \cup H$, since it is the set of points $\{x\in\Re^n|a^Tx \leq b\}$.

Let $P=\Pol(\mathcal{S})$ be a polyhedron and $H$ a hyperplane. Then, $H$ \emph{cuts} $P$ if both $H^+$ and $H^-$ contain points of $P$, and we say that $H$ \emph{passes through} $P$, if there is a non-empty touching set $C = H\cap P$. Furthermore, we say that $H$ \emph{supports} $P$,  or that it is a \emph{supporting hyperplane} for $P$, if $H$ does not cut $P$, but passes through $P$, i.e., it just touches $P$, as the only common points of $H$ and $P$ are those in their intersection $C$.

Moreover, we say that $H^+$ is a \emph{supporting halfspace} for $P$ if $H$ is a supporting hyperplane for $P$ and $P\subseteq H^+$. Note that $P \subseteq \Pol(\mathcal{S}')$ for any set of halfspaces $\mathcal{S}'\subseteq \mathcal{S}$, since the latter polyhedron is obtained from the intersection of a smaller set of halfspaces than $P$. We say that such a polyhedron is a \emph{supporting polyhedron} for $P$.

Recall that, for any set $A \subseteq \Re^n$, its dimension $\dim(A)$ is the dimension of its affine hull. For
instance, if $A$ consists of two points, or it is a segment, its affine hull is a line and thus $\dim(A)=1$. By
definition, $\dim(\emptyset)=-1$, while single points have dimension~0.
For any hyperplane $H$, $\dim(H)=n-1$, while
the intersection $C$ of any pair of (non-parallel) hyperplanes $H_1$ and $H_2$ has dimension $n-2$.

Every hyperplane $H$ has precisely two normal vectors, while its associated halfspace $H^+$ has one normal vector, that is, the normal vector of $H$ that belongs to $H^+$.
 The \emph{dihedral angle} $\delta(H_1^+,H_2^+)$ between two non-parallel halfspaces $H_1^+$ and $H_2^+$ is the smallest angle between the corresponding normal vectors. For such halfspaces, in this paper we always consider rotations whose axis is the (affine) subspace $C=H_1\cap H_2$ of dimension $n-2$, so we avoid to explicitly mention rotation axes, hereafter. Note that such rotations are uniquely identified by their amount (angle) of rotation, which is the one degree of freedom.
 Formally, define the result of a \emph{rotation} of a halfspace $H_1^+$ towards a halfspace
  $H_2^+$ of an angle $-\pi<\alpha<\pi$ to be the halfspace $H_3^+$ such that
   $\delta(H_3^+,H_2^+)= \delta(H_1^+,H_2^+)-\alpha$, and $H_i\cap H_j=C$, $\forall i,j\in\{1,2,3\}$, $i\neq j$ (the points on the axis $C$ of rotation are fixed).

A set $F \subseteq P$ is a \emph{face} of $P$ if either $F=\emptyset$, or $F=P$, or if there exists a supporting hyperplane $H_F$ of $P$ such that $F$ is their touching set, i.e., $F=H_F\cap P$. In the latter case, we say that $F$ is a \emph{proper face} of $P$. A \emph{facet} of $P$ is a proper face of $P$ having the largest possible dimension, that is, whose dimension is $\dim(P)-1$.

The following facts are well known \citep{Grunbaum:ConvexPolytopes}:
\begin{enumerate}
\item For any facet $F$ of $P$, there is a halfspace $H^+\in \mathcal{S}$ such that $F=H^+\cap P$. We say that $H^+$ \emph{generates} $F$.\label{fact:generator}
\item For any proper face $F$ of $P$, there is a facet $F'$ of $P$ such that $F\subseteq F'$.\label{fact:facet}
\item If $F$ and $F'$ are two proper faces of $P$ and $F \subset F'$, then $\dim(F) < \dim(F')$.\label{fact:dimension}
\end{enumerate}

\subsubsection{Separating Polyhedra from a Few Supporting Halfspaces} \label{sec:separating}

We start with a pair of technical results.
With a little abuse of notation, since coalitions correspond to the inequalities \eqref{eq:VincoliCoalizioni} and hence with the
associated halfspaces of $\Re^n$, we use hereafter interchangeably these terms.

\begin{figure}[t]
\centering
\includegraphics[width=0.5\textwidth]{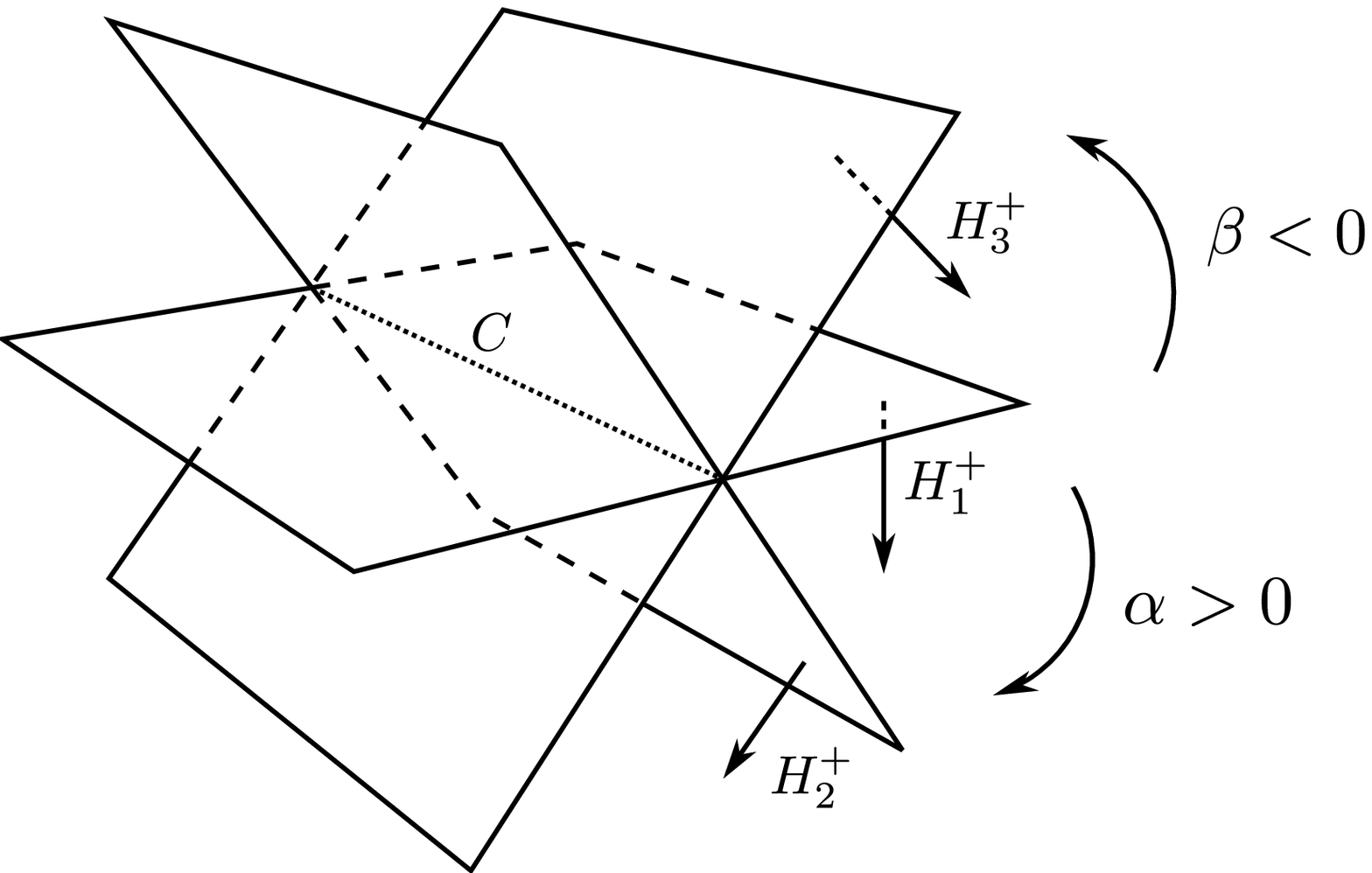}%}
\caption{Rotations of halfspaces in Lemma~\ref{lemma:RoofLemma}.}\label{fig:roof}
\end{figure}

\begin{lemma}[Roof Lemma]\label{lemma:RoofLemma}
Let $H_1^+$, $H_2^+$, and $H_3^+$ be three halfspaces such that $H_i\cap H_j=C\neq\emptyset$, $\forall i,j\in\{1,2,3\}$, $i\neq j$, and such that $H_3^+$ may be obtained by  rotating $H_1^+$ towards $H_2^+$ by $\beta <0$, with $\delta(H_2^+,H_1^+)-\beta<\pi$. Then, $H_1^+$ is a
supporting halfspace for $H_2^+ \cap H_3^+$, i.e., for $\Pol(\{H_2^+,H_3^+\})$.
\end{lemma}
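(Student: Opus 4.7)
The plan is to reduce the statement to a purely two-dimensional question and then solve it by expressing the inward normal of $H_1^+$ as a strictly positive linear combination of the inward normals of $H_2^+$ and $H_3^+$.

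First, I would exploit that the three hyperplanes share the common $(n-2)$-dimensional affine subspace $C$. Fix any point $p\in C$ and let $V$ be the two-dimensional affine plane through $p$ perpendicular to $C$. Each $H_i$ meets $V$ in a line $\ell_i$ through $p$, and each $H_i^+$ meets $V$ in a closed half-plane $h_i^+$ bounded by $\ell_i$. Since $C$ is the fixed axis of the rotation and dihedral angles are measured in the two-dimensional orthogonal complement of $C$, the angles $\delta(H_i^+,H_j^+)$ coincide with the planar angles between the $h_i^+$'s, and the rotation carrying $H_1^+$ to $H_3^+$ descends to an ordinary planar rotation about $p$. It therefore suffices to prove $h_2^+\cap h_3^+\subseteq h_1^+$ in $V$.

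Next, placing the origin at $p$ in $V$, let $n_1,n_2,n_3$ denote the inward unit normals to $h_1^+,h_2^+,h_3^+$. I would put $n_1$ at polar angle $0$ and, without loss of generality, $n_2$ at polar angle $\delta\in(0,\pi)$, where $\delta=\delta(H_1^+,H_2^+)$; the defining property of the rotation then places $n_3$ at polar angle $\beta$, and the hypotheses $\beta<0$ together with $\delta-\beta<\pi$ give $\beta\in(\delta-\pi,0)$, so that $n_1$ lies on the short arc from $n_3$ to $n_2$. Solving the $2\times 2$ linear system $n_1=a\,n_2+b\,n_3$ then yields
\begin{equation*}
a \;=\; \frac{\sin|\beta|}{\sin(\delta+|\beta|)},\qquad b \;=\; \frac{\sin\delta}{\sin(\delta+|\beta|)},
\end{equation*}
both strictly positive because $\delta$, $|\beta|$, and $\delta+|\beta|$ all lie in $(0,\pi)$.

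From $n_1=a\,n_2+b\,n_3$ with $a,b>0$, bilinearity of the inner product gives $\langle x-p,\,n_1\rangle = a\,\langle x-p,\,n_2\rangle+b\,\langle x-p,\,n_3\rangle\geq 0$ for every $x\in h_2^+\cap h_3^+$, so $h_2^+\cap h_3^+\subseteq h_1^+$. Lifting back to $\Re^n$ gives $H_2^+\cap H_3^+\subseteq H_1^+$, i.e., $\Pol(\{H_2^+,H_3^+\})\subseteq H_1^+$; and since $C\subseteq H_1$ while $C\subseteq H_2\cap H_3\subseteq H_2^+\cap H_3^+$, the hyperplane $H_1$ actually touches $\Pol(\{H_2^+,H_3^+\})$ along (at least) $C$, so $H_1^+$ is a supporting halfspace. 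The main delicacy will be bookkeeping the sign convention for rotations: the hypothesis $\beta<0$ is precisely what ensures that $H_1^+$ is angularly sandwiched between $H_2^+$ and $H_3^+$, placing $n_1$ on the short arc between $n_2$ and $n_3$, while the bound $\delta-\beta<\pi$ is exactly what forces the denominator $\sin(\delta+|\beta|)$ to be positive and hence $a,b>0$.
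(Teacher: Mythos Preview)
Your argument is correct and takes a genuinely different route from the paper. The paper argues geometrically: it partitions the points of $H_1\setminus C$ into those lying in $H_2^+$ and those lying in $H_3^+$, and then observes, via the rotation picture, that the former fall into $H_3^-$ while the latter fall into $H_2^-$; hence $H_1\setminus C$ misses $H_2^+\cap H_3^+$, so $H_1$ is supporting, and the bound $\delta(H_2^+,H_1^+)-\beta<\pi$ is invoked only at the end to ensure the polyhedron lies on the correct side. Your approach instead reduces to the two-dimensional orthogonal complement of $C$, writes the inward normal $n_1$ explicitly as a strictly positive combination $a\,n_2+b\,n_3$, and reads off the containment $H_2^+\cap H_3^+\subseteq H_1^+$ from linearity of the inner product. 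This is more algebraic and arguably more transparent: the two hypotheses $\beta<0$ and $\delta-\beta<\pi$ show up exactly as the positivity of the numerators $\sin|\beta|,\sin\delta$ and of the common denominator $\sin(\delta+|\beta|)$, so one sees precisely where each assumption is used. The paper's version, by contrast, is quicker to state and keeps the geometric intuition of Figure~\ref{fig:roof} front and center, at the cost of leaving the ``points are rotated to the wrong side'' step somewhat informal. One minor remark: your lifting step is fine because all three normals $N_i$ in $\Re^n$ already lie in the two-dimensional orthogonal complement of the direction of $C$, so the identity $N_1=aN_2+bN_3$ holds verbatim in $\Re^n$ and the containment follows for every $x$, not just those in $V$.
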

\begin{proof}
Note that $H_2^+$ may be obtained by  rotating $H_1^+$ towards $H_2^+$ by their dihedral angle $0<\alpha=\delta(H_2^+,H_1^+) <\pi$. Thus, all points $A$ of $H_1\setminus C$ that
belong to $H_2^+$ are in $H_3^-$, because $H_3$ is obtained by rotating those points (in the
opposite direction) by the angle $\beta<0$. Symmetrically, all points $B$ of $H_1\setminus C$
belonging to $H_3^+$ are in $H_2^-$, see Figure~\ref{fig:roof}, for a three\nbdash dimensional illustration.
 Moreover, observe that all points in $H_1\setminus C$ are involved in the rotations and thus belong to either $A$ or $B$.
 It follows that $(H_1\setminus C)\cap (H_2^+\cap H_3^+)=\emptyset$, whence $H_1$ is a supporting hyperplane.
 Finally, since $\delta(H_2^+,H_1^+)-\beta<\pi$, it is easy to see that some point in $H_2^+\cap H_3^+$ is in  $H_1^+$, and thus $H_1^+$ is in fact a supporting halfspace for $\Pol(\{H_2^+,H_3^+\})$.
\end{proof}

We next show that, whenever a full-dimensional polyhedron $P=\Pol(\mathcal{S})$ of $\Re^n$ is separated from some hyperplane $H_P$, there exists a subset of at most $n$ halfspaces corresponding to facets of $P$ that define  a larger polyhedron of $\Re^n$ (a rough approximation of $P$), which is still separated from $H_P$.
We first give the proof idea, with the help of Figure~\ref{fig:certificate}. For the sake of intuition, imagine that $H_P^+$ is the inequality~(\ref{eq:VincoloGrandCoalition}) associated with the grand\nbdash coalition, while the set $\mathcal{S}$ corresponds to the other inequalities~(\ref{eq:VincoliCoalizioni}), whence in this case the core is empty.
If there is a facet $F$ of $P$ that is parallel to $H_P$, we are trivially done, because its associated halfspace already provides the desired separated polyhedron, succinctly described by just one inequality.
Otherwise such a face $F$ has a smaller dimension but, from Fact~\ref{fact:facet}, there exists a facet $F'$ of $P$ such that $F\subset F'$. In the three-dimensional example shown in Figure~\ref{fig:certificate}, $F$ is the vertex at the bottom of the diamond, $H_F^+$ is the halfspace (anti-)parallel to $H_P$ that contains $F$ (but it is not associated with any inequality generating $P$), and $F'$ is some facet on its ``dark side.'' Let $C=H_F\cap H_{F'}$, and consider the rotation of $H_F^+$ towards $H_{F'}^+$ of a negative angle $\beta$ (we go on the opposite direction w.r.t.\ $H_{F'}^+$) that first touches $P$, say $H_{F''}$. As illustrated in Figure~\ref{fig:certificate}, where the face $F''$ is an edge of the diamond, $F''$ properly includes $F$ and its dimension is at least $d> \dim(F)$. From the above lemma, $H_F^+$ is a supporting halfspace of the polyhedron $\Pol(H_{F'}^+,H_{F''}^+)$, called its roof, which contains $P$ and is separated from $H_P$.
 However, we are not satisfied because we would like that such a polyhedron is described by (at most $n$) halfspaces {\em taken from} $\mathcal{S}$, and $H_{F''}^+$ does not belong to $\mathcal{S}$, in general (as in our example, where it does not generate a facet of $P$. Then, we proceed inductively, by observing that $H_{F''}^+$ is a supporting halfspace for $\Pol(H_{F''_1}^+, H_{F''_2}^+)$---its roof, whose faces have higher dimension than $F''$. In the running example, they are both facets of the diamond, and hence the property immediately holds. In general, the procedure may continue, encountering each time at least one facet and one more face with a higher dimension than the current one. The formal proof follows.

\begin{figure}[t]
\centering
\includegraphics[width=0.3\textwidth]{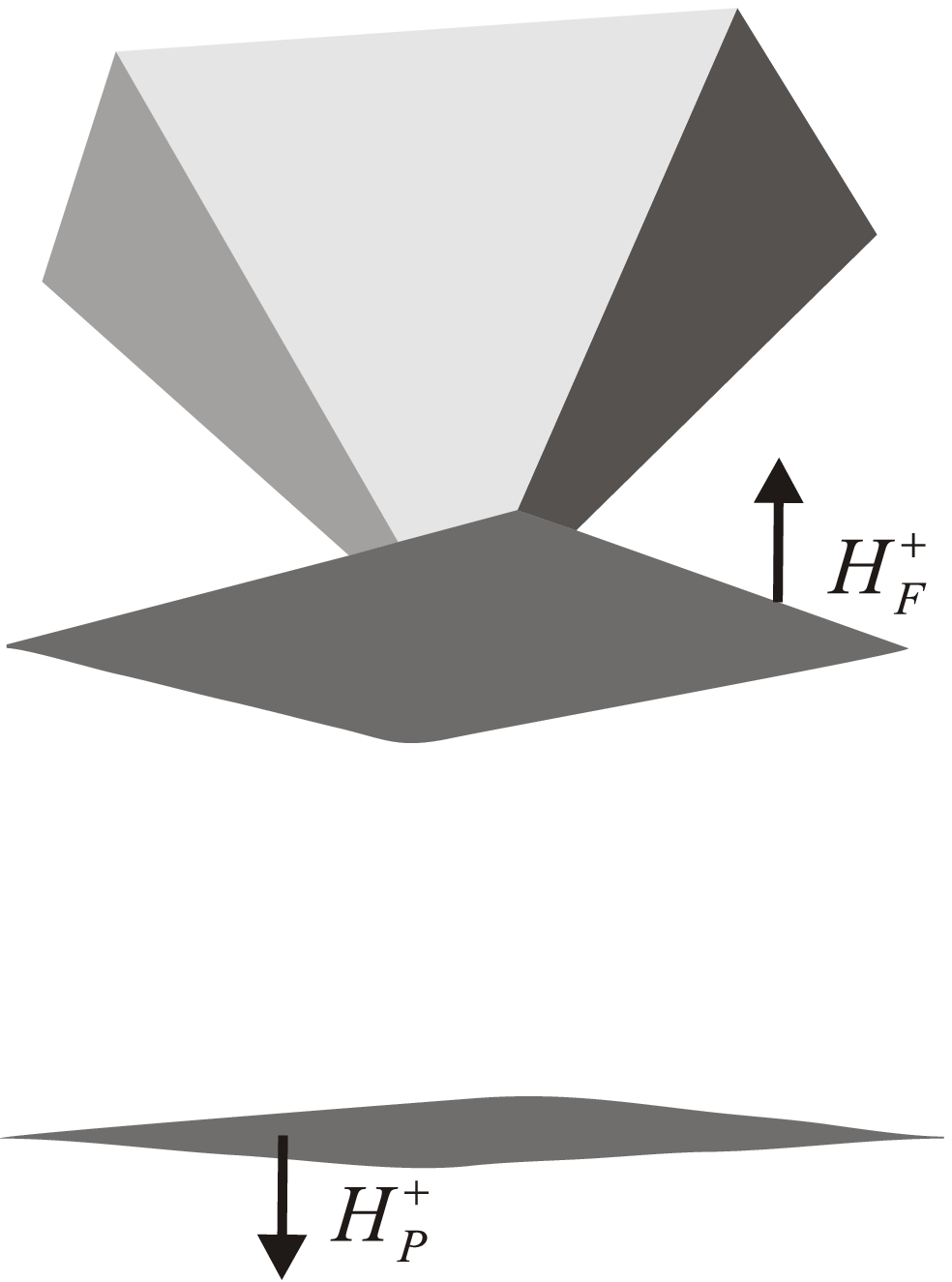}
\includegraphics[width=0.342\textwidth]{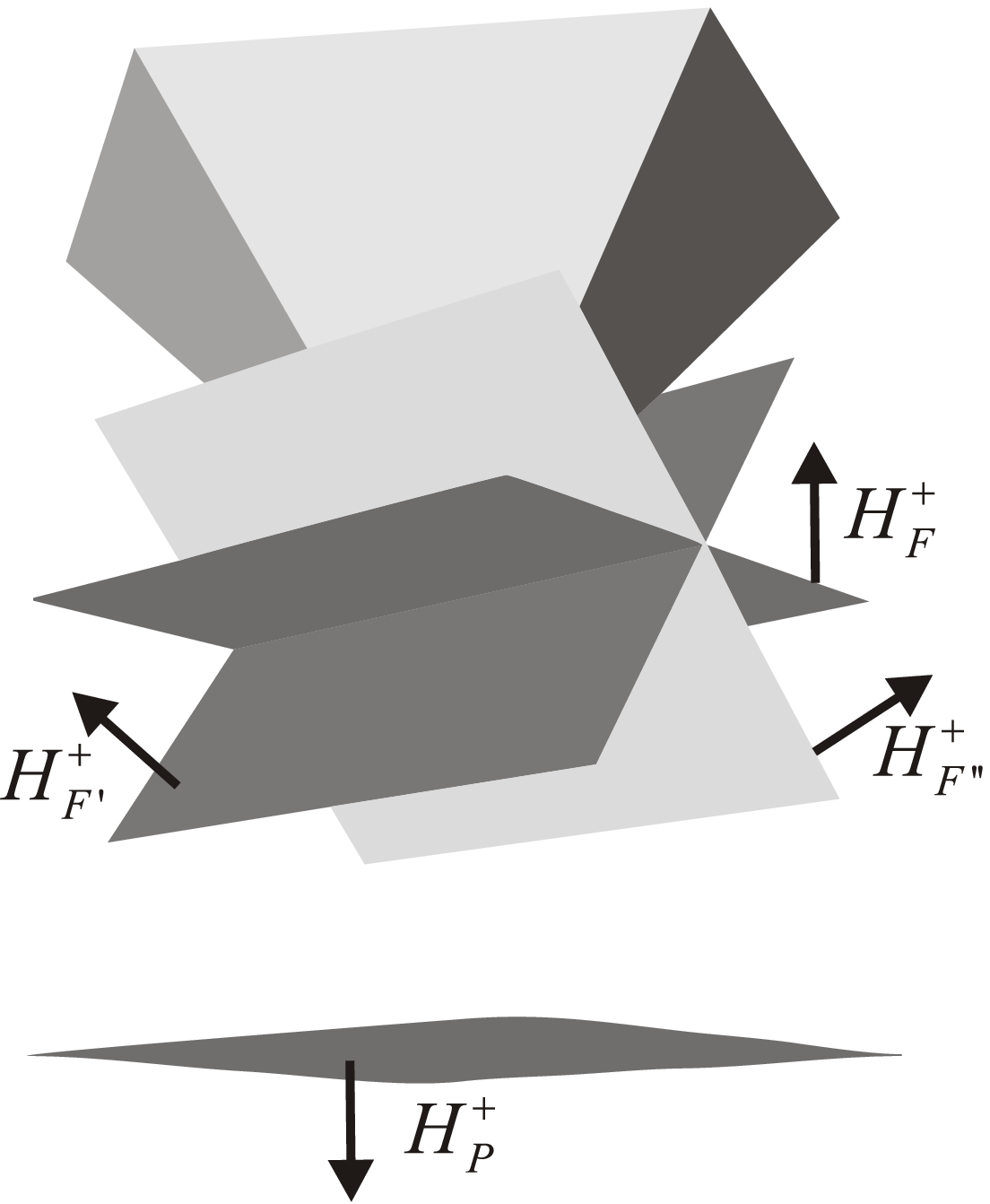}
\includegraphics[width=0.3\textwidth]{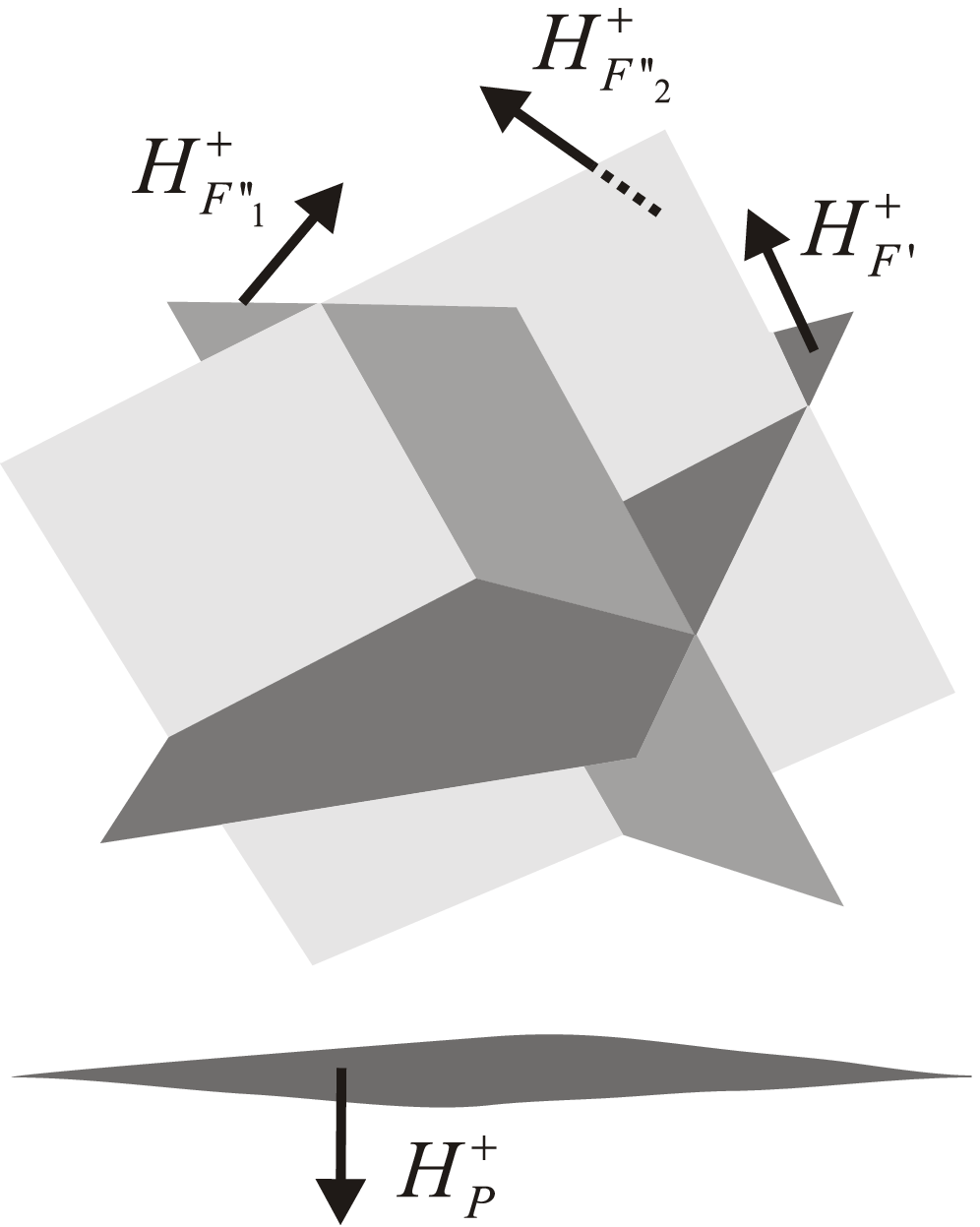}
\caption{Construction of a certificate of emptiness for the core.}\label{fig:certificate}
\end{figure}

\begin{lemma}\label{lemma:facce}
Let $P=\Pol(\mathcal{S})$ be a polyhedron of $\Re^n$ with $\dim(P)=n$, and $H_F^+$ a supporting halfspace of $P$ whose touching set is $F$. Then, there exists a set of halfspaces $\mathcal{H}_F\subseteq \mathcal{S}$ such that $|\mathcal{H}_F|\leq n-\dim(F)$, $H_F^+$ is a supporting halfspace of $\Pol(\mathcal{H}_F)$, and their touching set $C$ is such that
$F\subseteq C$.
\end{lemma}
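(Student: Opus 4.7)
The plan is to argue by induction on the quantity $k=n-\dim(F)\in\{1,\dots,n\}$, noting that $F$ is a proper face of the full-dimensional $P$ (otherwise $H_F$ would contain $P$, contradicting $\dim(P)=n$), so $0\leq\dim(F)\leq n-1$. In the base case $k=1$, $F$ is a facet, and Fact~\ref{fact:generator} furnishes a halfspace $H^+\in\mathcal{S}$ with $F=H^+\cap P$. Both $H_F$ and $H$ are hyperplanes containing the $(n-1)$-dimensional set $F$ and therefore coincide; since both halfspaces contain the full-dimensional $P$, we get $H_F^+=H^+$. Hence $\mathcal{H}_F=\{H^+\}$ works, with $|\mathcal{H}_F|=1=n-\dim(F)$ and touching set $H^+\cap H_F=H_F\supseteq F$.

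For the inductive step with $\dim(F)<n-1$, I would apply Fact~\ref{fact:facet} to pick a facet $F'$ of $P$ containing $F$, and Fact~\ref{fact:generator} to pick a generating halfspace $H_{F'}^+\in\mathcal{S}$. The hyperplanes $H_F$ and $H_{F'}$ must differ, for otherwise the (coinciding) halfspaces would have identical touching sets with $P$, contradicting $F\subsetneq F'$; therefore $C=H_F\cap H_{F'}$ has dimension $n-2$ and still contains $F$. Consider the family $\{H_\beta^+\}_{\beta\leq 0}$ obtained by rotating $H_F^+$ about $C$ in the direction opposite to $H_{F'}^+$, and let $\alpha=\delta(H_{F'}^+,H_F^+)$. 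At $\beta=\alpha-\pi$ the rotated halfspace would equal $\bar{H}_{F'}^+$, which cannot contain the full-dimensional $P$; consequently the closed set of $\beta\in[\alpha-\pi,0]$ for which $H_\beta^+\supseteq P$ has a minimum $\beta^*\in(\alpha-\pi,0]$. Set $H_{F''}^+=H_{\beta^*}^+$ and $F''=H_{F''}\cap P$. A continuity argument then forces $F\subsetneq F''$: were the touching set still $F$, one could rotate slightly further (to some $\beta<\beta^*$) while still keeping $P$ inside $H_\beta^+$, contradicting the minimality of $\beta^*$. Fact~\ref{fact:dimension} now gives $\dim(F'')>\dim(F)$.

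With $H_F^+,H_{F'}^+,H_{F''}^+$ pairwise intersecting in $C$ and $\alpha-\beta^*<\pi$ by the choice of $\beta^*$, Lemma~\ref{lemma:RoofLemma} (Roof Lemma) yields that $H_F^+$ is a supporting halfspace of $\Pol(\{H_{F'}^+,H_{F''}^+\})$. Applying the inductive hypothesis to $H_{F''}^+$ (legitimate because $n-\dim(F'')<k$) produces $\mathcal{H}_{F''}\subseteq\mathcal{S}$ with $|\mathcal{H}_{F''}|\leq n-\dim(F'')$ such that $H_{F''}^+$ supports $\Pol(\mathcal{H}_{F''})$ at a touching set containing $F''$. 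Taking $\mathcal{H}_F=\{H_{F'}^+\}\cup\mathcal{H}_{F''}$ gives $|\mathcal{H}_F|\leq 1+(n-\dim(F''))\leq n-\dim(F)$; the chain $\Pol(\mathcal{H}_F)\subseteq H_{F'}^+\cap\Pol(\mathcal{H}_{F''})\subseteq H_{F'}^+\cap H_{F''}^+$ combined with the Roof Lemma delivers $\Pol(\mathcal{H}_F)\subseteq H_F^+$, and the inclusions $F\subseteq F'\subseteq H_{F'}^+$ and $F\subseteq F''\subseteq\Pol(\mathcal{H}_{F''})$ together with $F\subseteq H_F$ yield $F\subseteq\Pol(\mathcal{H}_F)\cap H_F$, completing the induction.

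The principal obstacle is the geometric/continuity argument in the second paragraph, which has to justify both that the critical rotation angle $\beta^*$ strictly exceeds $\alpha-\pi$ and that $F''$ strictly enlarges $F$. Both rely on closedness of the predicate ``$H_\beta^+\supseteq P$'' in $\beta$ together with the full-dimensionality of $P$; once these two geometric facts are pinned down, the rest of the induction is essentially bookkeeping that stitches together the Roof Lemma, the inductive hypothesis, and Facts~\ref{fact:generator}--\ref{fact:dimension}.
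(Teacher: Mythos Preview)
Your proof is essentially the same as the paper's: both induct on $n-\dim(F)$, dispatch the facet base case via Fact~\ref{fact:generator}, and in the inductive step choose a facet $F'\supseteq F$ with generator $H_{F'}^+\in\mathcal{S}$, rotate $H_F^+$ about $C=H_F\cap H_{F'}$ in the negative direction until the touching set with $P$ strictly enlarges to some $F''$, invoke the Roof Lemma to get $H_F^+\supseteq\Pol(\{H_{F'}^+,H_{F''}^+\})$, and then combine $\{H_{F'}^+\}$ with the inductive data for $F''$. The only cosmetic difference is that the paper phrases the inductive step symmetrically (applying the hypothesis to both $F'$ and $F''$ and then observing $|\mathcal{H}_{F'}|=1$), whereas you go straight to $\mathcal{H}_F=\{H_{F'}^+\}\cup\mathcal{H}_{F''}$; and you make the ``first touching'' rotation explicit as the minimum $\beta^*$ of a closed parameter set, which is a slightly more careful formulation of the same step the paper treats informally.
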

\begin{proof}
The proof is by induction. \emph{Base case}:
If $\dim(F)=n-1$ we have that the touching face $F=H_F^+\cap P$ is a facet of $P$. Thus, from Fact~\ref{fact:generator}, $F$ is generated by some halfspace $H^+\subseteq \mathcal{S}$ such that $H^+\cap P=F$, as for $H_F$. Since $\dim(F)=\dim(H)=\dim(H_F)=n-1$, it easily follows that in fact $H=H_F$ holds. Thus, $H_F^+$ is trivially a supporting halfspace of $H^+$, and this case is proved: just take $\mathcal{H}_F =\{H^+\}$ and note that $|\mathcal{H}_F|=1$.

\emph{Inductive step}: By the induction hypothesis, the property holds for any supporting halfspace $H_{F'}^+$ of $P$ such that its touching face $F'$ has a dimension $d\leq\dim(F')\leq n-1$, for some $d>0$. We show that it also holds for any supporting halfspace $H_{F}^+$ of $P$, whose touching face $F$ has a dimension $\dim(F)=d-1$.

Since $F$ is not a facet, from Fact~\ref{fact:facet} there exists a facet $F'$ of $P$ such
that $F\subset F'$. Let $C=H_F\cap H_{F'}$, where $H_{F'}^+\in\mathcal{S}$ is the halfspace
that generates the facet $F'$. Note that $F\subseteq C$.
Let $H_{F''}^+$ be the halfspace that first touches the
polyhedron $P$ obtained by rotating $H_F^+$ towards $H_{F'}^+$ by some negative angle $\beta$ (i.e., we are going in the opposite direction w.r.t.\ $H_{F'}^+$). Then, $H_{F''}^+$ is a supporting halfspace of $P$, and the touching set $F''$
includes $C$. Since $\dim(P)=n$,
 we have that $\delta(H_F^+,H_{F'}^+)-\beta <\pi$ and  $F\subset F''$ (as the latter face contains some point of the polyhedron outside the axis $C\supseteq F$), and thus
the dimension $\dim(F'')$ of this face is strictly greater than $\dim(F)$, from Fact
\ref{fact:dimension}. Moreover,
 From Lemma~\ref{lemma:RoofLemma}, $H_F^+$ is a supporting halfspace of
 $\Pol(H_{F'}^+,H_{F''}^+)$, with $C=H_{F'}\cap H_{F''}$ as touching set.

 By the induction hypothesis, since both $\dim(F')$ and $\dim(F'')$ are at least $d$,
 we know that there are
 two sets $\mathcal{H}_{F'}\subseteq \mathcal{S}$ and $\mathcal{H}_{F''}\subseteq \mathcal{S}$ such that:
 $H_{F'}^+$ is a supporting halfspace of $\Pol(\mathcal{H}_{F'})$, with $F'\subseteq C'$,
 where $C'$ is their touching set; and
 $H_{F''}^+$ is a supporting halfspace of $\Pol(\mathcal{H}_{F''})$, with $F''\subseteq C''$,
 where $C''$ is their touching set.
  In particular, $\Pol(\mathcal{H}_{F'})\subseteq H_{F'}^+$ and
  $\Pol(\mathcal{H}_{F''})\subseteq H_{F''}^+$, respectively.
  Let $\mathcal{H}_F=\mathcal{H}_{F'}\cup\mathcal{H}_{F''}$.
  Then, we get $\Pol(\mathcal{H}_F)\subseteq \Pol(\{H_{F'}^+,H_{F''}^+\})\subseteq
 H_F^+$.

 Recall that $F\subseteq C$. Moreover, $F\subset F'$ and $F\subset F''$, and
 thus $F\subseteq (F'\cap F'') \subseteq (C'\cap C'')$, since $F'\subseteq C'$ and $F''\subseteq C''$.
 It follows that the touching set of  $H_F^+$ and
  $\Pol(\mathcal{H}_F)$ includes $F$.
 Indeed, this touching set includes
 $(C'\cap C'') \cap H_F \supseteq (C'\cap C'') \cap (H_F\cap H_{F'} \cap H_{F''}) = (C'\cap C'') \cap
 C$, and from the above observations $(C'\cap C'') \cap C \supseteq F$.

Finally, note that $|\mathcal{H}_F|\leq|\mathcal{H}_{F'}|+|\mathcal{H}_{F''}|=1+|\mathcal{H}_{F''}|$, because $\dim(F')=n-1$ and the base case applies.
 Moreover, $\dim(F'')>\dim(F)=d-1$ and thus, by the induction hypothesis, we obtain
 $|\mathcal{H}_F| \leq 1+n-dim(F'')\leq 1+n-d = n-\dim(F)$.
\end{proof}

Let $\game=\tuple{N,v}$ be a game with transferable payoffs. A coalition set $\mathcal{S}\subseteq 2^N$ is a
\emph{certificate of emptiness} (or \emph{infeasibility certificate}) for the core of $\game$ if the
intersection of $\Pol(\mathcal{S})$ with the grand\nbdash coalition halfspace
\eqref{eq:VincoloGrandCoalition} is empty.
In fact, this definition is motivated by the following observation. Let $P$ be the polyhedron of $\Re^n$
obtained as the intersection of all halfspaces \eqref{eq:VincoliCoalizioni}. Since $\mathcal{S}$ is a subset of
all possible coalitions, $P\subseteq\Pol(\mathcal{S})$. Therefore, if the intersection of $\Pol(\mathcal{S})$
with the grand\nbdash coalition halfspace \eqref{eq:VincoloGrandCoalition} is empty, the intersection of
this halfspace with $P$ is empty, as well.

\begin{theo}\label{theo:PrincipalTheorem}
Let $\game=\tuple{N,v}$ be a game with transferable payoffs. If the core of $\game$ is empty, there is a certificate of emptiness $\mathcal{S}$ for it such that $|\mathcal{S}|\leq |N|$.
\end{theo}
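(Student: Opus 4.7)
My plan is to apply Lemma~\ref{lemma:facce} to the polyhedron $P := \Pol(\mathcal{S}_{all})$, where $\mathcal{S}_{all}$ denotes the collection of all halfspaces of the form~\eqref{eq:VincoliCoalizioni}, paired with a supporting halfspace parallel to the grand-coalition hyperplane. First I would observe that $P$ is non-empty (any vector with sufficiently large entries belongs to it) and that the singleton halfspaces $x_i \geq v(\{i\})$ in $\mathcal{S}_{all}$ guarantee $x(N)=\sum_{i\in N} x_i \geq \sum_{i\in N} v(\{i\})$ on $P$, so the linear functional $x(N)$ is bounded below on $P$ and the quantity $v^* := \min\{x(N) : x \in P\}$ is finite and attained. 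Let $F := \{x\in P : x(N)=v^*\}$ be the non-empty face of minimizers, and let $H_F^+ := \{x : x(N)\geq v^*\}$; by construction $H_F^+$ is a supporting halfspace of $P$ with touching set $F$. Writing $H_P^+ := \{x : x(N) \leq v(N)\}$ for the halfspace in~\eqref{eq:VincoloGrandCoalition}, emptiness of the core, i.e., $P \cap H_P^+ = \emptyset$, is equivalent to $v^* > v(N)$, which in turn implies that $H_F^+$ and $H_P^+$ are disjoint.

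Assuming for the moment that $\dim(P) = n$, I would then invoke Lemma~\ref{lemma:facce} on $P$ with the supporting halfspace $H_F^+$, obtaining a subfamily $\mathcal{H}_F \subseteq \mathcal{S}_{all}$ with $|\mathcal{H}_F| \leq n - \dim(F) \leq n$ (using $F\neq\emptyset$, whence $\dim(F) \geq 0$) and $\Pol(\mathcal{H}_F) \subseteq H_F^+$. Combined with the disjointness of $H_F^+$ and $H_P^+$, this gives $\Pol(\mathcal{H}_F) \cap H_P^+ = \emptyset$, so letting $\mathcal{S}$ be the set of coalitions corresponding to the halfspaces in $\mathcal{H}_F$ yields the desired infeasibility certificate with $|\mathcal{S}|\leq n$.

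The main obstacle is the full-dimensionality hypothesis of Lemma~\ref{lemma:facce}, which can fail when~\eqref{eq:VincoliCoalizioni} forces $P$ into a proper affine subspace of $\Re^n$. To bypass this, I would perturb each $v(S)$ with $S \neq N$ downward to a generic $\tilde{v}(S) := v(S) - \varepsilon_S$ for tiny positive $\varepsilon_S$. For generic choices, the enlarged polyhedron $\tilde{P}$ is full-dimensional, and continuity of LP optimal values ensures $\tilde{v}^* := \min\{x(N) : x\in\tilde{P}\} > v(N)$ holds for sufficiently small $\varepsilon_S$ (since the strict gap $v^* - v(N) > 0$ is robust to small perturbations), so the perturbed core remains empty. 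Applying the preceding argument to the perturbed game produces a set $\mathcal{S}$ of at most $n$ coalitions such that $\Pol_{\tilde v}(\mathcal{S}) \subseteq \tilde{H}_F^+$ and is thus disjoint from $H_P^+$. Since loosening the inequalities can only enlarge the polyhedron, $\Pol_{v}(\mathcal{S}) \subseteq \Pol_{\tilde v}(\mathcal{S})$, so $\Pol_v(\mathcal{S}) \cap H_P^+ = \emptyset$ as well, and the same $\mathcal{S}$ serves as a certificate of emptiness for the original game.
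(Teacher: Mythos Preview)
Your approach is essentially the paper's: form the polyhedron $P$ from the coalition inequalities~\eqref{eq:VincoliCoalizioni}, take the supporting halfspace $H_F^+$ parallel to the grand-coalition hyperplane, and invoke Lemma~\ref{lemma:facce} to extract at most $n$ halfspaces whose polyhedron is still disjoint from $H_P^+$. The paper proceeds exactly this way.

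The one substantive difference is your perturbation detour, which is unnecessary. You already observed that any vector with sufficiently large entries lies in $P$; the same observation shows that for such a vector \emph{every} inequality $\sum_{i\in S} x_i \geq v(S)$ holds \emph{strictly}, so $P$ has non-empty interior and hence $\dim(P)=n$ automatically. The constraints~\eqref{eq:VincoliCoalizioni} are all one-sided lower bounds on sums of coordinates, so they can never confine $P$ to a proper affine subspace. The paper simply notes this (``there is no upper-bound on the values of any variable $x_i$, and thus it is easy to see that $P\neq\emptyset$ and $\dim(P)=n$'') and applies Lemma~\ref{lemma:facce} directly. Your perturbation argument is not wrong, but it solves a problem that does not arise.
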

\begin{proof}
Let $n = |N|$ and $P$ be the polyhedron of $\Re^n$ obtained as the intersection of all halfspaces \eqref{eq:VincoliCoalizioni}. Since we are not considering the feasibility constraint \eqref{eq:VincoloGrandCoalition}, there is no upper-bound on the values of any variable $x_i$, and thus it is easy to see that $P\neq\emptyset$ and $\dim(P)=n$.

Let $H_P^+$ be the halfspace defined by the grand\nbdash coalition inequality \eqref{eq:VincoloGrandCoalition}. If the core of $\game$ is empty, the whole set of inequalities has no solution, that is, $P\cap H_P^+ =\emptyset$.

Let $\bar H_F^+$ be the halfspace parallel to $H_P^+$ that first touches $P$, that is, the smallest relaxation of $H_P^+$ that intersect $P$. Consider the opposite $H_F^+$ of $\bar H_F^+$, as shown in Figure~\ref{fig:certificate}, on the left. By construction, $H_P^+ \cap H_F^+=\emptyset$, $H_F = \bar H_F$ is a supporting hyperplane of $P$, and $H_F^+$ is a supporting halfspace of $P$. Let $F$ be the touching set of $H_F$ with $P$, and let $d=\dim(F)$. In  Figure~\ref{fig:certificate}, it is the vertex at the bottom of the diamond $P$. From Lemma~\ref{lemma:facce}, there is a set of halfspaces $\mathcal{S}$ associated with inequalities from \eqref{eq:VincoliCoalizioni}, with $|\mathcal{S}|\leq n-d$, and such that $H_F^+$ is a supporting halfspace for $\Pol(\mathcal{S})$. It follows that $H_P^+ \cap \Pol(\mathcal{S}) =\emptyset$, whence $\mathcal{S}$ is a certificate of emptiness for the core of $\game$. Finally, note that the largest cardinality of $\mathcal{S}$ is $n$, and corresponds to the case $\dim(F)=0$, that is, to the case where the face $F$ is just a vertex. Therefore the maximum cardinality of the certificate is $n$. In our three-dimensional example, the certificate is $\{H_{F'}^+,H_{F''_1}^+,H_{F''_2}^+\}$, as shown in Figure~\ref{fig:certificate}, on the right.
\end{proof}

Note that the above proof is constructive and has a nice geometrical interpretation.  Exploiting the above
property, we can now determine the complexity of core non\nbdash emptiness for any (non\nbdash deterministic)
polynomial\nbdash time compact representation.

\begin{theo}\label{theo:membership}
 Let $\R$ be a non-deterministic polynomial-time compact
representation. On the class $\C(\R)$, \CoreNonEmptyness\ is feasible in $\CONP$.
\end{theo}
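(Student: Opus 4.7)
The plan is to establish that the complementary problem---deciding whether the core of a game in $\C(\R)$ is empty---belongs to $\NP$. The key leverage point is Theorem~\ref{theo:PrincipalTheorem}, which guarantees that, whenever the core is empty, there exists a certificate $\mathcal{S}$ of at most $|N|$ coalitions whose associated halfspaces together with \eqref{eq:VincoloGrandCoalition} form an infeasible linear system. Since each coalition is a subset of $N$, the whole set $\mathcal{S}$ has polynomial size, so it can be guessed by a nondeterministic polynomial-time machine.

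The verifier would proceed as follows. First, guess a set $\mathcal{S}=\{S_1,\dots,S_k\}$ with $k\leq |N|$. Second, for each $S_j\in\mathcal{S}\cup\{N\}$, guess an accepting computation path of the nondeterministic transducer implementing $v^\R$ on input $(\xi^\R(\game),S_j)$, and read off the produced worth $v(S_j)$; verify each path in polynomial time. Third, solve in polynomial time (e.g., by the ellipsoid method) the linear feasibility problem
\[
\sum_{i\in S_j} x_i \geq v(S_j)\ (j=1,\dots,k),\qquad \sum_{i\in N} x_i \leq v(N),
\]
and accept iff it is infeasible. All three stages run in nondeterministic polynomial time in $\|\xi^\R(\game)\|$.

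The subtle point---and what I expect to be the main obstacle to articulate cleanly---is ruling out adversarial guesses for the worths. This is where the $\NPSV$ (single\nbdash valued) hypothesis on $\R$ is essential: since $v^\R(\cdot,\cdot)$ is a partial single\nbdash valued function, every accepting computation path of the transducer on a given input outputs the same, unique, correct worth. Hence no nondeterministic branch of the verifier can obtain a value different from $v(S_j)$ while still passing the polynomial\nbdash time check on the transducer's path; any bogus guess is rejected. (Were $v^\R$ only in $\NPMV$, an adversary could produce spurious worths and this approach would fail.)

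Soundness and completeness then follow immediately. For soundness, if the verifier accepts, the computed worths are the true ones and the infeasible subsystem witnesses that the full core defining system is infeasible, so the core is empty. For completeness, if the core is empty, Theorem~\ref{theo:PrincipalTheorem} supplies some $\mathcal{S}$ with $|\mathcal{S}|\leq |N|$ that is a certificate of emptiness; the single\nbdash valuedness of $v^\R$ guarantees that at least one accepting path exists for each $S_j\in\mathcal{S}\cup\{N\}$ (since these worths are defined), so the guessed branch can succeed. This yields $\CoreNonEmptyness\in\CONP$ on $\C(\R)$.
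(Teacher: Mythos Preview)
Your proposal is correct and follows essentially the same approach as the paper: guess a small certificate $\mathcal{S}$ of at most $|N|$ coalitions (via Theorem~\ref{theo:PrincipalTheorem}), guess and verify the worths of those coalitions and of $N$ using the nondeterministic transducer for $v^\R$, and then check in polynomial time that the resulting small linear system is infeasible. Your discussion of why the single\nbdash valuedness of $\NPSV$ is essential for soundness is slightly more explicit than the paper's, but the underlying argument is the same.
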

\begin{proof}
Let $\game=\tuple{N,v}$ be a game with transferable payoffs in $\C(\R)$. If its core \emph{is} empty, from
Theorem~\ref{theo:PrincipalTheorem}, there is a certificate of emptiness $\mathcal{S}$, with
$|\mathcal{S}|\leq n$, where $n$ is the number of players of $\game$. For the sake of presentation, let us
briefly sketch the case of a polynomial\nobreakdash-time deterministic representation $\R$. In this case, a
non-deterministic Turing machine in polynomial time may check that the core is empty by performing the following
operations: (i) guess of the set $\mathcal{S}$, i.e., of the coalitions of players corresponding to the
halfspaces in $\mathcal{S}$; (ii) computation in deterministic polynomial time of the
worth $v^\mathcal{R}(\xi^\mathcal{R}(\game),S)$, for each
$S\in \mathcal{S}$, and for the grand\nbdash coalition $N$; and (iii) check of the property $\Pol(\mathcal{S})\cap
H_P^+=\emptyset$, where $H_P^+$ is the halfspace defined by the grand\nbdash coalition inequality
\eqref{eq:VincoloGrandCoalition}. Note that the last step is feasible in polynomial time, as we have to solve a
system consisting of just $n+1$ linear inequalities.

The case of a non-deterministic polynomial-time compact representation $\R$ is a simple variation where, at step (ii), for each $S\in \mathcal{S}$, the computation of the worth is an $\NPSV$ problem, and thus the machine should work as follows: it guesses the value $w=v^\mathcal{R}(\xi^\mathcal{R}(\game),S)$ and a witness $y$
 that $\tuple{(\xi^\mathcal{R}(\game),S),w}\in graph(v^\mathcal{R})$ (which is an $\NP$ task in this case); then, by exploiting $y$,
it checks in (deterministic) polynomial-time that actually $\tuple{(\xi^\mathcal{R}(\game),S),w}$ belongs to the graph of the worth function $v^\mathcal{R}$.
\end{proof}

The above membership result and Proposition~\ref{prop:hardness-core}
 imply that the non-emptiness problem for the core is \CONPc{} for all $\NP$ representations at least as expressive as the graph games.

\begin{corol}\label{cor:completezza-core}
Let $\R$ be any non-deterministic polynomial-time compact representation such that $\graph\precsim_e \R$ (e.g., $\R=\marginal$). On the class $\C(\R)$, \CoreNonEmptyness\ is \CONPc.
\end{corol}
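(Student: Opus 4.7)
The plan is to obtain this corollary as an immediate consequence of the two main results already established in the section, by combining hardness (Proposition~\ref{prop:hardness-core}) with membership (Theorem~\ref{theo:membership}). Concretely, I would proceed in two short steps, with no genuine obstacle to overcome: the technical work has already been done upstream, and here we are merely reaping the benefit.

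First, I would invoke Proposition~\ref{prop:hardness-core}. Since by hypothesis $\R$ is such that $\graph\precsim_e\R$, that proposition directly yields that \CoreNonEmptyness\ on $\C(\R)$ is \CONPh. I would emphasize that no stronger assumption on $\R$ is needed for the hardness side: the completeness of $\graph$-representations for \CONP-hardness of core non-emptiness (due to \citet{Deng_Papadimitriou:ComplexityCooperativeGameSolutionConcepts}) is transferred to any more expressive representation through the polynomial-time translation $f_2$ promised by $\graph\precsim_e\R$.

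Second, I would invoke Theorem~\ref{theo:membership}, whose hypothesis is precisely the present one (namely, that $\R$ is a non-deterministic polynomial-time compact representation). This yields that \CoreNonEmptyness\ on $\C(\R)$ is in $\CONP$.

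Combining the two gives \CONP-completeness, which is the statement. As for $\R=\marginal$, it suffices to remark (as already observed earlier in the paper) that $\marginal$ is a polynomial-time compact representation, hence in particular a non-deterministic polynomial-time one, and that $\graph\precsim_e\marginal$ via the rule-for-edge translation; therefore both hypotheses apply to $\marginal$, and the claimed \CONP-completeness specializes to this class. Since the argument is a direct combination, there is no nontrivial step to flag; the only delicate point would be a notational one, namely to make clear that Theorem~\ref{theo:membership} applies unchanged because every \PP-representation is in particular an \NP-representation (the inclusion $\FP\subseteq \NPSV$).
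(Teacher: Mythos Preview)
Your proposal is correct and follows exactly the paper's own approach: the corollary is stated immediately after Theorem~\ref{theo:membership} as a direct combination of that membership result with the hardness in Proposition~\ref{prop:hardness-core}. The additional remarks you make about $\marginal$ and about $\FP\subseteq\NPSV$ are accurate but not required, since the paper leaves the corollary essentially as a one-line consequence.
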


\section{The Complexity of the Kernel}\label{sec:kernel}

The \emph{kernel} is a solution concept introduced by \citet{Davis_Maschler:Kernel}. Let us start by recalling
its formal definition.

For any pair of players $i$ and $j$ of a coalitional game $\game=\tugame$, we denote by $\mathcal{I}_{i,j}$ the
set of all coalitions containing player $i$ but not player $j$.
The \emph{excess} $e(S,x)=v(S)-x(S)$ of the generic coalition $S$ at the imputation $x\in X(\game)$ is a measure
of the dissatisfaction of $S$ at $x$.
Define the \emph{surplus} $s_{i,j}(x)$ of player $i$ against player $j$ at an imputation $x$ as the value
$s_{i,j}(x)=\max_{S\in\mathcal{I}_{i,j}}e(S,x)=\max_{S\in\mathcal{I}_{i,j}}(v(S)-x(S))$.

\begin{defin}\label{def:kernelTU}
The \emph{kernel} $\kernel{\game}$ of a TU game $\game=\tugame$ is the set:
\[
\kernel{\game}=\{x\in X(\game)\mid s_{i,j}(x)>s_{j,i}(x)\Rightarrow x_j=v(\{j\}),\forall i,j\in N,i\neq j\}.
\]

\vspace{-6mm}\hfill $\Box$
\end{defin}

Intuitively, the surplus of player $i$ against $j$ at $x$ is the highest payoff that player $i$ can gain (or the
minimal amount $i$ can lose, if it is a negative value) without the cooperation of $j$, by assuming to form
coalitions with other players that are satisfied at $x$; thus, $s_{i,j}(x)$ is the weight of a possible threat
of $i$ against $j$. In particular, player $i$ has more ``bargaining power'' than $j$ at $x$ if
$s_{i,j}(x)>s_{j,i}(x)$; however, player $j$ is immune to such threat whenever $x_j=v(\{j\})$, since in this
case $j$ can obtain $v(\{j\})$ even by operating on her own. We say that player $i$ outweighs player $j$ at $x$ if
$s_{i,j}(x)>s_{j,i}(x)$ and $x_j>v(\{j\})$. The kernel is then the set of all imputations where no player
outweighs another one.

\begin{example}\label{ex:K_TU}
Let $\game=\tugame$ be a TU game with $N=\{a,b,c\}$, $v(\{a\})=v(\{b\})=v(\{c\})=0$, $v(\{a,b\})=20$,
$v(\{a,c\})=30$, $v(\{b,c\})=40$, and $v(\{a,b,c\})=42$.

It is easily verified that the imputation $x$ such that $x_a=4$, $x_b=14$, and $x_c=24$ is in the kernel of
$\game$.
Indeed, we note first that every player in $N$ receives in $x$ a payoff strictly greater than what she is able
to obtain by acting on her own. For this reason, in order for $x$ to belong to $\kernel{\game}$ it must be the
case that $s_{i,j}(x)\leq s_{j,i}(x)$, for all distinct players $i$ and $j$. By the definition of the worth
function, the maximum excess that a coalition $S$ including $i$ and excluding $j$ can achieve is obtained by the
coalition $S\in\mathcal{I}_{i,j}$ such that $|S|=2$. By this, $s_{i,j}(x)=s_{j,i}(x)=2$ for all pairs of different players
$i,j$. Thus, $x\in\kernel{\game}$.~\hfill~$\lhd$
\end{example}

It is well known that $\kernel{\game}\neq\emptyset$, whenever $X(\game)\neq \emptyset$ (see, e.g.,
\cite{Osborne_Rubinstein:GameTheory}). Thus, the non-emptiness problem is trivial for this concept. Instead, as
discussed by~\citet{Deng_Papadimitriou:ComplexityCooperativeGameSolutionConcepts}, it is of interest to ask for
the computational complexity of deciding whether a given payoff vector belongs to the kernel. This problem was
conjectured to be $\NP$-hard by those same authors~\cite{Deng_Papadimitriou:ComplexityCooperativeGameSolutionConcepts}, even for
graph games.
In the rest of the section, we firstly confirm the conjecture by actually showing that the problem is even
harder and, precisely, \DeltaPh2. Then, we shall show that the corresponding membership result holds on any class of
games $\C(\R)$, where $\R$ is an arbitrary \NP{} representation scheme. Finally, we shall single out an island
of tractability, by looking for structural restrictions imposed over graph games.

\subsection{Hardness on Graph Games}\label{sec:hardnessKernelTU}

We show that checking whether an imputation belongs to the kernel is $\DeltaP{2}$-hard on graph games. The proof is based on a
reduction from the problem of the lexicographically maximum satisfying assignment for Boolean formulae.

\begin{figure}[t]
\centering
\includegraphics[width=0.8\textwidth]{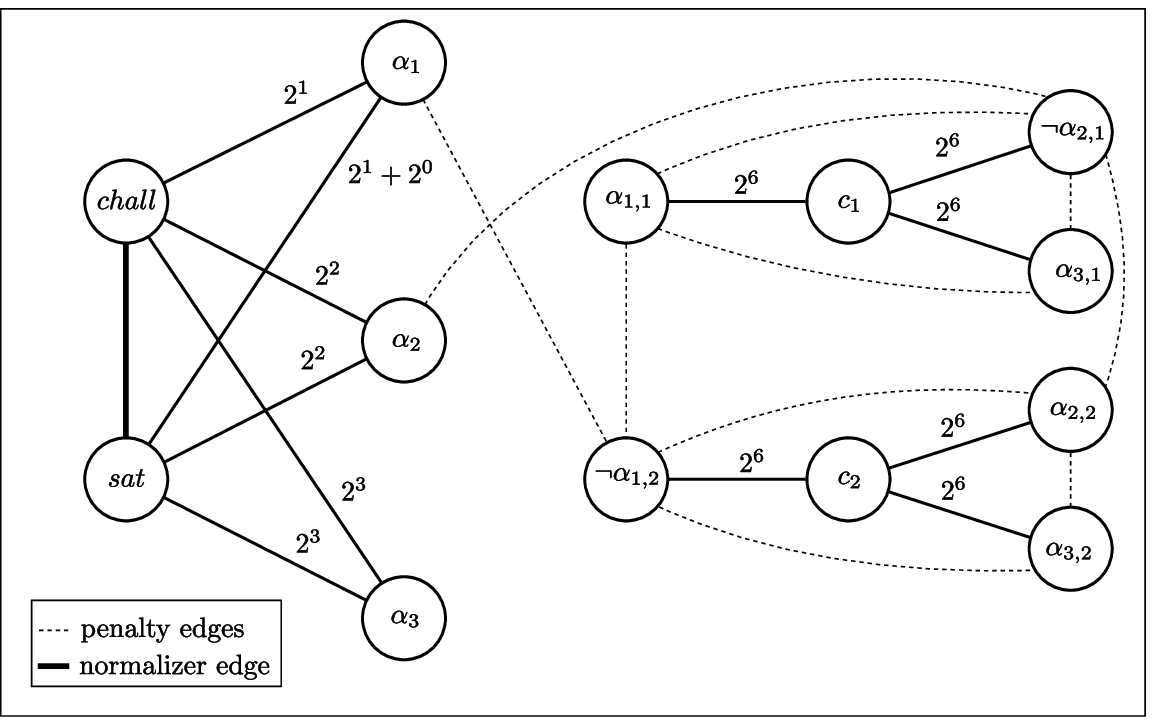}
\caption{The game $\qtugameKernel(\widehat{\phi})$, where $\widehat{\phi}= (\alpha_1\lor \lnot \alpha_2 \lor
\alpha_3) \land (\lnot \alpha_1\lor \alpha_2 \lor \alpha_3)$.}\label{fig:graphgameK}
\end{figure}

Let $\phi=c_1\wedge\dots\wedge c_m$ be a \textbf{3CNF} Boolean formula, that is, a Boolean formula in conjunctive normal form over the set
$\{\alpha_1,\dots,\alpha_n\}$ of variables that are lexicographically ordered (according to their indices) where each clause contains three
literals (positive or negated variables) at most.
Assume, w.l.o.g., that there is a clause in $\phi$ containing at least two literals. Based on $\phi$, we build
in polynomial time the weighted graph $\qtugameKernel(\phi)=\tuple{(N_\K,E_\K),w}$ such that (see
Figure~\ref{fig:graphgameK} for an illustration):

\begin{itemize}
\item The set $N_\K$ of nodes (i.e, players) includes: a \emph{variable player} $\alpha_i$, for each variable $\alpha_i$ in $\phi$; a
    \emph{clause player} $c_j$, for each clause $c_j$ in $\phi$; a \emph{literal player} $\ell_{i,j}$ (either $\ell_{i,j}=\alpha_{i,j}$ or
    $\ell_{i,j}=\neg \alpha_{i,j}$), for each literal $\ell_{i}$ ($\ell_{i}=\alpha_{i}$ or $\ell_i=\neg \alpha_{i}$, respectively)
    occurring in $c_j$; and, two special players ``$chall$'' and ``$sat$''.

\item The set $E_\K$ consists of the following three types of edges.
\begin{description}
\item[\textbf{(Positive edges):}] an edge $\{c_j,\ell_{i,j}\}$ with $w{(\{c_j,\ell_{i,j}\})}=2^{n+3}$, for each literal $\ell_i$ occurring in
    $c_j$; an edge $\{chall,\alpha_i\}$ with $w(\{chall,\alpha_i\})=2^i$, for each $1 \leq i \leq n$; an edge $\{sat,\alpha_i\}$ with
    $w(\{sat,\alpha_i\})=2^i$, for each $2 \leq i \leq n$; the edge $\{sat,\alpha_1\}$ with $w(\{sat,\alpha_1\})=2^1+2^0$.

  \item[\textbf{(``Penalty'' edges):}] an edge  $\{\ell_{i,j},\ell_{i',j}\}$ with $w{(\{\ell_{i,j},\ell_{i',j}\})}=-2^{m+n+7}$, for each pair of
      literals $\ell_i$ and $\ell_{i'}$ occurring in $c_j$; an edge $\{\alpha_{i,j},\neg \alpha_{i,j'}\}$ with $w(\{\alpha_{i,j},\linebreak[0]\neg
      \alpha_{i,j'}\})=-2^{m+n+7}$, for each variable $\alpha_i$ occurring positively in $c_j$ and negated in $c_{j'}$; an edge
      $\{\alpha_{i},\neg \alpha_{i,j}\}$ with $w(\{\alpha_{i},\neg \alpha_{i,j}\})=-2^{m+n+7}$, for each variable $\alpha_i$ occurring
      negated in $c_{j}$.

  \item[\textbf{(``Normalizer'' edge):}] the edge $\{chall,sat\}$, for which we assign the weight $w(\{chall,sat\})=$ $1-\sum_{e\in E_\K\mid e\neq\{chall,sat\}}w(e)$.
\end{description}
\end{itemize}

Note that the size of the representation of all the weights is polynomial in the number of variables and clauses of $\phi$. Two
crucial properties of the above construction are stated in the following lemma.

\begin{lemma}\label{lem:kernel}
Let $\qtugameKernel(\phi)=\tuple{(N_\K,E_\K),w}$ be the graph game associated with the \emph{\textbf{3CNF}}
formula $\phi$. Then:
\begin{itemize}
\item[{\em (A)}] $w(\{chall,sat\})\geq D+1$; and,

\item[{\em (B)}] $D+w(e)<0$, for each penalty edge $e\in E_\K$,
\end{itemize}
where $D=\max_{\{chall,sat\}\not\subseteq S\subseteq N} v(S)$ denotes the maximum worth over all the coalitions not covering the edge
$\{chall,sat\}$.
\end{lemma}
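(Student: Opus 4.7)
The plan is to bound the maximum worth $D$ by the total weight of the positive-weight edges (excluding the normalizer) and then to compare this bound to both the normalizer weight and the common penalty weight $-2^{m+n+7}$. The key observation is that any coalition $S$ with $\{chall,sat\}\not\subseteq S$ fails to cover the normalizer edge, so $v(S)$ is a sum of weights of non-normalizer edges; since each penalty edge has a strongly negative weight, the maximum of $v(S)$ over such $S$ is attained by including only positive-weight edges. This yields $D\le P^+$, where $P^+$ denotes the total weight of the positive non-normalizer edges.

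Next I would compute $P^+$ by a direct count: at most $3m$ clause--literal edges of weight $2^{n+3}$, the $n$ edges from $chall$ to variables contributing $\sum_{i=1}^n 2^i = 2^{n+1}-2$, and the $n$ edges from $sat$ to variables contributing $(2^1+2^0)+\sum_{i=2}^n 2^i = 2^{n+1}-1$. Summing gives
\begin{align*}
P^+ \;\le\; 3m\cdot 2^{n+3} + 2^{n+2} - 3.
\end{align*}
A routine comparison then shows $2P^+<2^{m+n+7}$ for every $m\ge 1$, since the inequality reduces (after dividing by $2^{n+3}$) to $6m+1\le 16\cdot 2^m$, which holds trivially for all $m\ge 1$.

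From this, part (B) is immediate: $D\le P^+<2^{m+n+7}$ forces $D+w(e)=D-2^{m+n+7}<0$ for every penalty edge $e$. For part (A), the working assumption that some clause contains at least two literals guarantees the existence of at least one $\{\ell_{i,j},\ell_{i',j}\}$-type penalty edge, hence the total penalty weight $P^-$ satisfies $|P^-|\ge 2^{m+n+7}\ge 2P^+$. By the very definition of the normalizer weight,
\begin{align*}
w(\{chall,sat\}) \;=\; 1 - P^+ - P^- \;=\; 1 + (|P^-| - P^+) \;\ge\; 1 + P^+ \;\ge\; D+1,
\end{align*}
which is what part (A) asserts.

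The main obstacle is essentially arithmetic bookkeeping: one must verify that the exponent $m+n+7$ used for the penalty weights has been scaled large enough to dominate \emph{twice} the sum of all positive edges, and that the constant offsets $2^0,2^1$ added to the $sat$--$\alpha_1$ edge (needed elsewhere in the reduction) are absorbed harmlessly into the same bound. Once these magnitude comparisons are in place, both (A) and (B) reduce to one-line computations, and no delicate combinatorial reasoning is required.
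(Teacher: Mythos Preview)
Your proposal is correct and follows essentially the same route as the paper: bound $D$ by the total positive non-normalizer weight $P^+$, verify the arithmetic inequality $2P^+<2^{m+n+7}$, use the existence of at least one penalty edge to lower-bound $|P^-|$, and read off (A) and (B). The only differences are cosmetic---your explicit count of $P^+$ and the reduction to $6m+1\le 16\cdot 2^m$ are slightly tighter than the paper's cruder bound $P\le 2^{m+n+6}$, but the structure of the argument is identical.
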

\begin{proof}
Let $P=\sum_{e\in E_\K\mid e\neq\{chall,sat\},w(e)>0}w(e)$ be the sum of all the positive edges, but the normalizer one, in
$\qtugameKernel(\phi)$.
Let us firstly observe that:
\[P\leq 3\times m\times 2^{n+3}+2\times\sum_{i=1}^n2^i+2^0\leq 2^{m+n+5}+2^{n+2}+2^0\leq  2^{m+n+6}.\]
Thus, $2^{m+n+7}\geq 2\times P$ holds. Moreover, observe that $\qtugameKernel(\phi)$ contains at least one penalty edge, since w.l.o.g.\ there
is a clause in $\phi$ containing at least two literals.
Hence, $w(\{chall,sat\})=1-\sum_{e\in E_\K\mid e\neq\{chall,sat\}}w(e)\geq 1-P+2^{m+n+7}$. It follows that $w(\{chall,sat\})\geq 1-P+2\times
P=1+P$. Eventually, $P\geq D$ holds by definition of $D$ and, therefore, $w(\{chall,sat\})\geq 1+D$, which proves {(A)}.

As for {(B)}, given that $D>0$ and $P\geq D$, we may note that $2^{m+n+7}\geq 2\times P$ implies $2^{m+n+7}>D$.
\end{proof}

Based on the above properties, we can now prove the main result.

\begin{theo}\label{theo:hardnessKernelCheckingTU_Graph}
On the class $\C(\graph)$ of graph games, \KernelMembership\ is \DeltaPh2 (even if the given payoff vector is an
imputation).
\end{theo}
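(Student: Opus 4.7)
The plan is to reduce from the $\Delta_2^P$-complete problem that, given a satisfiable 3CNF formula $\phi$ on lexicographically ordered variables $\alpha_1,\ldots,\alpha_n$, asks whether $\alpha_1$ is set to \emph{false} in the lex-max satisfying assignment (with higher indices treated as more significant bits). Given such $\phi$, I associate the graph game $\qtugameKernel(\phi)$ already constructed together with the imputation $x$ defined by $x_{chall} = x_{sat} = v(N)/2$ and $x_i = 0$ for every other player. The normalizer edge forces $v(N)=1$, so $x$ is indeed an imputation (efficient, and $x_i \geq v(\{i\}) = 0$ throughout). The goal is to prove that $x \in \kernel{\qtugameKernel(\phi)}$ if and only if $\alpha_1$ is false in the lex-max satisfying assignment.

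First, I would observe that the kernel condition collapses to a single equality. Since $x_j = 0 = v(\{j\})$ for every $j \notin \{chall, sat\}$, the defining implication of the kernel is automatically satisfied whenever such a $j$ is the ``loser.'' Because $x_{chall}, x_{sat} > 0$, the only remaining constraint is $s_{chall,sat}(x) = s_{sat,chall}(x)$.

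Second, I would use Lemma~\ref{lem:kernel} to characterize the coalitions realizing these surpluses. Part~(B) ensures that any coalition $S$ not covering the edge $\{chall, sat\}$ and containing some penalty edge has value at most $P + w(e) < 0$, where $P$ bounds the sum of positive non-normalizer edges, whereas a trivial consistent coalition already achieves non-negative value. Therefore the max-excess coalitions in $\mathcal{I}_{chall,sat}$ and $\mathcal{I}_{sat,chall}$ must be penalty-free. Combined with the way penalty edges were placed (between conflicting literals of the same clause, between literal copies of a variable with opposite polarities, and between a variable player and its negated literal copies), this forces such a coalition to have the form $\{chall\} \cup V_T \cup L_T \cup C_T$ (respectively with $sat$), where $V_T$ encodes a truth assignment $T$, $L_T$ consists of literal players chosen consistently with $T$ and from distinct clauses, and $C_T$ is the set of clauses fully covered. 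Its worth is $|C_T|\cdot 2^{n+3} + \sum_{i\in T} 2^i$ on the $chall$-side, and the same quantity plus $[\alpha_1 \in T]$ on the $sat$-side; since missing a single clause costs $2^{n+3}$ while all variable edges together contribute at most $\sum_{i=1}^n 2^i < 2^{n+3}$, the maximum is attained only on assignments satisfying every clause of $\phi$.

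Finally, let $T^\star$ denote the lex-max satisfying assignment and $L=\sum_{i\in T^\star} 2^i$. On the $chall$-side the maximum coalition value is $m\cdot 2^{n+3} + L$, realized by $T^\star$. On the $sat$-side it is $m\cdot 2^{n+3} + L + [\alpha_1 \in T^\star]$: if $\alpha_1 \in T^\star$ then $T^\star$ itself collects the extra $1$ hidden in $w(\{sat,\alpha_1\}) = 2^1+2^0$; if $\alpha_1 \notin T^\star$ then any competing satisfying assignment containing $\alpha_1$ must flip some higher-indexed variable and hence has $\sum_i 2^i \le L-1$, so even with the $+1$ bonus it only matches $L$. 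Consequently
\[
s_{sat,chall}(x) - s_{chall,sat}(x) = [\alpha_1 \in T^\star],
\]
and the two surpluses coincide precisely when $\alpha_1 \notin T^\star$, yielding the desired reduction. The main obstacle in carrying this out rigorously is the second step: establishing that penalty edges truly rule out every inconsistent candidate coalition uniformly, a task that relies crucially on Lemma~\ref{lem:kernel}(B) and on cross-checking the three families of penalty edges so that every type of inconsistency (clash within a clause, between two clauses, and between a variable and its negated literal copies) is indeed penalised. Once this combinatorial verification is in place, the remaining arithmetic with binary expansions is routine.
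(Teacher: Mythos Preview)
Your reduction follows the paper's approach almost exactly---same construction $\qtugameKernel(\phi)$, same use of Lemma~\ref{lem:kernel}, same identification of maximal-excess coalitions with satisfying assignments---but there is a genuine gap in the step where you ``collapse'' the kernel condition.

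You argue that, because $x_j=v(\{j\})$ for every $j\notin\{chall,sat\}$, the only surviving kernel constraint is $s_{chall,sat}(x)=s_{sat,chall}(x)$. This does not follow. The implication $s_{i,j}(x)>s_{j,i}(x)\Rightarrow x_j=v(\{j\})$ is indeed vacuous when $j$ is a non-special player, but your imputation has \emph{both} $x_{chall}>v(\{chall\})$ and $x_{sat}>v(\{sat\})$, so you must still verify $s_{i,chall}(x)\le s_{chall,i}(x)$ and $s_{i,sat}(x)\le s_{sat,i}(x)$ for \emph{every} non-special player $i$. These constraints are not automatic and need an argument: take $T=\{chall,sat\}\in\mathcal{I}_{chall,i}\cap\mathcal{I}_{sat,i}$ and use Lemma~\ref{lem:kernel}(A) to get $e(T,x)=w(\{chall,sat\})-1\ge D\ge e(S,x)$ for any $S$ missing one of the special players. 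This is precisely the content of Property~\ref{theo:hardnessKernelCheckingTU_Graph}.(1) in the paper's proof, which you have omitted.

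Two smaller points. First, the paper uses the imputation $x_{sat}=1$, $x_{chall}=0$ (rather than splitting $1/2$--$1/2$), which is slightly cleaner: then only $sat$ satisfies $x_j>v(\{j\})$, so there is only one family of constraints to dispose of, and the kernel condition reduces to a single inequality $s_{chall,sat}(x)\le s_{sat,chall}(x)$ rather than an equality. The paper's reduction therefore lands on ``$\alpha_1$ is true in the lex-max assignment'' rather than its complement; of course both are $\DeltaP{2}$-complete since $\DeltaP{2}$ is closed under complement. Second, your claim that a competing assignment with $\alpha_1$ true ``must flip some higher-indexed variable'' is imprecisely phrased (it need only differ somewhere above bit~$1$), but the conclusion $\sum_i 2^i\le L-1$ is correct because distinct assignments have distinct binary expansions and $T^\star$ maximises the sum.
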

\begin{proof}
Let $\phi=c_1\wedge\dots\wedge c_m$ be a satisfiable \textbf{3CNF} formula over a set
$\{\alpha_1,\dots,\alpha_n\}$ of variables that are lexicographically ordered (according to their indices).
Deciding whether $\alpha_1$ (that is the lexicographically least significant variable) is true in the
\emph{lexicographically maximum satisfying assignment} for $\phi$ is a well-known \DeltaPc2{}
problem~\citep{Krentel:ComplexityOptimizationProblems}.

Consider the graph game $\qtugameKernel(\phi)=\tuple{(N_\K,E_\K),w}$, and the imputation $x$ that assigns $0$ to
all players of $\qtugameKernel(\phi)$, but to $sat$, which receives $1$. In fact, $x$ is an imputation since
$v(N)=1$ because of the weight of the edge $\{chall,sat\}$ (recall that $w(\{chall,sat\})=1-\sum_{e\in E_\K\mid
e\neq\{chall,sat\}}w(e)$).

Beforehand, observe that, by Definition~\ref{def:kernelTU} and since $sat$ is the only player that receives in $x$ a payoff strictly greater
than her worth as a singleton coalition, $x\in\kernel{\qtugameKernel(\phi)}$ if and only if $\max_{S\in\mathcal{I}_{i,sat}}e(S,x)\leq
\max_{S\in\mathcal{I}_{sat,i}}e(S,x)$, for each player $i\neq sat$. The following property further restricts the coalitions of interest and, indeed, suffices to conclude that $x\in\kernel{\qtugameKernel(\phi)}$ if and only if
\begin{equation}\label{eq:kernel}
\max_{S\in\mathcal{I}_{chall,sat}}e(S,x)\leq \max_{S\in\mathcal{I}_{sat,chall}}e(S,x).
\end{equation}

\begin{description}
\item[\textbf{Property~\ref{theo:hardnessKernelCheckingTU_Graph}.(1):}] \emph{For each player $i\not\in\{sat,chall\}$, 
    $\max_{S\in\mathcal{I}_{i,sat}}e(S,x)\leq\max_{S\in\mathcal{I}_{sat,i}}e(S,x)$.}\\
Let $S$ be an arbitrary coalition in $\mathcal{I}_{i,sat}$
with $i\neq chall$,
and consider the coalition $T=\{chall,sat\}\in \mathcal{I}_{sat,i}$. Note that
$e(T,x)=v(T)-x(T)=v(\{chall,sat\})-1$ while $e(S,x)=v(S)$. By Lemma~\ref{lem:kernel}.(A), we know that $v(\{chall,sat\})\geq v(S)+1$. Thus, $e(T,x)\geq e(S,x)$ holds
for all coalitions $S\in\mathcal{I}_{i,sat}$ with $i\neq chall$.
\end{description}

Now, we are going to characterize the structure of the two terms $\max_{S\in\mathcal{I}_{chall,sat}}e(S,x)$ and
$\max_{S\in\mathcal{I}_{sat,chall}}e(S,x)$ occurring in Equation~\eqref{eq:kernel} by establishing a connection
with satisfying assignments for $\phi$. In particular, for any truth assignment $\sigma$, we denote by
$\sigma\models \phi$ the fact that $\sigma$ satisfies $\phi$, and by $\sigma(\alpha_i)=\mathtt{true}$ (resp.,
$\sigma(\alpha_i)=\mathtt{false}$) the fact that $\alpha_i$ evaluates to true (resp., false) in $\sigma$.

\begin{description}
\item[\textbf{Property~\ref{theo:hardnessKernelCheckingTU_Graph}.(2):}] $\max_{S\in\mathcal{I}_{chall,sat}}e(S,x)=m\times
    2^{n+3}+\max_{\sigma\models \phi} \sum_{\alpha_i\mid \sigma(\alpha_i)=\mathtt{true}} 2^i.$\\
Let us firstly note that $\max_{S\in\mathcal{I}_{chall,sat}}e(S,x)=\max_{S\in\mathcal{I}_{chall,sat}}v(S)$, by construction of the
imputation $x$. Let $S_*$ be the coalition having maximum worth over all the coalitions in $\mathcal{I}_{chall,sat}$. Because of
Lemma~\ref{lem:kernel}.(B) and since $v(\{chall\})=0$, $S_*$ cannot cover any penalty edge, for otherwise $S_*$ would not be a coalition
with maximum worth amongst those belonging to $\mathcal{I}_{chall,sat}$. Thus, \emph{(i)} for each variable player $\alpha_i\in S_*$, no
literal player of the form $\neg \alpha_{i,j}$ is in $S_*$; \emph{(ii)} for each clause player $c_j\in S_*$, at most one literal player of
the form $\ell_{i,j}$ is in $S_*$; and, \emph{(iii)} for each variable $\alpha_i$, $S_*$ contains no pair of literal players of the form
$\alpha_{i,j}$ and $\neg \alpha_{i,j'}$.

It follows that the worth of $S_*$ is such that: $v(S_*)=|C|\times 2^{n+3}+\sum_{\alpha_i\in S_*} 2^i$, where $C$ is the set of the clause
players $c_j\in S_*$ for which exactly one literal player $\ell_{i,j}$ is in $S_*$; in particular, recall that $2^i$ is the weight
associated with the edge $\{chall,\alpha_i\}$, while $2^{n+3}$ is the weight associated with each edge of the form $\{c_j,\ell_{i,j}\}$.
Now, let, $\widehat{\sigma}$ be a truth assignment such that $\widehat{\sigma}(\alpha_i)=\mathtt{true}$ (resp.,
$\widehat{\sigma}(\alpha_i)=\mathtt{false}$) if $\alpha_{i,j}$ (resp., $\neg \alpha_{i,j}$) occurs in $S_*$ for some clause $c_j$. Note
that $\widehat{\sigma}$ may be a partial assignment, over a set of variables $\widehat{\alpha}\subseteq\{\alpha_1,\dots,\alpha_n\}$;
however, because of \emph{(iii)} above, $\widehat{\sigma}$ is non\nbdash contradictory and satisfies all the clauses whose players are in $C$. Eventually,
since $\phi$ is satisfiable and since $2^{n+3}>\sum_{i=1}^n 2^i$, because of \emph{(ii)} $S_*$ will certainly contain all the $m$ clause
players (i.e., $\widehat{\sigma}$ is a satisfying assignment for $\phi$). That is, $v(S_*)=m\times 2^{n+3}+\sum_{\alpha_i\in S_*} 2^i$.

Observe now that if $\alpha_{i,j}$ is in $S_*$, then $\alpha_i$ is in $S_*$ as well, since this leads to
maximize the worth of $S_*$. Moreover, if $\neg \alpha_{i,j}$ is in $S_*$, then $\alpha_i$ is not in $S_*$
because of \emph{(i)}. Thus, the assignment $\sigma_{S_*}$ such that $\sigma_{S_*}(\alpha_i)=\mathtt{true}$
(resp., $\sigma_{S_*}(\alpha_i)=\mathtt{false}$) if $\alpha_i$ occurs (resp., not occurs) in $S_*$ coincides
with $\widehat{\sigma}$ when restricted over the domain of the variables in $\widehat{\alpha}$. Therefore,
$\sigma_{S_*}$ is, in turn, a satisfying assignment, and we have:

\[v(S_*)= m\times 2^{n+3}+\sum_{\alpha_i\mid \sigma_{S_*}(\alpha_i)=\mathtt{true}} 2^i\leq m\times
2^{n+3}+\max_{\sigma\models \phi} \sum_{\alpha_i\mid \sigma(\alpha_i)=\mathtt{true}} 2^i.\]

We conclude the proof by showing that the above inequality cannot be strict. Indeed, assume, for the sake of contradiction, that a
satisfying assignment $\overline{\sigma}$ exists for $\phi$ such that $v(S_*)<m\times 2^{n+3}+\sum_{\alpha_i\mid
\overline{\sigma}(\alpha_i)=\mathtt{true}} 2^i$. Based on $\overline{\sigma}$, we can build a coalition $\overline{S}$ such that: \emph{(a)}
$\{chall,c_1,\dots,c_m\}\subseteq \overline{S}$; \emph{(b)} $\alpha_i\in \overline{S}$, for each $\alpha_i$ such that
$\overline{\sigma}(\alpha_i)=\mathtt{true}$; \emph{(c)} exactly one literal $\ell_{i,j}$ is in $\overline{S}$, for each clause $c_j$ that is satisfied
by $\ell_{i,j}$ according to the truth values defined in $\overline{\sigma}$; \emph{(d)} no further player is in $\overline{S}$.

Given that $\overline{\sigma}$ is a satisfying assignment, no penalty edge is covered by $\overline{S}$. In particular,
$v(\overline{S})=m\times 2^{n+3}+\sum_{\alpha_i\in \overline{S}} 2^i$ and, hence, $v(\overline{S})=m\times 2^{n+3}+\sum_{\alpha_i\mid
\overline{\sigma}(\alpha_i)=\mathtt{true}} 2^i$. But, this is not possible since we would have a coalition
$\overline{S}\in\mathcal{I}_{chall,sat}$ such that $v(\overline{S})>v(S_*)=\max_{S\in\mathcal{I}_{chall,sat}}v(S)$.

\item[\textbf{Property~\ref{theo:hardnessKernelCheckingTU_Graph}.(3):}] {\small
$\max_{S\in\mathcal{I}_{sat,chall}}e(S,x)= m\times 2^{n+3}+\max_{\sigma\models \phi} (|{ \{\alpha_1\mid
\sigma(\alpha_1)}=\mathtt{true}\}|+\sum_{\alpha_i\mid
    \sigma(\alpha_i)=\mathtt{true}} 2^i)-1.$}\\

\vspace{-2mm}The property can be proven precisely along the same line of reasoning as in the proof of {Property
\ref{theo:hardnessKernelCheckingTU_Graph}.(2)}. The differences are that: $x(S)=1$ holds for each $S$ with
$sat\in S$; and that the weight associated with the edge $\{sat,\alpha_i\}$ is $2^i$, for each $2\leq i\leq n$,
while it is $2^1+2^0$ for the case where $i=1$. In particular, $|\{\alpha_1\mid \sigma(\alpha_1)=\mathtt{true}\}|$
precisely encodes the fact that an unitary weight has to be added to any assignment where $\alpha_1$ evaluates
to true.
\end{description}

\noindent We can now rewrite Equation~\eqref{eq:kernel} in the light of the above two properties, and conclude that
$x\in\kernel{\qtugameKernel(\phi)}$ if and only if:
\[1+\max_{\sigma\models \phi} \sum_{\alpha_i\mid \sigma(\alpha_i)=\mathtt{true}} 2^i\leq
\max_{\sigma\models \phi} \left(\sum_{\alpha_i\mid \sigma(\alpha_i)=\mathtt{true}} 2^i+|\{\alpha_1\mid
\sigma(\alpha_1)=\mathtt{true}\}|\right),\]

\noindent that is, $x\in\kernel{\qtugameKernel(\phi)}$ if and only if $\alpha_1$ is true in the  {lexicographically maximum satisfying
assignment} for $\phi$.
\end{proof}

By the same argument used in the proof of Proposition~\ref{prop:hardness-core}, the above result can be immediately extended to all representations at least as expressive (and succinct) as graph games.

\begin{corol}\label{cor:hardness-kernel}
Let $\R$ be any compact representation such that $\graph\precsim_e \R$ (e.g., $\R=\marginal$). On the class $\C(\R)$, \KernelMembership\ is \DeltaPh2.
\end{corol}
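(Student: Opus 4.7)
The plan is to replicate the composition-of-reductions strategy used in Proposition~\ref{prop:hardness-core}. Theorem~\ref{theo:hardnessKernelCheckingTU_Graph} gives a polynomial-time reduction $f_1$ from the $\DeltaPc{2}$ problem of deciding whether $\alpha_1$ is true in the lexicographically maximum satisfying assignment of a 3CNF formula to \KernelMembership\ on $\C(\graph)$; on input $\phi$, this reduction outputs a pair $(\qtugameKernel(\phi),x)$ with $x$ an imputation of the graph game $\qtugameKernel(\phi)$. The hypothesis $\graph \precsim_e \R$ supplies, by definition, a polynomial-time function $f_2 \in \FP$ translating any graph game $\xi^{\graph}(\G)$ into an equivalent game $f_2(\xi^{\graph}(\G)) \in \C(\R)$ having the same player set $N$ and the same worth function. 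The candidate polynomial-time reduction to \KernelMembership\ on $\C(\R)$ is then the composition that sends $\phi$ to the pair $(f_2(\xi^{\graph}(\qtugameKernel(\phi))), x)$.

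The only non-trivial point to verify is the invariance of the kernel under worth-preserving translations. This is immediate from Definition~\ref{def:kernelTU}: the kernel depends solely on $N$ and on the worth function $v(\cdot)$, via the excesses $e(S,x)=v(S)-x(S)$ and the surpluses $s_{i,j}(x)=\max_{S\in\mathcal{I}_{i,j}} e(S,x)$. Since $f_2$ preserves both $N$ and $v(\cdot)$, at any payoff vector $x$ the excesses, the surpluses, and the singleton values $v(\{j\})$ coincide in $\xi^{\graph}(\G)$ and in $f_2(\xi^{\graph}(\G))$, so the two kernels are equal. The same observation shows that efficiency ($\sum_{i\in N}x_i = v(N)$) and individual rationality ($x_i \geq v(\{i\})$) are preserved as well, so the image of an imputation is still an imputation — this is exactly what is needed to inherit the stronger form of hardness, which holds ``even if the given payoff vector is an imputation.''

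There is no serious technical obstacle here: the argument is purely structural, and nothing more than worth-function invariance under $f_2$ is required. The only minor bookkeeping is to unfold the Section~\ref{sec:compact} convention that identifies a game $\G$ with its encoding $\xi^{\R}(\G)$; once this is done, $f_1$ followed by $f_2$ is plainly polynomial-time computable, and by the kernel-preservation observation above it correctly reduces the source $\DeltaPc{2}$ problem to \KernelMembership\ on $\C(\R)$, yielding $\DeltaP{2}$-hardness as claimed.
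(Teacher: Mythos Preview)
Your proposal is correct and follows exactly the approach the paper takes: it simply invokes the same composition-of-reductions argument as in Proposition~\ref{prop:hardness-core}, using Theorem~\ref{theo:hardnessKernelCheckingTU_Graph} as the source hardness result and the translation $f_2$ from $\graph\precsim_e\R$. You even spell out the kernel-invariance and imputation-preservation details that the paper leaves implicit.
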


\smallskip

\subsection{Membership on FNP Representation Schemes}

Differently from the hardness result proven above, the corresponding membership result is routine.
 Anyway, the
proof is reported below, for the sake of completeness.

\begin{theo}\label{theo:membershipKernelCheckingTU}
Let $\mathcal{R}$ be a non-deterministic polynomial\nbdash time compact representation. On the class $\C(\R)$,
\KernelMembership\ is feasible in \DeltaP2.
\end{theo}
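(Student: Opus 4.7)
The plan is to give a deterministic polynomial-time algorithm that uses an $\NP$ oracle, thereby placing \KernelMembership{} in $\PP^{\NP}=\DeltaP{2}$. By Definition~\ref{def:kernelTU}, $x\in\kernel{\game}$ if and only if (i) $x$ is an imputation, i.e., $x(N)=v(N)$ and $x_i\geq v(\{i\})$ for every $i\in N$; and (ii) for every ordered pair $(i,j)$ with $i\neq j$, either $x_j=v(\{j\})$ or $s_{i,j}(x)\leq s_{j,i}(x)$. Since there are only $n$ singleton worths and $O(n^2)$ pairs to inspect, it suffices to show that (a) the worth $v(T)$ of any explicitly given coalition $T$, and (b) the surplus $s_{i,j}(x)$ for any pair $i\neq j$, are both computable in $\FP^{\NP}$.

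Point (a) is handled by bit-wise extraction: since $v^{\R}\in\NPSV$ is single-valued and polynomially balanced, the $k$-th bit of $v^{\R}(\xi^{\R}(\game),T)$ is determined by the $\NP$ query ``does the non-deterministic transducer have an accepting computation on input $(\xi^{\R}(\game),T)$ that writes a $1$ in position $k$?''. Polynomially many such queries yield $v(T)$ exactly.

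The main step is (b). The claim is that the threshold problem ``is $s_{i,j}(x)\geq t$?'' lies in $\NP$: a non-deterministic machine guesses a coalition $S\in\mathcal{I}_{i,j}$, a candidate value $w$, and a polynomial-size certificate $y$ that $\langle(\xi^{\R}(\game),S),w\rangle\in \textit{graph}(v^{\R})$; this last guess is legitimate because $\R$ being an $\NP$ representation means $v^{\R}\in\NPSV$, so $\textit{graph}(v^{\R})\in\NP$. The machine then verifies $y$ in deterministic polynomial time and accepts iff $w-x(S)\geq t$. Because $v(S)-x(S)$ is a rational of polynomial bit length, binary search on $t$ pins down $s_{i,j}(x)$ exactly with polynomially many such $\NP$ oracle calls. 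Collecting $v(\{j\})$, $s_{i,j}(x)$, and $s_{j,i}(x)$ for all relevant $i,j$, the deterministic controller finishes by verifying the kernel condition through straightforward arithmetic comparisons.

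The only subtlety is the treatment of the non-deterministic worth function $v^{\R}$ inside the $\NP$ query of step (b); this is overcome by nesting the $\NPSV$ witness within the same non-deterministic guess that selects $S$, mirroring the ``guess value and certificate'' technique already used in the proof of Theorem~\ref{theo:membership} for the core.
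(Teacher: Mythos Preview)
Your proof is correct and follows essentially the same approach as the paper: verify the imputation condition by computing the relevant worths, then for each pair $(i,j)$ obtain $s_{i,j}(x)$ via binary search driven by an $\NP$ threshold oracle that guesses a coalition $S\in\mathcal{I}_{i,j}$ together with its worth and an $\NPSV$ certificate. The only cosmetic difference is that you extract individual worths $v(T)$ by bit-wise $\NP$ queries, whereas the paper phrases this as a single guess-and-verify; both are standard ways to evaluate an $\NPSV$ function inside $\FP^{\NP}$ and do not change the argument.
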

\begin{proof}
Let $\game\in\C(\mathcal{R})$ be a coalitional game, and $x$ be a payoff vector. Note first that, in order to check whether $x$ is actually an imputation, we have just to compute the worths associated to all the singleton coalitions plus the worth associated to the whole set of players, which can be done in \DeltaP2---just recall that each worth can be computed in non-deterministic polynomial\nbdash time. More in detail, for each player $i\in \{1,\dots,n\}$, we guess in polynomial time the value $w_i=v^\mathcal{R}(\xi^\mathcal{R}(\game),\{i\})$ and a suitable certificate $c_{\{i\}}$, and then we check in (deterministic) polynomial time that in fact $\tuple{(\xi^\mathcal{R}(\game),\{i\}),w_i}\in graph(v^\mathcal{R})$, exploiting the certificate $c_{\{i\}}$ (see the proof of Theorem~\ref{theo:membership}).

Assume now that $x$ is an
imputation. Then, we observe that for each pair of players $i$ and $j$, we can compute the value $s_{i,j}(x)$ by
means of a binary search over the range of the possible values for the worth functions, by using an \NP{}
oracle. Indeed, for any value $h$ in this range, we can decide in $\NP$ whether there is a coalition $S\in \mathcal{I}_{i,j}$ such
that $e(S,x)>h$, by guessing the coalition $S$, its worth $w=v^\mathcal{R}(\xi^\mathcal{R}(\game),S)$ together with a (polynomial) certificate $c_S$, and then by checking in polynomial time that in fact
$S\in \mathcal{I}_{i,j}$, that
 $\tuple{(\xi^\mathcal{R}(\game),S),w}\in graph(v)$ (by exploiting $c_S$), and that $e(S,x)>h$, where $e(S,x)=w-x(S)$.

Finally, note that, being $\R$ a non\nbdash deterministic polynomial\nbdash time compact representation, $v^\mathcal{R}(\cdot,\cdot)$ is polynomially balanced, meaning that there exists a fixed integer $k$ such that, for every game $\game$ in $\C(\mathcal{R})$, $||v^\mathcal{R}(\xi^\mathcal{R}(\game),S)||\leq||\tuple{\xi^\mathcal{R}(\game),S}||^k$ for every coalition $S$, and hence the worth value of every coalition in $\game$ (of every $\game\in\C(\R)$) cannot exceed the largest (positive or negative) value representable in the polynomial size $||\tuple{\xi^\mathcal{R}(\game),S}||^k$.
By this, the above binary search allows us to find the maximum excess in at most polynomially many steps.
Therefore, by using polynomially\nbdash many oracle
calls, we may check in polynomial time that for each pair of distinct players $i$ and $j$ such that $x_j\neq v(\{j\})$, it is the case that
$s_{i,j}(x)\leq s_{j,i}(x)$.
\end{proof}

\smallskip

From this result and Corollary~\ref{cor:hardness-kernel}, we immediately get the following completeness result.

\begin{corol}
Let $\R$ be any non-deterministic polynomial\nbdash time compact representation such that $\graph\precsim_e \R$ (e.g., $\R=\marginal$). On the class $\C(\R)$, \KernelMembership\ is \DeltaPc2.
\end{corol}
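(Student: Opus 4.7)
The plan is to observe that this corollary is an immediate consequence of the two results just established in this subsection: Corollary~\ref{cor:hardness-kernel} and Theorem~\ref{theo:membershipKernelCheckingTU}. Since $\DeltaPc{2}$-completeness is, by definition, membership in $\DeltaP{2}$ together with $\DeltaPh{2}$-hardness, no further construction is needed.

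First I would invoke Theorem~\ref{theo:membershipKernelCheckingTU}, which states that, for any non\nbdash deterministic polynomial\nbdash time compact representation $\R$, \KernelMembership\ over $\C(\R)$ is feasible in $\DeltaP{2}$. The hypothesis of the corollary (that $\R$ is an \NP{} representation) is exactly what is required by that theorem, so the membership half is granted for free.

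Next I would apply Corollary~\ref{cor:hardness-kernel}, which says that for any compact representation $\R$ with $\graph\precsim_e \R$, \KernelMembership\ over $\C(\R)$ is $\DeltaPh{2}$. This is precisely the second hypothesis of our corollary, so the hardness half is also immediate.

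Combining the two yields the $\DeltaPc{2}$ claim. The only step that one might regard as non\nbdash trivial is to verify that both hypotheses of the corollary are simultaneously exploitable: the membership argument does not care about expressiveness (it works for every \NP{} representation), while the hardness argument does not care about succinctness beyond the existence of an $\FP$ translation $f$ from $\graph$ into $\R$ (which is exactly what $\graph\precsim_e \R$ guarantees, as used already in the proof of Proposition~\ref{prop:hardness-core}). Since $\marginal$ satisfies both conditions---it is an \FP{}, and hence \NPSV, representation, and $\graph\precsim_e\marginal$ as discussed in Section~\ref{sec:compact}---the example in parentheses is justified without any extra work. Thus the corollary follows directly with no obstacle to overcome.
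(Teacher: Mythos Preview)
Your proposal is correct and matches the paper's own justification essentially verbatim: the paper simply states that the corollary follows immediately from Theorem~\ref{theo:membershipKernelCheckingTU} (membership) and Corollary~\ref{cor:hardness-kernel} (hardness), which is exactly the combination you invoke.
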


\subsection{Tractable Classes of Graph Games}

Many $\NP$-hard problems in different application areas ranging, e.g., from AI~\cite{pear-jeav-97} and Database
Theory~\cite{bern-good-81} to Game theory~\cite{DP06}, are known to be efficiently solvable when restricted to
instances whose underlying structures can be modeled via \emph{acyclic} graphs or \emph{nearly-acyclic} ones,
such as those graphs having \emph{bounded treewidth}~\cite{RS84}. Indeed, on these kinds of instances, solutions
can usually be computed via dynamic programming, by incrementally processing the acyclic (hyper)graph, according
to some of its topological orderings.
In this section, we shall show that this is also the case for the kernel of bounded treewidth graph games.

Our result is established by showing that this concept can be expressed in terms of an optimization problem over
\emph{Monadic Second Order Logic (MSO)} formulas, and by subsequently applying Courcelle's Theorem
\cite{Courcelle90} and its generalization to optimization problems due to Arnborg, Lagergren, and
Seese~\cite{ALS91}. Thus, we first review the concepts of treewidth and MSO, and their relationship as it
appears from Courcelle's Theorem.

\medskip

\noindent \textbf{Treewidth.} A \emph{tree decomposition} of a graph $G=(N,E)$ is a pair $\tuple{T,\chi}$, where
$T=(V,F)$ is a tree, and $\chi$ is a labeling function assigning to each vertex $p\in V$ a set of vertices
$\chi(p)\subseteq N$, such that the following conditions are satisfied:
\begin{enumerate}
\item[(1)] for each node $b$ of $G$, there exists $p\in V$ such that $b\in \chi(p)$;

\item[(2)] for each edge  $(b,d)\in E$, there exists $p\in V$ such that $\{b,d\}\subseteq \chi(p)$; and,

\item[(3)] for each node $b$ of $G$, the set $\{p\in V \mid  b\in \chi(p)\}$ induces a connected subtree.
\end{enumerate}

The \emph{width} of $\tuple{T,\chi}$ is the number $\max_{p\in V}(|\chi(p)|-1)$. The {\em treewidth} of $G$, denoted by $tw(G)$, is the minimum
width over all its tree decompositions. It is well-known that a graph $G$ is acyclic if and only if $tw(G)=1$.
Deciding whether a given graph has treewidth bounded by a fixed natural number $k$ is known to be feasible in linear time~\citep{bodl-96}.

\medskip
\noindent \textbf{MSO.} Monadic Second Order (MSO) logic formulae over graphs are made up of the logical connectors $\vee$, $\wedge$, and $\neg$,
the membership relation $\in$, the quantifiers $\exists$ and $\forall$, vertices variables and vertex sets variables---in addition, it is
often convenient to use symbols like $\subseteq$, $\subset$, $\cap$, $\cup$, and $\rightarrow$ with their usual meaning, as abbreviations.

\citet{Courcelle90} and other authors considered an extension of MSO, called MSO$_2$, where variables for edge sets are also allowed.
The fact that an MSO$_2$ sentence $\phi$ holds over a graph $G$ is denoted by $G\models \phi$.

The relationship between treewidth and MSO$_2$ is illustrated next. For a positive constant $k$, let hereafter $\mathcal{C}_k$ be a class of
graphs having treewidth bounded by $k$.
For a graph $G=(N,E)$, we denote by $||G||$ its \emph{size}, which is measured as the number of its nodes plus the number of its edges, i.e.,
$||G||=|N|+|E|$

\begin{prop}[\citeauthor{Courcelle90}'s Theorem~\cite{Courcelle90}]
Let $\phi$ be a fixed {\em MSO$_2$} sentence.
For each $G\in \mathcal{C}_k$, deciding whether $G\models \phi$ or not is feasible in linear time (w.r.t.\ $||G||$).
\end{prop}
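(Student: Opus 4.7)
The plan follows the classical strategy of reducing MSO$_2$ model checking on bounded-treewidth graphs to emptiness/membership for finite tree automata running on the tree decomposition. First, given $G\in\mathcal{C}_k$, I would compute in linear time a tree decomposition $\langle T,\chi\rangle$ of $G$ of width at most $k$ by invoking Bodlaender's algorithm (already cited in the paper). I would then convert $\langle T,\chi\rangle$ into a \emph{nice} tree decomposition of the same width, still in linear time, in which every internal node is of one of a few fixed types: \textbf{leaf}, \textbf{introduce-vertex}, \textbf{introduce-edge}, \textbf{forget-vertex}, or \textbf{join}. The introduce-edge nodes are what allows edge-set quantification (MSO$_2$ rather than MSO$_1$) to be handled uniformly.

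Next, I would encode the pair $(G,\langle T,\chi\rangle)$ as a labeled ordered tree $\tau_G$ over a finite alphabet $\Sigma_k$ whose symbols record the type of the node together with the ``local structure'' of $\chi(p)$ (the induced subgraph on the at most $k+1$ vertices of the bag, plus an injective naming $\chi(p)\to\{1,\dots,k+1\}$). Since $k$ is fixed, $|\Sigma_k|$ is a constant, and the size of $\tau_G$ is linear in $\|G\|$. Correctness of the encoding rests on the fact that MSO$_2$ over graphs of treewidth at most $k$ faithfully translates to MSO over these labeled trees: each vertex- or edge-set variable of $\phi$ corresponds to labels attached to the nodes of $\tau_G$ that mark its ``restriction'' to the current bag.

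The key step is to translate the fixed MSO$_2$ sentence $\phi$ into a deterministic bottom-up finite tree automaton $\mathcal{A}_\phi$ over $\Sigma_k$ such that $\mathcal{A}_\phi$ accepts $\tau_G$ if and only if $G\models\phi$. I would proceed by structural induction on $\phi$: atomic formulas are decidable locally from the bag information stored in the label; the connectives $\wedge$ and $\vee$ become product constructions; $\neg$ becomes complementation (requiring determinization first); and the set quantifiers $\exists X$ become projection from an enlarged alphabet in which each node carries an extra bit per element of its bag, recording membership in the guessed set $X$. This is exactly the classical Doner--Thatcher--Wright correspondence between MSO-definable tree languages and recognizable tree languages.

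Finally, I would run $\mathcal{A}_\phi$ bottom-up on $\tau_G$; each node is processed in $O(1)$ because the transition table of $\mathcal{A}_\phi$ depends only on $\phi$ and $k$, yielding overall time $O(\|G\|)$. The main obstacle is the inductive construction of $\mathcal{A}_\phi$: the complement/projection alternations can cause a non-elementary blow-up in the number of states as a function of the quantifier alternation depth of $\phi$ and of $k$. Because both $\phi$ and $k$ are fixed, however, this blow-up is absorbed into the constant hidden by the linear-time bound, and the statement follows.
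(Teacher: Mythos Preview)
Your sketch is a correct outline of the standard automata-theoretic proof of Courcelle's Theorem. Note, however, that the paper does not prove this proposition at all: it is stated as a known result, attributed to \cite{Courcelle90}, and used as a black box (together with the optimization extension of Arnborg, Lagergren, and Seese) in the subsequent tractability argument for the kernel on bounded-treewidth graph games. There is therefore no ``paper's own proof'' to compare against; your proposal simply supplies the background that the paper deliberately omits.
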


An important generalization of MSO$_2$ formulae to \emph{optimization} problems was presented by~\citet{ALS91}. Next, we
state a simplified version of these kinds of problems.
Optimization problems are defined over MSO$_2$ formulae containing free variables and over graphs that are weighted on both nodes and edges.

Let $G=\tuple{(N,E),f_N,f_E}$ be a weighted graph where $f_N$ and $f_E$ are the lists of weights associated with
nodes and edges, respectively. Then, $f_N(v)$ (resp., $f_E(e))$ denotes the weight associated with $v\in N$ (resp., $e\in E$).
For the weighted graph $G=\tuple{(N,E),f_N,f_E}$, its size $||G||$ is measured as the sum of $|N|+|E|$ plus all values in the lists $f_N$ and $f_E$. Note that this is equivalent to state that weights are encoded in unary notation and that $||G||$ is measured as the sum of
$|N|+|E|$ plus all the bits that that are necessary to store the lists $f_N$ and $f_E$.

Let $\phi(X,Y)$ be an MSO$_2$ formula over the graph $(N,E)$, where $X$ and $Y$ are the free variables occurring in $\phi$, with $X$ (resp.,
$Y$) being a vertex (resp., edge) set variable. For a pair of interpretations $\tuple{z_N,z_E}$ mapping $X$ to subsets of $N$ and $Y$ to subsets of
$E$, we denote by $\phi[\tuple{z_N,z_E}]$ the MSO$_2$ formula (without free variables) where $X$ and $Y$ are replaced by the sets $z_N(X)$ and $z_E(Y)$, respectively.

A \emph{solution} to $\phi$ over $G$ is a pair of interpretations $\tuple{z_N,z_E}$ such that $(N,E)\models \phi[\tuple{z_N,z_E}]$ holds.
The \emph{cost} of $\tuple{z_N,z_E}$ is the value  $\sum_{x\in z_N(X)}f_N(x)+\sum_{y\in z_E(Y)}f_E(y)$. A solution of minimum cost is said
\emph{optimal}.

\begin{theo}[simplified from~\citet{ALS91}]\label{theo:mainTracatbility}
Let $\phi$ be a fixed {\em MSO$_2$} sentence. For each weighted graph $G=\tuple{(N,E),f_N,f_E}$ such that $(N,E)\in \mathcal{C}_k$,
computing an optimal solution to $\phi$ over $G$ is feasible in (deterministic) polynomial time (w.r.t. $||G||$).
\end{theo}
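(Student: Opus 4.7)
The plan is to extend the automaton-based proof of Courcelle's Theorem so as to track costs alongside $\mathrm{MSO}_2$ types. First, I would compute in linear time a tree decomposition $\tuple{T,\chi}$ of $(N,E)$ of width at most $k$ (using Bodlaender's algorithm) and normalize it into a \emph{nice} tree decomposition with a constant number of node types (leaves, introduce-vertex, introduce-edge, forget, and join), at the cost of only a linear blow-up. Second, I would translate the fixed $\mathrm{MSO}_2$ formula $\phi(X,Y)$ into a deterministic finite tree automaton $\mathcal{A}_\phi$ whose states encode, by structural induction on $\phi$ and via a Feferman--Vaught-style composition lemma, the $\mathrm{MSO}_2$-type of the labeled subgraph processed so far. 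The number of states of $\mathcal{A}_\phi$ depends only on $\phi$ and $k$, hence is a constant in the asymptotic analysis.

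The core of the proof is a bottom-up dynamic programming over $\tuple{T,\chi}$. At each node $p$ with bag $\chi(p)$, I would maintain a table indexed by pairs $(\rho,q)$, where $\rho$ is the restriction to $\chi(p)$ of the candidate interpretation $\tuple{z_N,z_E}$ (i.e., a subset of $\chi(p)$ together with a subset of the edges whose endpoints lie in $\chi(p)$) and $q$ is the state reached by $\mathcal{A}_\phi$ on the sub-decomposition rooted at $p$. For each index, I would store the minimum value of $\sum_{v\in z_N(X)}f_N(v)+\sum_{e\in z_E(Y)}f_E(e)$ attained by a partial interpretation consistent with it. The tables are updated locally: introduce nodes add the weight of the newly inserted vertex or edge when it is placed in $X$ or $Y$, and update the automaton state accordingly; forget nodes project out the removed element and coalesce resulting collisions by taking componentwise minima; join nodes combine the children's entries whose restrictions to the shared bag agree, summing the stored costs while subtracting the weight of the shared selected elements so as to avoid double counting. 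At the root, the optimum is the minimum cost stored against an accepting state, and an optimal $\tuple{z_N,z_E}$ is recovered by standard backtracking through the tables.

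The complexity analysis is then routine: the number of indices per bag is bounded by $2^{O(k^2)}\cdot |\mathcal{A}_\phi|$, hence by a constant depending only on $k$ and $\phi$; since weights are reflected in $||G||$ (indeed, under the unary encoding they are bounded in absolute value by $||G||$), all additions and comparisons are polynomial-time; combined with the $O(|N|)$ bags, this yields an overall polynomial-time procedure in $||G||$. The main obstacle is twofold: precisely constructing $\mathcal{A}_\phi$ from $\phi$ (the technical heart of Courcelle's original argument, requiring a careful induction that respects the subtle interplay between vertex-set and edge-set quantifiers), and proving that the dynamic programming faithfully tracks both the $\mathrm{MSO}_2$-type and the true cost simultaneously; the join step must be handled with particular care so that the weight contribution of each element appearing in overlapping bags is counted exactly once.
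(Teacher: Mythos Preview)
The paper does not prove this theorem; it is stated as a simplified version of a result due to Arnborg, Lagergren, and Seese and is invoked as a black box in the subsequent tractability argument for the kernel on bounded-treewidth graph games. There is therefore no proof in the paper to compare your proposal against.

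That said, your sketch follows the standard architecture behind the Arnborg--Lagergren--Seese result (and Courcelle's theorem more broadly): construct a nice tree decomposition, compile the fixed $\mathrm{MSO}_2$ formula into a tree automaton of constant size, and run a bottom-up dynamic program that tracks both the automaton state and the accumulated weight of the partial interpretation. This is the correct high-level approach, and the two obstacles you flag---the Feferman--Vaught-style composition needed to build the automaton, and the double-counting avoidance at join nodes---are indeed where the technical work lies.
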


\medskip

\noindent \textbf{A Tractable Class.} Now that we have discussed the preliminary notions and concepts related to
treewidth and MSO, we are in the position of stating our tractability result.

\begin{theo}
Let $\game=\tuple{(N,E),w}$ be a graph game such that $(N,E)\in \mathcal{C}_k$, and let $x$ be a payoff vector. Then, deciding whether $x\in
\kernel{\game}$ is feasible in polynomial time (w.r.t. $||G||$).
\end{theo}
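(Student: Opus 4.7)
My plan is to reduce kernel membership on bounded-treewidth graph games to polynomially many instances of an MSO$_2$ optimization problem, so that Theorem~\ref{theo:mainTracatbility} can be applied. First I would check that $x$ is an imputation, which on graph games takes polynomial time since each singleton worth $v(\{i\})$ and $v(N)$ is a sum of at most $|E|$ edge weights. By Definition~\ref{def:kernelTU}, $x\in \kernel{\game}$ iff for every ordered pair $(i,j)$ of distinct players with $x_j > v(\{j\})$, we have $s_{i,j}(x)\leq s_{j,i}(x)$. Since there are only $O(|N|^2)$ such pairs, it suffices to compute each surplus in polynomial time.

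For a fixed pair $(i,j)$, the surplus is
\[ s_{i,j}(x) = \max_{S\,:\, i \in S,\, j \notin S} \Bigl( \sum_{e\subseteq S} w(e) - \sum_{k\in S} x_k \Bigr). \]
I would express this as an MSO$_2$ optimization problem on the graph $(N,E)$ augmented with two unary predicates $R_i$ and $R_j$ marking the distinguished players (which does not affect the treewidth and is compatible with the Arnborg--Lagergren--Seese framework). Taking the fixed sentence
\[ \phi(X,Y)\;\equiv\; \exists v\,(R_i(v)\land v\in X)\;\land\;\forall v\,(R_j(v)\to v\notin X)\;\land\;\forall e\bigl(e\in Y \leftrightarrow \forall v\,(v\in e \to v\in X)\bigr), \]
the pairs $\tuple{z_N,z_E}$ satisfying $\phi$ are exactly the encodings of coalitions $S\in\mathcal{I}_{i,j}$ together with the set of edges they cover.

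To align the cost function with the excess, I would set vertex weights $f_N(k)=x_k$ and edge weights $f_E(e)=-w(e)$, so that the cost $\sum_{k\in z_N(X)} f_N(k)+\sum_{e\in z_E(Y)} f_E(e)$ equals $x(S)-v(S)=-e(S,x)$; thus an optimal (minimum-cost) solution yields precisely $-s_{i,j}(x)$. Theorem~\ref{theo:mainTracatbility} then gives $s_{i,j}(x)$ in polynomial time with respect to $||G||$ for each pair, and running this $O(|N|^2)$ times and comparing the surpluses completes the procedure.

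The main obstacle I anticipate is a subtle bookkeeping one, namely that Theorem~\ref{theo:mainTracatbility} requires a \emph{fixed} MSO$_2$ sentence. I would resolve this by encoding the pair-dependent data $(i,j)$ into the input structure via the two unary predicates $R_i$, $R_j$ (recoloured in linear time for each pair), so that the sentence $\phi$ itself stays fixed. A secondary concern is that the theorem is stated with weights measured in unary; this matches the paper's convention on $||G||$, which already counts weights in unary, and the sign flip $f_E(e)=-w(e)$ preserves this bound up to a constant factor. Nothing else appears to require genuine ingenuity.
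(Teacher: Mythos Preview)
Your proof is correct and follows the same overall strategy as the paper: reduce kernel membership to computing the surpluses $s_{i,j}(x)$ for all ordered pairs, and compute each surplus via the Arnborg--Lagergren--Seese optimization theorem for MSO$_2$ on bounded-treewidth graphs, with node weights $x_k$ and edge weights $-w(e)$ so that the minimum cost equals $-s_{i,j}(x)$.

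The only substantive difference is how the constraint $i\in S,\ j\notin S$ is forced while keeping the MSO$_2$ sentence \emph{fixed}. You do it by augmenting the input structure with two unary labels $R_i$, $R_j$ (recoloured in linear time for each pair) and hard-coding the membership constraints into $\phi$; this is the standard device in the MSO literature and does not affect treewidth. The paper instead keeps both the graph and the formula $proj(X,Y)$ completely pair-independent and enforces the membership constraints through the weights: it assigns $i$ a large negative node weight and $j$ a large positive one (scaled by the maximum edge weight times the respective degrees), so that every optimal solution necessarily includes $i$ and excludes $j$, and then undoes the artificial contribution of $w_N(i)$ in a post-processing step. Your route is a bit more direct; the paper's avoids relabelling the structure at the cost of a small weight-engineering argument and an extra arithmetic correction.
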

\begin{proof}
Since we can check in polynomial time whether $x$ is an imputation, by Definition~\ref{def:kernelTU} we have just to show that checking whether the condition $s_{i,j}(x)>s_{j,i}(x)\Rightarrow x_j=v(\{j\})$ holds is feasible in polynomial-time. We prove that
 $s_{i,j}(x)$ may be computed in polynomial time for each each pair of players $i\neq j$, which clearly entails the above tractability result.

For each $X\subseteq N$ and
$Y\subseteq E$, consider the following MSO$_2$ formula, stating that $Y$ is the set of all those edges $e\in E$ such that $e\subseteq X$:
\begin{multline*}
proj(X,Y)\equiv
 \forall v,v'\Bigl( \{v,v'\}\in Y \rightarrow \{v,v'\}\subseteq X \Bigr) \wedge\\
\forall v,v'\Bigl( \{v,v'\}\subseteq X \wedge \{v,v'\}\in E \rightarrow \{v,v'\}\in Y\Bigr).
\end{multline*}
Let $w_E$ and $w_N$ be such that $w_E(\{v,v'\})=-w(\{v,v'\})$ and $w_N(v)=x_v$, and observe that  $ \max_{S\subseteq N} e(S,x)= -1 * opt$, where
$opt=\min_{S\subseteq N}(x(S)-v(S))$ is the cost of an optimal solution to $proj(X,Y)$ over $\tuple{(N,E),w_N,w_E}$.

Recall, now, that $s_{i,j}(x)=\max_{S\in\mathcal{I}_{i,j}}e(S,x)$.
Therefore, in order to exploit the above MSO formula for its computation, we have to make sure that the set of considered coalitions is restricted to $\mathcal{I}_{i,j}$.
We thus modify the weights of $i$ and $j$ in $\game$ so that, in any optimal solution of the above
formula, $i\in X$ and $j \notin X$, and thus $\max_{S\subseteq N} e(S,x)$ coincides with $s_{i,j}$.
 Let $B=\max_{e\in E} {\it abs}(w_E(e))$ be the largest weight (in absolute value) among all edges of the graph.
 Then, replace the weights for nodes $i$ and $j$ as follows: $w_N(i)=-1- |\{e\in E \mid i\in e\}|*B $ and  $w_N(j)= 1+ |\{e\in E \mid j\in e\}|*B$. It is straightforward to check that any optimal solution for this modified graph is attained at some set $S$ that includes $i$ and does not include $j$. From its value, say $opt'$, we may easily compute the desired value $s_{i,j}(x)$ as $-1 * (opt' + x_i - w_N(i))$.

Hence, by Theorem~\ref{theo:mainTracatbility} and the above MSO$_2$ formula, $s_{i,j}$ is computable in polynomial time, and thus the membership of $x$ in $\kernel{\game}$ can be decided in polynomial time, too.
\end{proof}

\section{The Complexity of the Bargaining Set}\label{sec:bargainingSet}

The concept of bargaining set was defined by \citet{Auman_Maschler:BaragingSet_TUGames} (see also \citet{Maschler:SurveyHandbook}). We start by recalling its formal definition.

Let $\game=\tuple{N,v}$ be a coalitional game, and $x\in X(\game)$ be an imputation. Let $S\subseteq N$ be a
coalition, and $y$ be an $S$\nbdash feasible payoff vector (i.e., $y(S)=v(S))$.
The pair $(y,S)$ is an \emph{objection of player $i$ against player $j$ to $x$} if $i\in S$, $j\notin S$, and
$y_k > x_k$ for all $k\in S$.

A \emph{counterobjection to the objection $(y,S)$ of $i$ against $j$ to $x$} is a pair $(z,T)$ where $j\in T$,
$i\notin T$, and $z$ is a $T$\nbdash feasible payoff vector such that $z_k \geq x_k$ for all $k\in T\setminus S$
and $z_k \geq y_k$ for all $k\in T\cap S$. If there does not exist any counterobjection to $(y,S)$, we say that
$(y,S)$ is a \emph{justified objection}.

\begin{defin}\label{def:bargSetTU}
The \emph{bargaining set} $\bargset{\game}$ of a TU game $\game$ is the set of all imputations $x$ to which
there is no justified objection. \hfill $\Box$
\end{defin}

\begin{example}\label{ex:BS_TU}
Let $\game=\tugame$ be a TU game with $N=\{a,b,c\}$, $v(\{a\})=v(\{b\})=v(\{c\})=0$, $v(\{a,b\})=20$,
$v(\{a,c\})=30$, $v(\{b,c\})=40$, and $v(\{a,b,c\})=42$.

Consider the imputation $x$ such that $x_a=8$, $x_b=10$, and $x_c=24$. An objection of player $c$ against player
$a$ to $x$ is $((12,28),\{b,c\})$. Player $a$ can counterobject to this objection using $((8,12),\{a,b\})$.
Another objection of player $c$ against player $a$ to $x$ is $((14,\linebreak[0]26),\linebreak[0]\{b,c\})$. In
this case, player $a$ cannot counterobject. The reason is that coalition $\{a,b\}$ receives a payoff $20$ and
this is not sufficient for player $a$ to counterobject since she needs at least $8$ for herself and at least
$14$ for player $b$, in order to respond to the proposal of player $c$. Therefore the imputation $x$ does not belong
to $\bargset{\game}$. The intuitive reason is that player $a$ receives too much, according to this profile.

Consider now the imputation $x'$ such that $x'_a=4$, $x'_b=14$, and $x'_c=24$. We focus on the objections of
player $a$ against player $c$. We note that, in order to object, player $a$ has to form the coalition $S=\{a,b\}$.
The excess $e(S,x)$ of $S$ at $x$ is $2$, hence players $a$ and $b$ have the possibility to distribute among
themselves a payoff of $2$ to make the objection. But player $c$ can always counterobject to player $a$ because
she can form the coalition $T=\{b,c\}$ whose excess at $x$ is $2$ and hence she can always match the proposal
made to player $b$ by player $a$ in order to object. A similar argumentation holds for every objection of every
player against any other. Thus $x'\in\bargset{\game}$.~\hfill~$\lhd$
\end{example}

It is well-known that $\kernel{\game}\subseteq \bargset{\game}$ (hence, $\bargset{\game}\neq\emptyset$, whenever
$X(\game)\neq \emptyset$), and that $\core{\game}\subseteq \bargset{\game}$ (see, e.g.,
\cite{Osborne_Rubinstein:GameTheory}). Thus, as in the case of the kernel, we shall just focus on the complexity
of deciding whether a given payoff vector is in the bargaining set. The problem was conjectured to be
\PiP2-complete for graph games by \citet{Deng_Papadimitriou:ComplexityCooperativeGameSolutionConcepts}. In this
section, we show that the conjecture is indeed correct. Also, we are able to generalize the result by showing that membership in \PiP2 holds on any class of
games $\C(\R)$, where $\R$ is an \NP{} representation scheme.

\subsection{Hardness on Graph Games}\label{sec:hardnessBargSetTU}

It has been suggested by \citet{Maschler:SurveyHandbook} that {computing} the bargaining set might be
intrinsically more complex than computing the core. The result presented in this section provides some fresh evidence that
this is indeed the case inasmuch as we show that the checking whether a payoff vector is in the bargaining set is \PiPh2,
even over graph games. The reduction is from the validity of quantified Boolean formulae.

Let $\bs{\Phi}=(\forall\bs{\alpha})(\exists\bs{\beta})\phi(\bs{\alpha},\bs{\beta})$ be an $\mathbf{NQBF}_{2,\forall}$ formula, i.e., a
quantified Boolean formula over the variables ${\bs \alpha}=\{\alpha_1,\dots,\alpha_n\}$ and ${\bs \beta}=\{\beta_1,\dots,\beta_r\}$, where
$\phi(\bs{\alpha},\bs{\beta})=c_1\wedge\dots\wedge c_m$ is a \textbf{3CNF} formula, and where each universally quantified variable $\alpha_k\in
\bs \alpha$ occurs only in the two clauses $c_{i(k)}=(\alpha_k\lor\lnot\beta_k)$ and $c_{\bar i(k)}=(\lnot\alpha_k\lor\beta_k)$---intuitively,
each variable $\alpha_k$ enforces the truth-value of a corresponding variable $\beta_k$, which thus plays its role in the formula
$\phi(\bs{\alpha},\bs{\beta})$.
Based on $\bs{\Phi}$, we define the weighted graph $\qtugameBargSet(\bs{\Phi})=\tuple{(N_\BS,E_\BS),w}$ such
that (see Figure~\ref{fig:graphgameBS} for an illustration):

\begin{itemize}
  \item The set $N_\BS$ of the nodes (i.e., players) includes: a \emph{clause player} $c_j$, for each clause $c_j$; a \emph{literal player}
      $\ell_{i,j}$, for each literal $\ell_{i}$ occurring in $c_j$; and, two special players ``$chall$'' and ``$sat$''.

  \item The set $E_\BS$ of edges includes three kinds of edges.
\begin{description}
\item[\textbf{(Positive edges):}] an edge $\{c_j,\ell_{i,j}\}$ with $w(\{c_j,\ell_{i,j}\})=1$, for each literal $\ell_i$ occurring in the clause
    $c_j$; an edge $\{chall,\ell_{i,j}\}$ with $w(\{chall,\ell_{i,j}\})=1$, for each literal $\ell_i$ of the form $\alpha_i$ or $\neg
    \alpha_i$ (i.e., built over a universally quantified variable) occurring in $c_j$.

  \item[\textbf{(``Penalty'' edges):}] an edge $\{\gamma_{i,j},\neg \gamma_{i,j'}\}$ with $w(\{\gamma_{i,j},\neg \gamma_{i,j'}\})=-m-1$, for each variable $\gamma_i$
      (either $\gamma_i=\alpha_i$ or $\gamma_i=\beta_i$) occurring in $c_j$ and $c_{j'}$; an edge $\{\ell_{i,j},\ell_{i',j}\}$ with
      $w(\ell_{i,j},\ell_{i',j})=-m-1$, for each pair of literals $\ell_i$ and $\ell_{i'}$ occurring in $c_j$; an edge
      $\{chall,\ell_{i,j}\}$ with $w(\{chall,\ell_{i,j}\})=-m-1$, for each literal $\ell_i$ of the form $\beta_i$ or $\neg \beta_i$
      (i.e., built over an existentially quantified variable) occurring in $c_j$; an edge $\{chall,c_j\}$ with $w(\{chall,c_j\})=-m-1$, for each clause $c_j$.

  \item[\textbf{(``Normalizer'' edge):}] the edge $\{chall,sat\}$ with weight $w(\{chall,sat\})=n-1+m-\sum_{e\in E_\BS\mid e\neq\{chall,sat\}}w(e)$.
\end{description}
\end{itemize}

\begin{figure}[t]
\centering
\includegraphics[width=0.7\textwidth]{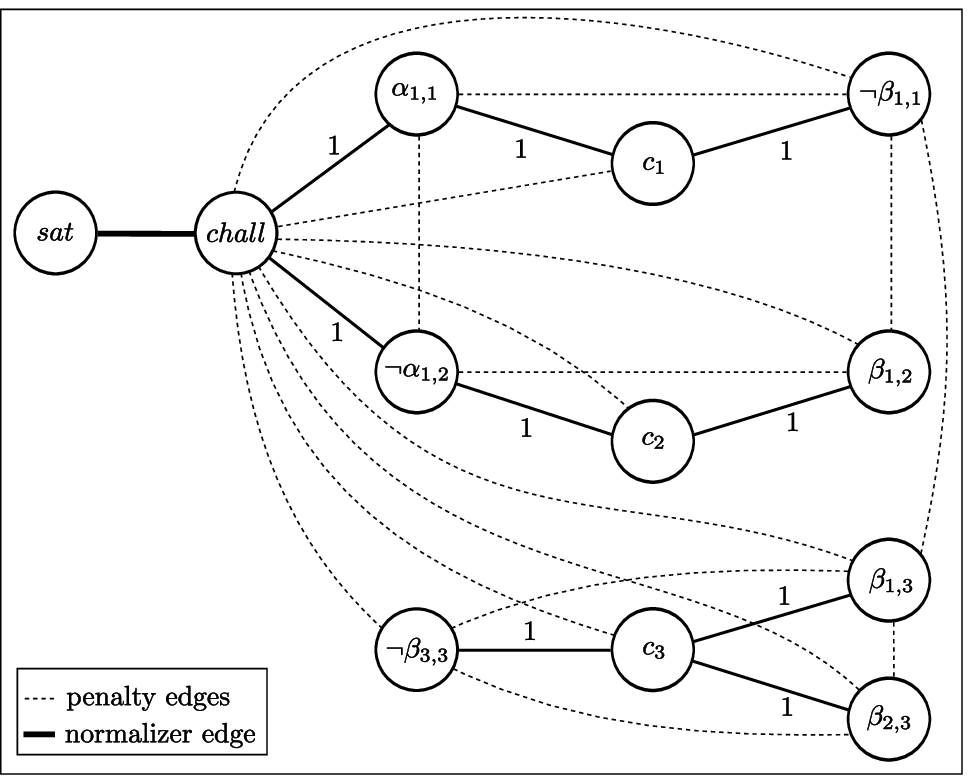}
\caption{The game $\qtugameBargSet(\widehat{\bs{\Phi}})$, where ${\widehat{\bs{\Phi}}}=(\forall \alpha_1)(\exists
\beta_1,\beta_2,\beta_3)(\alpha_1\lor \lnot \beta_1) \land (\lnot \alpha_1\lor \beta_1) \land
(\beta_1\lor\beta_2\lor\lnot\beta_3)$.}\label{fig:graphgameBS}
\end{figure}

Note that the size of the representation of all the weights is polynomial in the number of variables and clauses of $\bs{\Phi}$. Moreover, the
following properties hold.

\begin{lemma}\label{lem:bs}
Let $\qtugameBargSet(\bs{\Phi})=\tuple{(N_\BS,E_\BS),w}$ be the graph game associated with the
$\mathbf{NQBF}_{2,\forall}$ formula $\bs{\Phi}$. Then:
\begin{itemize}

\item[{\em (A)}] $D\leq m$;

\item[{\em (B)}] $w(\{chall,sat\})> 2\times m$;

\item[\em (C)] $D+w(e)<0$, for each penalty edge $e\in E_\BS$; and,

\item[\em {(D)}] $m\geq 2\times n$,
\end{itemize}
where $D=\max_{\{chall,sat\}\not\subseteq S\subseteq N} v(S)$ denotes the maximum worth over all the coalitions not covering the edge
$\{chall,sat\}$.
\end{lemma}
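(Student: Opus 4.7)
The plan is to prove the four items in the order (D), (A), (C), (B), since (A) uses (D) via the inequality $n\le m/2$, (C) is an immediate consequence of (A), and (B) is a self-contained weight-accounting argument that also benefits from (D).

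For (D), I would appeal purely to the structural restriction on $\bs{\Phi}$: each universally quantified variable $\alpha_k$ is required to occur only in the two clauses $c_{i(k)}=(\alpha_k\lor\lnot\beta_k)$ and $c_{\bar i(k)}=(\lnot\alpha_k\lor\beta_k)$, and distinct $\alpha_k$'s need distinct pairs of clauses (since the ``witness'' variables $\beta_k$ are distinct). Hence the $2n$ clauses of the form $c_{i(k)},c_{\bar i(k)}$ are pairwise distinct, giving $m\ge 2n$.

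For (A), the proof is a case analysis on an optimal witness $S^*$ (with $\{chall,sat\}\not\subseteq S^*$), split on whether $chall\in S^*$ (sat, having no incident edge other than the normalizer, can be ignored). In either case the argument is that $S^*$ cannot cover any penalty edge, because replacing $S^*$ by the coalition obtained by removing one endpoint of a covered penalty edge strictly increases the worth: the gained $m+1$ penalty is much larger than the at most $O(m)$ positive incident edges that could be lost. Once no penalty edge is covered, (i) for $chall\notin S^*$ at most one literal per clause can be chosen, so $v(S^*)\le m$; (ii) for $chall\in S^*$ no clause player and no existential-variable literal can be present, and for each universal $\alpha_k$ only one of the two literal players may be included (the other would close the variable penalty edge), giving $v(S^*)\le n\le m/2\le m$ by (D).

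Item (C) is then an immediate computation: for any penalty edge $e$, $w(e)=-m-1$, so $D+w(e)\le m-m-1=-1<0$. For (B), I would compute directly
\[
w(\{chall,sat\})=n-1+m-P+(m+1)\,|\mathrm{pen}|,
\]
where $P$ is the sum of all positive-edge weights and $|\mathrm{pen}|$ is the number of penalty edges. Since the $m$ edges $\{chall,c_j\}$ are all penalty edges, $|\mathrm{pen}|\ge m$; and positive edges yield $P\le 3m+2n\le 4m$ using (D). Plugging in gives $w(\{chall,sat\})\ge m^2-2m-1-n$, which exceeds $2m$ for any non-trivial $m$.

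The main obstacle is the case analysis for (A): one has to verify carefully that \emph{no} form of penalty edge (the three kinds: within-clause, variable-polarity, and $chall$-incident) can ever be part of an optimal coalition, and that within each case the remaining positive contribution is tightly bounded by $m$. This requires enumerating, for each penalty type, how many positive edges an endpoint can contribute to, and comparing that to the penalty weight $m+1$. The other three items are routine once (D) is in hand.
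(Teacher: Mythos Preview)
Your plan is correct. For (D) and (C) your argument coincides with the paper's. For (A), the paper literally says ``immediate by construction'' and gives no further detail; your case analysis on $chall\in S^*$ versus $chall\notin S^*$, preceded by the observation that an optimal $S^*$ can cover no penalty edge (because every non-$sat$ vertex carries total positive incident weight at most $m<m+1$, so deleting an endpoint of a covered penalty edge is always a strict improvement), is exactly the argument needed to justify that one-liner. So here you are filling in what the paper omits rather than taking a different route.

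For (B) the two arguments genuinely differ. You bound $P\le 4m$ and $|\mathrm{pen}|\ge m$ separately and arrive at the quadratic estimate $w(\{chall,sat\})\ge m^2-2m-1+n$, which exceeds $2m$ only when $m^2-4m-1+n>0$; this fails for small instances (e.g.\ $n=1$, $m\le 4$) and so requires your ``non-trivial $m$'' caveat or a padding step. The paper instead observes in one stroke that the number of penalty edges strictly exceeds the number of positive non-normalizer edges, whence $\sum_{e\neq\{chall,sat\}}w(e)\le -(m+1)$ and therefore
\[
w(\{chall,sat\})\;\ge\;(n-1+m)+(m+1)\;=\;2m+n\;>\;2m
\]
for \emph{every} instance. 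Your count is correct but looser; the paper's single comparison of edge counts is shorter and avoids the small-$m$ exception altogether.
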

\begin{proof}
The fact (A) that $D\leq m$ is immediate by construction. Moreover, the weight of each penalty edge is $-m-1$ and, hence, (C) holds, too.
Eventually, (D) (i.e., $m\geq 2\times n$) is also immediate since $\bs{\Phi}$ is an $\mathbf{NQBF}_{2,\forall}$ formula.

Let us, hence, focus on (B) by observing that $\sum_{e\in E_\BS\mid e\neq\{chall,sat\}}w(e)\leq -m-1$, for $E_\BS$ contains more penalty edges
than positive ones.
Thus, $w(\{chall,sat\})=n-1+m-\sum_{e\in E\mid e\neq\{chall,sat\}}w(e)\geq n-1+m+(m+1)>2\times m$.
\end{proof}

\begin{theo}\label{theo:hardnessBargSetCheckingTU_Graph}
On the class $\C(\graph)$ of~graph games, \BargainingSetMembership\ is \PiPh2 (even if the given payoff vector
is an imputation).
\end{theo}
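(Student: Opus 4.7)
The plan is to reduce from validity of $\mathbf{NQBF}_{2,\forall}$-formulas, which is \PiPc{2}, to membership in the bargaining set. Given $\bs{\Phi}$, I use the already-constructed graph game $\qtugameBargSet(\bs{\Phi})$ together with the imputation $x$ defined by $x_{chall}=n-1$, $x_{sat}=m$, and $x_k=0$ for every other player. This is efficient, since the normalizer edge forces $v(N)=n-1+m$, and individually rational, since all singleton worths equal $0$. The claim to establish is that $x\in \bargset{\qtugameBargSet(\bs{\Phi})}$ if and only if $\bs{\Phi}$ is valid.

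The main step is to classify objections of $chall$ against $sat$ and their counterobjections. By Lemma~\ref{lem:bs}.(C), any coalition $S\ni chall$ containing a clause player or an existential-literal player incurs a penalty edge of weight $-m-1$ that no positive contribution can recover, so every feasible objection coalition has the form $S_\sigma=\{chall\}\cup L_\sigma$, where $L_\sigma$ selects exactly one universal-literal player per variable and thus encodes an assignment $\sigma$ on $\bs{\alpha}$; in this case $v(S_\sigma)=n$ while $x(S_\sigma)=n-1$. Dually, a counterobjection $(z,T)$ by $sat$ requires $sat\in T$, $chall\notin T$, and $v(T)\geq x(T\setminus S_\sigma)+y(T\cap S_\sigma)\geq m+y(T\cap S_\sigma)$, since $sat\in T\setminus S_\sigma$ and $x_{sat}=m$. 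Because $sat$ has no edges once $chall$ is excluded, $v(T)\leq m$, with equality attainable only when $T$ includes all $m$ clause players together with one literal per clause forming a consistent satisfying assignment of $\phi$ (any deviation either introduces a $-m-1$ penalty or loses a $+1$ edge). Consequently the counterobjection condition reduces to $v(T)=m$ together with $T\cap S_\sigma=\emptyset$, and the latter disjointness forces $T$'s $\bs{\alpha}$-literal for every universal variable to be opposite of $S_\sigma$'s, so $T$ encodes an assignment $(\neg\sigma,\tau)$ satisfying $\phi$. Hence $(y,S_\sigma)$ is a justified objection if and only if no $\tau$ satisfies $\phi(\neg\sigma,\tau)$, a condition depending only on $S_\sigma$ and not on $y$.

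All remaining objection types admit easy counterobjections: for a target $j\notin\{chall,sat\}$, the singleton $T=\{j\}$ is a counterobjection because $v(\{j\})=0=x_j$ and $T\cap S=\emptyset$; for $j\in\{chall,sat\}$ with objector $i\neq chall$, the coalition $T=\{chall,sat\}$ works because the normalizer edge yields $v(T)=w(\{chall,sat\})>2m$ (Lemma~\ref{lem:bs}.(B)), which exceeds the required bound $x(T\setminus S)+y(T\cap S)\leq (n-1)+m+n=2n+m-1<2m$ (using $m\geq 2n$ from Lemma~\ref{lem:bs}.(D)); and $sat$ itself cannot originate any valid objection, since any coalition $S\ni sat$ with $chall\notin S$ has $v(S)\leq m<y_{sat}$. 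Combining these observations, $x\in \bargset{\qtugameBargSet(\bs{\Phi})}$ holds iff for every $\sigma$ on $\bs{\alpha}$ there exists $\tau$ on $\bs{\beta}$ with $\phi(\neg\sigma,\tau)$ satisfied, which---substituting $\sigma':=\neg\sigma$---is equivalent to the validity of $\bs{\Phi}$. The main obstacle I expect is the case analysis for the ``off-diagonal'' objections (those not of $chall$ against $sat$): showing that the escape valve $T=\{chall,sat\}$ always dominates such threats relies crucially on the huge normalizer weight together with the bound $m\geq 2n$, whereas the $chall$-versus-$sat$ analysis then captures the $\forall\sigma\exists\tau$ alternation precisely through the disjointness constraint $T\cap S_\sigma=\emptyset$.
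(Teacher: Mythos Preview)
Your proposal is correct and follows essentially the same reduction as the paper, with the same imputation and the same core analysis of $chall$-versus-$sat$ objections. Two minor differences are worth noting.

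First, for objections against $chall$ the paper does \emph{not} use $T=\{chall,sat\}$; instead it builds $T=\{chall\}\cup\{\text{one universal literal per variable}\}$ disjoint from $S$, with $v(T)=n$ and $x(T)=n-1$. Your choice $T=\{chall,sat\}$ is in fact simpler and works just as well once one observes (as you implicitly do) that any objection coalition $S$ with $chall\notin S$ and $sat\in S$ would require $y_{sat}>m$ while $v(S)\le m$, so $sat\notin S$ and hence $T\cap S=\emptyset$. Second, the paper uses an \emph{inverted} encoding ($\alpha_k$ is false in $\sigma_S$ when the positive literal player $\alpha_{k,i(k)}$ is in $S$), which makes the counterobjection's assignment coincide with $\sigma_S$ directly; your natural encoding forces the final substitution $\sigma'=\neg\sigma$, which is equivalent.

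One small imprecision: your displayed bound $x(T\setminus S)+y(T\cap S)\leq (n-1)+m+n$ does not correspond to any single case (with $|T|=2$ this sum has at most two terms). The sharp bounds are $(n-1)+m$ when $T\cap S=\emptyset$ and $m+y_{chall}\leq 2m$ when $chall\in S$ (case $j=sat$); both are dominated by $w(\{chall,sat\})>2m$, so your conclusion stands.
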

\begin{proof}
Deciding the validity of $\mathbf{NQBF}_{2,\forall}$ formulae is \PiPc2~\citep{Schaefer:GraphRamseyTheory}. Thus,
given an $\mathbf{NQBF}_{2,\forall}$ formula
$\bs{\Phi}=(\forall\bs{\alpha})(\exists\bs{\beta})\phi(\bs{\alpha},\bs{\beta})$ (where ${\bs
\alpha}=\{\alpha_1,\dots,\alpha_n\}$, ${\bs \beta}=\{\beta_1,\dots,\beta_r\}$, and
$\phi(\bs{\alpha},\bs{\beta})=c_1\wedge\dots\wedge c_m$), consider the graph game
$\qtugameBargSet(\bs{\Phi})=\tuple{(N_\BS,E_\BS),w}$ and the imputation $x$ that assigns $m$ to $sat$, $n-1$ to
$chall$, and $0$ to all other players.
Note that $x$ is an imputation, since $v(N)=m+n-1$ because of the weight of the edge $\{chall,sat\}$ (recall that
$w(\{chall,sat\})=m+n-1-\sum_{e\in E_\BS\mid e\neq\{chall,sat\}}w(e)$).

Beforehand, we note that the following properties hold on $\qtugameBargSet(\bs{\Phi})$ and $x$.

\begin{description}
\item[\textbf{Property \ref{theo:hardnessBargSetCheckingTU_Graph}.(1).}] \emph{No player has a justified objection against a clause or a
    literal player.}\\
Indeed, any clause player $c_j$ receives $0$ in $x$ and is such that $v(\{c_j\})=0$. Therefore, she can counterobject to any objection
through the singleton coalition $\{c_j\}$. Similarly, any literal player $\ell_{i,j}$ receives $0$ in $x$ and is such that
$v(\{\ell_{i,j}\})=0$, and can counterobject through $\{\ell_{i,j}\}$.

\item[\textbf{Property \ref{theo:hardnessBargSetCheckingTU_Graph}.(2).}]  \emph{No player has a justified objection against $chall$.}\\
Assume that a player $p\in N_\BS$ wants to object against $chall$ through a coalition $S$. If $v(S)<0$ then $p$ cannot object against
anyone. Indeed, she has to propose an $S$\nbdash feasible vector $y$ such that $y_k>x_k$ for all player $k\in S$, and hence
$v(S)=y(S)>x(S)$, but $x(S)\geq 0$ by definition, so player $p$ cannot fulfill this requirements if $v(S)<0$. Then, because of
Lemma~\ref{lem:bs}.(C), $S$ does not include penalty edges. In particular, for each universally quantified variable $\alpha_k$, it holds
that $|S\cap\{\alpha_{k,i(k)},\neg \alpha_{k,\bar i(k)}\}|\leq 1$, where $c_{i(k)}$ and $c_{\bar i(k)}$ are the two clauses where this
variable occurs (recall that $\bs{\Phi}$ is an $\mathbf{NQBF}_{2,\forall}$ formula).

Consider, then, the coalition $T\subseteq \{chall\}\cup \{ \alpha_{k,i(k)},\neg \alpha_{k,\bar i(k)} \mid 1\leq k\leq n\}$ such that
$|T|=n+1$, $T\cap S=\emptyset$, and $|T\cap \{\alpha_{k,i(k)},\neg \alpha_{k,\bar i(k)}\}|=1$, for each $1\leq k\leq n$. Note that $v(T)=n$
and $x(T)=x_{chall}=n-1$. Then, consider the vector $z$ such that $z_{chall}=x_{chall}=n-1$ and $z_q=\frac{1}{n}>x_q=0$ for each $q\in T$
with $q\neq chall$, and observe that $z(T)=v(T)$. Eventually, since $T\cap S=\emptyset$, $(z,T)$ is a counterobjection to any objection of
$p$ against $chall$ through $S$.

\item[\textbf{Property \ref{theo:hardnessBargSetCheckingTU_Graph}.(3).}]  \emph{No player different from $chall$
has a justified objection against $sat$.}\\
Suppose that a player $p\neq chall$ has an objection $(y,S)$ against $sat$ to $x$. Since $sat\not\in S$, it is the case that
$\{chall,sat\}\not\subseteq S$ and hence, by Lemma~\ref{lem:bs}.(A), that $v(S)\leq m$. Then, we claim that $(z,\{sat, chall\})$ is a
counterobjection to $(y,S)$ of $p$ against $sat$, where $z$ is a feasible distribution that assigns $m$ to $sat$ and $w(\{chall,sat\})-m$
to $chall$. In fact, by Lemma~\ref{lem:bs}.(B), $chall$ receives a payoff strictly greater than $m$ (i.e., $z_{chall}>m$). Then, note that $z_{sat}=x_{sat}$ and
let us distinguish two cases.

In the case where $chall\not\in S$, we have that $z_{chall}>m\geq n-1=x_{chall}$ (recall that $m\geq 2\times n$, by
Lemma~\ref{lem:bs}.(D)). Instead, in the case where $chall\in S$, we have that $z_{chall}>m$ while $y_{chall}\leq v(S)\leq m$, and hence
$z_{chall}>y_{chall}$. It follows that in both cases $(z,\{sat, chall\})$ is a counterobjection to $(y,S)$.
\end{description}

In the light of the properties above, we can limit our attention to the objections of $chall$ against $sat$. Consider an objection $(y,S)$ of
$chall$ against $sat$ to $x$. In particular, $y_{chall}$ must be greater than $x_{chall}=n-1$ and $y_q>0=x_q$ for each $q\in S$ with $q\neq
chall$. Thus, $y(S)=v(S)>n-1$.

Then, because of Lemma~\ref{lem:bs}.(C), in order for such a coalition $S$ to be such that $v(S)>n-1$, no penalty edge must be covered by $S$. Moreover, given that $chall\in S$, we have that $S$ must contain
exactly one player per universally quantified variable so that $v(S)=n>n-1$, i.e., $|S|=n+1$ and $|S\cap \{\alpha_{k,i(k)},\neg \alpha_{k,\bar i(k)}\}|=1$, for
each $1\leq k\leq n$. By this, the objection $y$ has to be a vector such that $y(S)=n$, $y_{chall}>n-1$, and $y_q>0$, for each $q\in S$ with $q\neq
chall$. Note that the coalition $S$ encodes an assignment $\sigma_S$ for the variables in $\bs \alpha$ such that $\alpha_k$ evaluates to false
(resp., true) in $\sigma_S$ if $\alpha_{k,i(k)}$ (resp., $\neg \alpha_{k,\bar i(k)}$) occurs in $S$---note that in this correspondence the
truth values are inverted with respect to the membership of the corresponding literal players in $S$.

Moreover, note that for each truth assignment $\sigma$ for the variables in $\bs \alpha$, we may immediately build a coalition $S$ and a vector
$y$ such that $(y,S)$ is an objection of $chall$ against $sat$, and $\sigma_S=\sigma$. Therefore, objections of $chall$ against $sat$ are in correspondence with truth assignments for universally quantified variables.

We are now ready to show that $\bs{\Phi}$ is valid if and only if $x$ is in $\bargset{{\qtugameBargSet(\bs{\Phi})}}$:
\begin{description}
\item[($\Rightarrow$)] Assume that $\bs{\Phi}=(\forall\bs{\alpha})(\exists\bs{\beta})\phi(\bs{\alpha},\bs{\beta})$ is valid, and let $(y,S)$
    be any objection of $chall$ against $sat$ to $x$. We show that this objection is not justified, because $sat$ has a counterobjection
    $(z,T)$. Recall first that $S$ encodes an assignment $\sigma_S$ over the variables in $\bs \alpha$. Then, let $\sigma$ be a satisfying
    assignment over the variables in $\bs{\Phi}$ such that $\alpha_k$ evaluates to true in $\sigma$ if and only if it evaluates to true in $\sigma_S$;
    indeed, such a satisfying assignment $\sigma$ exists since $\bs{\Phi}$ is valid.
    Based on $\sigma$, let us construct the coalition $T$ such that $T\subseteq\{sat\}\cup\{
    \ell_{i,j} \mid \ell_i\mbox{ evaluates to true in }\sigma\wedge {c_j}\mbox{ is a clause where $\ell_i$ occurs} \}\cup\{c_1,\dots,c_m\}$, $T\cap S=\emptyset$,
    $|T|=2m+1$, and $v(T)=m$. In particular, $T$ can be such that $v(T)=m$, precisely because $\sigma$ is a satisfying assignment and by
    construction of $\sigma_S$.
    Moreover, consider the vector $z$ such that $z_{sat}=v(T)=x_{sat}=m$ and $z_q=x_q=0$ for each $q\in T$
    with $q\neq sat$ (and, hence, $q\not\in S$). By construction $(z,T)$ is a counterobjection to $(y,S)$, which is therefore not
    justified in its turn.

\item[($\Leftarrow$)] Let $\sigma$ be an assignment over the variables in $\bs \alpha$ witnessing that $\bs{\Phi}$ is not valid. Let $(y,S)$ be an
    arbitrary objection of $chall$ against $sat$ to $x$ such that $\sigma_S=\sigma$. We claim that $(y,S)$ is justified. Indeed, assume for
    the sake of contradiction that $(y,S)$ is not justified and let $(z,T)$ be a counterobjection. Since $sat\in T\setminus S$ and since
    we must have $z_{sat}\geq x_{sat}=m$, because of Lemma~\ref{lem:bs}.(A) we actually have that $z_{sat}=x_{sat}$.
    In fact, since
    $m$ is the maximum available payoff over all the coalitions not including both $chall$ and $sat$, $z(T)=v(T)=m$ and the fact that $(z,T)$ is a counterobjection to $(y,S)$ together entail that $S\cap
    T=\emptyset$. Also, these entail that $T$ contains all clause players, exactly one literal player per clause, so that $|T|=2m+1$. Observe now that $T$ encodes a
    satisfying truth value assignment $\sigma_T$ for $\phi(\bs{\alpha},\bs{\beta})$,
    because $v(T)>0$ and hence $T$ does not cover any penalty edge.
    More precisely, we let each variable $\gamma_i$
    (either $\gamma_i=\alpha_i$ or $\gamma_i=\beta_i$) to evaluate true (resp., false) in $\sigma_T$ if $\gamma_{i,j}$ (resp., $\neg
    \gamma_{i,j}$) occurs in $T$, for some clause $c_j$. Let $\sigma_T^{\bs \alpha}$ denote the restriction of $\sigma_T$ over the
    variables in $\bs \alpha$. Then, since $S\cap T=\emptyset$ and given the definition of $\sigma_S$, we have that $\sigma_S=\sigma_T^{\bs
    \alpha}$. Clearly, this contradicts the fact that $\sigma=\sigma_S$ witnesses that $\bs{\Phi}$ is not valid.\qedhere
\end{description}
\end{proof}

\smallskip

Again, by the same argument used for the proof of Proposition~\ref{prop:hardness-core}, the above result extends to compact representations more expressive than graph games.

\begin{corol}\label{cor:hardness-bs}
Let $\R$ be any compact representation such that $\graph\precsim_e \R$ (e.g., $\R=\marginal$). On the class $\C(\R)$, \BargainingSetMembership\ is \PiPh2.
\end{corol}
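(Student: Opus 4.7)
The plan is to mimic exactly the argument used for Proposition~\ref{prop:hardness-core} and Corollary~\ref{cor:hardness-kernel}, namely to compose two polynomial-time reductions: (i) the one provided by Theorem~\ref{theo:hardnessBargSetCheckingTU_Graph}, which reduces validity of an $\mathbf{NQBF}_{2,\forall}$ formula (a prototypical \PiPc{2} problem) to \BargainingSetMembership{} on graph games, and (ii) the translation $f_2\in\FP$ whose existence is guaranteed by the assumption $\graph\precsim_e \R$, which maps any graph game $\xi^{\graph}(\G)$ to an equivalent game $f_2(\xi^{\graph}(\G))\in\C(\R)$ with identical worth function.

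The only conceptual point that needs to be spelled out is that \BargainingSetMembership{} depends only on the worth function $v$ and the given payoff vector $x$: the set $X(\G)$ of imputations is defined purely in terms of $v$, and so are the notions of objection and counterobjection appearing in Definition~\ref{def:bargSetTU}. Consequently, two games sharing the same worth function have the same bargaining set, and thus the answer to \BargainingSetMembership{} on input $(\xi^{\graph}(\G),x)$ coincides with the answer on input $(f_2(\xi^{\graph}(\G)),x)$.

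Putting these ingredients together, given any instance $\Upsilon$ of a \PiP{2} problem, first apply the reduction of Theorem~\ref{theo:hardnessBargSetCheckingTU_Graph} to produce a pair $(\qtugameBargSet(\bs{\Phi}),x)$, then apply $f_2$ to the graph game component to obtain $(f_2(\qtugameBargSet(\bs{\Phi})),x)\in\C(\R)\times\Re^{N_\BS}$. Since both steps run in polynomial time and the equivalence of worth functions preserves the bargaining set, the composed map is a polynomial-time reduction from $\Upsilon$ to \BargainingSetMembership{} on $\C(\R)$, establishing \PiPh{2}-hardness. Moreover, since Theorem~\ref{theo:hardnessBargSetCheckingTU_Graph} yields hardness even when the payoff vector is an imputation, and $f_2$ preserves the worth function (hence the set of imputations), the hardness also survives in this more restrictive setting.

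There is essentially no obstacle here; the only thing to double-check is that the payoff vector $x$ constructed in the proof of Theorem~\ref{theo:hardnessBargSetCheckingTU_Graph} is written in terms of the players of $\qtugameBargSet(\bs{\Phi})$ only, and that $f_2$, by definition of $\precsim_e$, preserves the player set and the worths of all coalitions. This guarantees that $x$ remains a legitimate (and in fact identical) payoff vector for the translated game, so the reduction is well-defined.
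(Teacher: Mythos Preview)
Your proposal is correct and follows exactly the approach indicated in the paper, which simply refers back to the argument of Proposition~\ref{prop:hardness-core}: compose the \PiPh{2} reduction of Theorem~\ref{theo:hardnessBargSetCheckingTU_Graph} with the polynomial-time translation $f_2$ guaranteed by $\graph\precsim_e\R$. Your additional remarks that the bargaining set depends only on the worth function and that $f_2$ preserves the player set (hence the payoff vector remains meaningful) are helpful clarifications, but the core argument is identical to the paper's.
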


\smallskip

\subsection{Membership on FNP Representation Schemes}

As for the result pertaining checking whether a payoff vector belongs to the bargaining set, it has already been
argued by \citeauthor{Deng_Papadimitriou:ComplexityCooperativeGameSolutionConcepts} that this problem is in \PiP2{} for graph
games~\citep{Deng_Papadimitriou:ComplexityCooperativeGameSolutionConcepts}. Indeed, they observed that
one may decide that an imputation $x$ is not in the bargaining set by firstly guessing in \NP{} the objection
$(y,S)$, and then by calling a $\CONP$ oracle, through which to check that there is no counterobjection $(z,T)$ to $(y,S)$.
However, to apply this argument one typically assumes to consider real values with
fixed\nbdash precision (or it should be proven that there always exist suitable objections and counterobjections that may be guessed in polynomial-time).

Our main achievement in this section is to show that membership in $\PiP{2}$ can be established, \emph{independently
of the precision adopted to represent the real values of interest in the game}.
 To this end, we shall provide a useful characterization of a player $i$ having a justified objection against some player $j$. The result is
in the spirit of one of~\citet{Maschler:InequalitiesBS}'s and connects the existence of a justified objection of $i$ against $j$ to
some algebraic conditions to hold on coalitions that $i$ and $j$ (may) belong to.

\begin{lemma}\label{lemma:condizioneOutsideBargSetTU}
Let $\game$ be a coalitional game, and let $x$ be an imputation of $\game$. Then, player $i$ has a justified
objection against player $j$ to $x$ through coalition $S\in \mathcal{I}_{i,j}$ if and only if there exists a
vector $y\in\Re^{S}$ such that:
\begin{enumerate}
\item[\em (1)] $y(S)=v(S)$;

\item[\em (2)] $y_k>x_k$, for each $k\in S$; and,

\item[\em (3)] $v(T)< y(T\cap S)+x(T\setminus S)$, for each $T\in\mathcal{I}_{j,i}$.
\end{enumerate}
\end{lemma}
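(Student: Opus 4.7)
The plan is to unpack the definitions and show that condition (3) is precisely the algebraic translation of ``no counterobjection through $T$ exists,'' for every $T \in \mathcal{I}_{j,i}$. Conditions (1) and (2) are already the definition of $(y,S)$ being an objection of $i$ against $j$ to $x$, so the content of the lemma is entirely about relating (3) to the non\nbdash existence of counterobjections.

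For the ``$\Leftarrow$'' direction, assume (1)--(3) hold. Then $(y,S)$ is an objection of $i$ against $j$ to $x$ by (1)--(2). Suppose, for contradiction, that some $(z,T)$ with $T\in\mathcal{I}_{j,i}$ is a counterobjection. Then $z(T)=v(T)$, while the componentwise inequalities $z_k\geq y_k$ for $k\in T\cap S$ and $z_k\geq x_k$ for $k\in T\setminus S$ give
\[
v(T)\;=\;z(T)\;=\;\sum_{k\in T\cap S}z_k+\sum_{k\in T\setminus S}z_k\;\geq\;y(T\cap S)+x(T\setminus S),
\]
contradicting (3). Hence no counterobjection exists and $(y,S)$ is justified.

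For the ``$\Rightarrow$'' direction, suppose $(y,S)$ is a justified objection of $i$ against $j$ through $S$; (1) and (2) hold by definition. To establish (3), fix any $T\in\mathcal{I}_{j,i}$ and assume for contradiction that $v(T)\geq y(T\cap S)+x(T\setminus S)$. Define $z\in\Re^T$ by setting $z_k=y_k$ for $k\in T\cap S$, $z_k=x_k$ for $k\in T\setminus S$, and then distributing the non\nbdash negative slack $\Delta := v(T)-y(T\cap S)-x(T\setminus S)\geq 0$ arbitrarily among the players of $T$ (e.g., add $\Delta/|T|$ to each coordinate). Then $z(T)=v(T)$, so $z$ is $T$\nbdash feasible, and the added slack only increases coordinates, preserving $z_k\geq y_k$ on $T\cap S$ and $z_k\geq x_k$ on $T\setminus S$. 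Since $j\in T$ and $i\notin T$, the pair $(z,T)$ is a counterobjection to $(y,S)$, contradicting that $(y,S)$ is justified. Therefore $v(T)<y(T\cap S)+x(T\setminus S)$.

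The main subtlety is simply that counterobjections use weak inequalities while objections use strict ones; this is exactly why (3) must be a strict inequality, so that the slack $\Delta$ in the contrapositive construction can equal zero and still produce a valid counterobjection. No calculation beyond these definitional manipulations is needed.
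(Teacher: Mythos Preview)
Your proof is correct and follows essentially the same approach as the paper: both directions unpack the definitions, and in the ``$\Rightarrow$'' direction both you and the paper construct the counterobjection $z$ by starting from $y$ on $T\cap S$ and $x$ on $T\setminus S$ and evenly distributing the nonnegative slack $\Delta/|T|$ over $T$. The only difference is the order in which the two directions are presented.
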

\begin{proof}\
\begin{description}
\item[($\Rightarrow$)] Assume that player $i$ has a justified objection against player $j$ to $x$ through coalition $S\in
    \mathcal{I}_{i,j}$. Let $(y,S)$ be such justified objection and note that, by definition, $y(S)=v(S)$ and $y_k>x_k$ for each $k\in S$
    hold. Assume now, for the sake of contradiction, that {(3)} above does not hold. Let $\bar T\in\mathcal{I}_{j,i}$ be such that $v(\bar
    T)\geq y(\bar T\cap S)+x(\bar T\setminus S)$. Based on $\bar T$, let us build a vector $z\in \Re^{\bar T}$ such that:
    \emph{(i)} $z_k=x_k+\delta$ for each $k\in \bar T\setminus S$; and, \emph{(ii)} $z_k=y_k+\delta$ for each $k\in \bar T\cap S$, where
$$\delta=\frac{v(\bar T)- y(\bar T\cap S)-x(\bar T\setminus S)}{|\bar T|}\geq 0.$$ Note that $z(\bar T)=v(\bar T)$
holds by construction. Hence, $(z,\bar T)$ is a counterobjection to $(y,S)$, a contradiction.

\item[($\Leftarrow$)] Assume that there is a vector $y\in \Re^{S}$ such that all the three conditions above hold. Consider the pair
    $(y,S)$ (with $S\in \mathcal{I}_{i,j}$) and note that due to ({1}) and ({2}), $(y,S)$ is in fact an objection of player $i$ against
    player $j$. We now claim that $(y,S)$ is justified. Indeed assume, for the sake of contradiction, that a counterobjection $(z,\bar T)$
    (with $\bar T\in\mathcal{I}_{j,i}$) exists such that: $z(\bar T)=v(\bar T)$, $z_k\geq x_k$ for each $k\in \bar T\setminus S$, and
    $z_k\geq y_k$ for each $k\in \bar T\cap S$. In this case, it holds that $v(\bar T)=z(\bar T)\geq y(T\cap S)+x(T\setminus S)$, which
    contradicts (3).\qedhere
\end{description}
\end{proof}

\begin{theo}\label{theo:membershipBargainingSetCheckingTU}
Let $\mathcal{R}$ be a non\nbdash deterministic polynomial\nbdash time compact representation. On the class
$\C(\mathcal{R})$, \BargainingSetMembership\ is feasible in \PiP2.
\end{theo}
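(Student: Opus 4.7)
The plan is to combine the algebraic characterization given by Lemma~\ref{lemma:condizioneOutsideBargSetTU} with a bit-size bound on ``simplest'' justified objections, so that a $\SigmaP{2}$ machine can decide the complementary problem. Concretely, I would show that $x\notin \bargset{\game}$ if and only if either $x\notin X(\game)$, or there exist players $i,j$, a coalition $S\in \mathcal{I}_{i,j}$, and a vector $y\in \Re^S$ of polynomially bounded bit-size satisfying conditions (1)--(3) of Lemma~\ref{lemma:condizioneOutsideBargSetTU}. The first disjunct is decidable in $\DeltaP{2}\subseteq \PiP{2}$: computing each of $v(\{k\})$ and $v(N)$ requires just one $\NPSV$ evaluation, in the style of the proof of Theorem~\ref{theo:membershipKernelCheckingTU}. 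Therefore the entire problem sits in $\PiP{2}$ provided the second disjunct can be placed in $\SigmaP{2}$.

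The main obstacle is precisely the point raised in the excerpt: we must avoid having to guess a real-valued vector $y$ at arbitrary precision. I would overcome this via a standard linear-programming argument applied to the system (1)--(3). For fixed $i,j,S$, introduce a slack variable $\epsilon$ and consider
\begin{align*}
\max\ \epsilon \quad \text{s.t.}\quad & y(S)=v(S),\\
& y_k\geq x_k+\epsilon,\quad \forall k\in S,\\
& y(T\cap S)+x(T\setminus S)\geq v(T)+\epsilon,\quad \forall T\in \mathcal{I}_{j,i}.
\end{align*}
By the lemma, a justified objection through $S$ exists iff the optimum is strictly positive. The LP has only $|S|+1$ variables, although exponentially many constraints from $\mathcal{I}_{j,i}$; nevertheless, if it is feasible the optimum is attained at a vertex determined by at most $|S|+1$ tight constraints. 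Since $\R$ is a non-deterministic polynomial-time compact representation, the worth function $v^\R$ is polynomially balanced, so every $v(T)$ and every $x_k$ has polynomial bit-size. Hence the vertex $(y^*,\epsilon^*)$ obtained by Cramer's rule on the tight constraints also has polynomial bit-size. Whenever a justified objection exists, one is witnessed by such a $y^*$, giving the desired polynomial bound.

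Granted this bound, the $\SigmaP{2}$ procedure for deciding existence of a justified objection runs as follows. Non-deterministically guess $i,j$, the coalition $S\in\mathcal{I}_{i,j}$, and a rational vector $y$ of polynomial bit-size. Next, using one $\NP$ oracle call (which first guesses $v(S)$ together with its $\NPSV$ witness and then verifies it), check condition (1); condition (2) is checked in deterministic polynomial time. Finally, condition (3) is $\CONP$ via the same $\NP$ oracle: query whether there exist $T\in\mathcal{I}_{j,i}$ and a witnessed value $w=v(T)$ with $w\geq y(T\cap S)+x(T\setminus S)$, and accept if the answer is no. The whole check is in $\NP^{\NP}=\SigmaP{2}$, so its negation combined with the imputation check yields the promised $\PiP{2}$ bound.
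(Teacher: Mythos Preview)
Your argument is correct, and it reaches the $\PiP{2}$ bound by a genuinely different route from the paper.

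The paper never guesses a witness vector $y$ at all. After guessing $i,j,S$ (and $v(S)$ with its $\NPSV$ certificate) in the outer $\NP$ stage, it hands the question ``is $W(i,j,S)=\emptyset$?'' to an $\NP$ oracle. The oracle certifies \emph{emptiness} by guessing at most $|S|+1$ coalitions $T_1,\dots,T_m\in\mathcal{I}_{j,i}$ (together with their worths and certificates) and then checking in deterministic polynomial time that the small subsystem consisting of (1), (2), and the type-(3) inequalities for $T_1,\dots,T_m$ has no solution. Correctness comes from Helly's theorem (Proposition~\ref{theo:hellyTheo}) applied to the convex regions defining $W(i,j,S)$ in $\Re^S$: if the full intersection is empty, already $|S|+1$ of the regions have empty intersection. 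The outer machine accepts when the oracle \emph{rejects}, i.e., when $W(i,j,S)\neq\emptyset$.

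You instead certify \emph{non-emptiness}: you show that whenever a justified objection through $S$ exists, some $y^*\in W(i,j,S)$ of polynomial bit-size exists, obtained as (the $y$-part of) a vertex of the slack LP. Then the outer stage guesses $y^*$ directly, and the oracle is used only to refute the universal condition~(3). The LP vertex/Cramer argument replaces the appeal to Helly. Both approaches are standard; yours is arguably more elementary in that it avoids Helly, at the cost of the extra LP boundedness/pointedness discussion (which you handle correctly: the equality $y(S)=v(S)$ together with the constraints $y_k\ge x_k+\epsilon$ bounds $\epsilon$ from above and forbids lines, so an optimal vertex exists). A minor presentational point: your check of condition~(1) via ``one $\NP$ oracle call'' is fine, but it is cleaner to absorb the guess of $v(S)$ and its certificate into the outer $\NP$ stage, as the paper does, so that the single oracle call is reserved for condition~(3).
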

\begin{proof}
Let $\game\in\C(\mathcal{R})$ be a coalitional game, and $x$ be a payoff vector. W.l.o.g., we assume that $x$ is an imputation, since checking this condition is feasible in \PiP2 (see
the proof of Theorem~\ref{theo:membershipKernelCheckingTU}).

Consider the complementary problem of deciding whether $x\not\in \bargset{\game}$. We are going to show that
this is feasible in \SigmaP2. In the light of Lemma~\ref{lemma:condizioneOutsideBargSetTU}, $x\not\in\bargset{\game}$ if and only if there
exist two players $i$ and $j$, and a coalition $S\in \mathcal{I}_{i,j}$ such that the set
\begin{align*}
W(i,j,S)=\{y\in \Re^{S} \mid &y(S)=v^\mathcal{R}(\xi^\mathcal{R}(\game),S) \wedge\\
                             &y_k>x_k,\ \forall k\in S \wedge\\
                             &v^\mathcal{R}(\xi^\mathcal{R}(\game),T)< y(T\cap S)+x(T\setminus S),\ \forall T\in\mathcal{I}_{j,i}\}
\end{align*}
is not empty.
Thus, the problem can be solved by guessing in $\NP$ the players $i$ and $j$ and the set
$S\in\mathcal{I}_{i,j}$, and then by asking an \NP{} oracle to decide whether $W(i,j,S)$ is empty or not. In the latter case, we conclude that $x\not\in\bargset{\game}$.

In order to show that this set is empty, the \NP{} oracle may proceed as follows:
it guesses at most $m\leq |S|+1$ coalitions $T_1,\ldots T_m\in \mathcal{I}_{j,i}$,
 plus their worth values $w_1,\ldots,w_m$, and the associated certificates to check that actually $\tuple{(\xi^\mathcal{R}(\game),T_\ell),w_\ell}\in graph( v^\mathcal{R})$, $\forall 1\leq\ell\leq m$. Moreover, it guesses the worth value $v^\mathcal{R}(\xi^\mathcal{R}(\game),S)$ and a suitable certificate $c_S$ for it. Then, by exploiting these certificates, the oracle checks in (deterministic) polynomial time that all guessed worth values are correct, as well as that the coalitions $T_1,\ldots , T_m$ belongs to $\mathcal{I}_{j,i}$ and that $S$ belongs to $\mathcal{I}_{i,j}$.
   Moreover, it computes the inequalities defined by the first two rows in the above expression, plus the inequalities associated with the guessed coalitions
    $T_1,\ldots ,T_m\in \mathcal{I}_{j,i}$.
   Finally, it checks in (deterministic) polynomial time whether this sub-system of at most
    $2(|S|+1)\leq 2n$ inequalities has some solution or not. In the latter case the oracle accepts, that is, it answers that $W(i,j,S)=\emptyset$.

To see that this procedure is correct, note that $W(i,j,S)$ is defined by the intersection of a finite number of convex regions of $\Re^{S}$ and hence, by Proposition~\ref{theo:hellyTheo} (Helly's theorem), if $W(i,j,S)$ is empty then there exist a set $\mathcal{E}$ including (at most) $|S|+1$ of its defining convex regions (here, inequalities) witnessing its emptiness. Therefore, if this is the case, there exists some accepting computation of the oracle where the inequalities associated with the $m\leq |S|+1$ guessed coalitions plus the $|S|+1$ inequalities associated with the first two rows of the above expression includes $\mathcal{E}$, and thus the considered sub-system of inequalities has no solutions and it is a (succinct) witness that $W(i,j,S)=\emptyset$.
\end{proof}

From this result and Corollary~\ref{cor:hardness-bs}, we immediately get the following completeness result.

\begin{corol}
Let $\R$ be any non-deterministic polynomial\nbdash time compact representation such that $\graph\precsim_e \R$ (e.g., $\R=\marginal$). On the class $\C(\R)$, \BargainingSetMembership\ is \PiPc2.
\end{corol}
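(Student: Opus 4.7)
The plan is to obtain $\PiPc{2}$ by separately combining the matching hardness and membership results that have already been established earlier in this section. The claim is just the conjunction of the two directions, so no new machinery is needed; the work is purely in observing that both directions transfer to the hypothesized representation $\R$.

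First, I would invoke Theorem~\ref{theo:membershipBargainingSetCheckingTU} for the membership in $\PiP{2}$. That theorem is already stated for any non-deterministic polynomial-time compact representation $\R$, so \BargainingSetMembership\ on $\C(\R)$ is in $\PiP{2}$ without any further argument. No reduction or assumption about expressiveness is needed for this direction, which is convenient since the membership proof worked entirely through Lemma~\ref{lemma:condizioneOutsideBargSetTU} (the Maschler-style characterization) together with Helly's theorem to bound the size of the witness sub-system.

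Second, I would appeal to Corollary~\ref{cor:hardness-bs} for the $\PiP{2}$-hardness. That corollary precisely covers all representations $\R$ with $\graph \precsim_e \R$, which is exactly the hypothesis of the present statement; in particular $\R=\marginal$ qualifies, by the already-noted fact that $\graph\precsim_e\marginal$. Recall that the hardness in that corollary is inherited from Theorem~\ref{theo:hardnessBargSetCheckingTU_Graph} (graph games) via composition of the $\FP$ translation witnessing $\graph\precsim_e \R$ with the reduction from $\mathbf{NQBF}_{2,\forall}$-validity, as in the proof of Proposition~\ref{prop:hardness-core}.

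Putting the two parts together yields the $\PiP{2}$-completeness statement. There is essentially no obstacle here, since both directions are already in place; the only thing worth emphasizing is that membership works uniformly for any $\NP$-representation (no expressiveness assumption is required), whereas hardness needs $\graph\precsim_e \R$ to lift the graph-game lower bound. The corollary then follows at once, and a one-sentence proof referring to Theorem~\ref{theo:membershipBargainingSetCheckingTU} and Corollary~\ref{cor:hardness-bs} suffices.
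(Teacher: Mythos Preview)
Your proposal is correct and matches the paper's approach exactly: the paper states the corollary as an immediate consequence of Theorem~\ref{theo:membershipBargainingSetCheckingTU} (membership) and Corollary~\ref{cor:hardness-bs} (hardness), which is precisely what you do.
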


\section{Conclusions}\label{sec:conclusion}

In this paper, we have provided a complete picture of the complexity issues arising with the notions of core,
kernel, and bargaining set on compactly specified coalitional games. On the one hand, we have exhibited membership results
that apply to any arbitrary \NP{} representation scheme for coalitional games. On the other hand, the
corresponding hardness results have been proven on the setting of graph games (which is a very simple \PP{}
representation scheme); hence, these hardness results apply to marginal contribution networks and to any
representation scheme that is at least as expressive as graph games. Our results confirmed two conjectures by \citet{Deng_Papadimitriou:ComplexityCooperativeGameSolutionConcepts}, and positively answer an
open question posed by \citet{Ieong_Shoham:MarginalContributionNets}.

Since the three analyzed solutions concepts turned out to be computationally intractable in general, it is worthwhile looking for natural and expressive enough classes of games where these solutions concepts  can be efficiently computed. This question has received only partial answers so far. It has been shown, by \citet{Ieong_Shoham:MarginalContributionNets}, that deciding core non-emptiness and deciding
whether a payoff vector belongs to the core are problems feasible in polynomial time on classes of marginal
contribution networks having bounded treewidth. In this paper, we have further explored this avenue of research,
by showing the tractability of the kernel over graph games having bounded treewidth. However, the tractability
frontier for such solution concepts is far from being completely charted. In particular, it remains an open problem to establish whether the
kernel remains tractable on classes of marginal contribution networks having bounded treewidth, and whether the bounded
treewidth is also the key for establishing tractability results when the bargaining set is considered.

\bibliographystyle{abbrvnat}
\bibliography{ref}
\end{document}